\newcommand{\AutoAdjust}[3]{\mathchoice{ \left #1 #2  \right #3}{#1 #2 #3}{#1 #2 #3}{#1 #2 #3} }
\newcommand{\Xcomment}[1]{{}}
\newcommand{\InBrackets}[1]{\AutoAdjust{[}{#1}{]}}% {\left[{#1}\right]}
\newcommand{\Ex}[2][]{\operatorname{\mathbf E}_{#1}\InBrackets{#2}}
\newcommand{\Prx}[2][]{\operatorname{\mathbf{Pr}}_{#1}\InBrackets{#2}}
\newcommand{\dd}{\mathrm{d}}  % for integrals
\newcommand{\eps}{\epsilon}
\newcommand{\E}{\mathbf{E}}
\newcommand{\noaccents}[1]{#1}
\newcommand{\newagentvar}[3][\noaccents]{%
\expandafter\newcommand\expandafter{\csname #2\endcsname}{#1{#3}}%
\expandafter\newcommand\expandafter{\csname #2s\endcsname}{#1{\boldsymbol{#3}}}%
\expandafter\newcommand\expandafter{\csname #2smi\endcsname}[1][i]{#1{\boldsymbol{#3}}_{-##1}}%
\expandafter\newcommand\expandafter{\csname #2i\endcsname}[1][i]{#1{#3}_{##1}}%
\expandafter\newcommand\expandafter{\csname #2ith\endcsname}[1][i]{#1{#3}_{(##1)}}%
}
\newcommand{\newvecagentvar}[3][\noaccents]{%
\expandafter\newcommand\expandafter{\csname #2\endcsname}{#1{\boldsymbol{#3}}}%
\expandafter\newcommand\expandafter{\csname #2s\endcsname}{#1{\boldsymbol{#3}}}%
\expandafter\newcommand\expandafter{\csname #2smi\endcsname}[1][i]{#1{\boldsymbol{#3}}_{-##1}}%
\expandafter\newcommand\expandafter{\csname #2i\endcsname}[1][i]{#1{\boldsymbol{#3}}_{##1}}%
\expandafter\newcommand\expandafter{\csname #2ith\endcsname}[1][i]{#1{#3}_{(##1)}}%
}
\newcommand{\ps}{\mathbf{p}}
\newcommand{\avector}[2]{(#1_1,#1_2,\ldots,#1_{#2})}
\DeclareMathOperator{\Rev}{Rev}
\DeclareMathOperator{\CS}{CS}
\DeclareMathOperator{\SW}{SW}
\newcommand{\D}{{\mathcal{D}}}
\newcommand{\DBB}{{D_{\text{BB}}}}
\newcommand{\DNBB}{{D_{\text{NBB}}}}
\newcommand{\pder}[2][]{\frac{\partial#1}{\partial#2}} % derivative convenience
 \newcommand{\note}[1]{}
 \newcommand{\hufu}[1]{}
\newcommand{\Bigcomment}[1]{}
\newtheorem{lemma}{Lemma}
\newtheorem{proposition}{Proposition}
\newtheorem{corollary}{Corollary}
\newtheorem{definition}{Definition}
\newtheorem*{theorem*}{Theorem}
\newtheorem*{lemma*}{Lemma}
\newtheorem*{proposition*}{Proposition}
\newtheorem*{corollary*}{Corollary}
\newtheorem{definition*}{Definition}
\begin{document}

%%
%% The "title" command has an optional parameter,
%% allowing the author to define a "short title" to be used in page headers.
\title{Price Stability and Improved Buyer Utility with Presentation Design}
\subtitle{A Theoretical Study of the Amazon Buy Box}

%%
%% The "author" command and its associated commands are used to define
%% the authors and their affiliations.
%% Of note is the shared affiliation of the first two authors, and the
%% "authornote" and "authornotemark" commands
%% used to denote shared contribution to the research.
\author{Ophir Friedler}
% \authornote{}
\email{ophirfriedler@gmail.com}
\orcid{0000-0002-7651-8223}
\affiliation{
\institution{Outbrain}
\city{Netanya}
%  \state{State placeholder}
\country{Israel}
}
\author{Hu Fu}
\email{fuhu@mail.shufe.edu.cn}
\orcid{0009-0005-4217-4329}
\affiliation{
\institution{Shanghai University of Finance and Economics} 
\department{Key Laboratory of Interdisciplinary Research of Computation and Economics, Ministry of Education}
\city{Shanghai}
\country{China}
}
\author{Anna Karlin}
\email{karlin@cs.washington.edu}
\orcid{0009-0001-9091-2702}
\affiliation{
\institution{University of Washington}
\department{Allen School of Computer Science \& Engineering}
\city{Seattle}
\state{WA}
\country{USA}
}
\author{Ariana Tang}
\email{ariana_tang@stu.sufe.edu.cn}
\orcid{0009-0009-3567-7162}
\affiliation{
\institution{Shanghai University of Finance and Economics}
\city{Shanghai}
\country{China}
}

% \authornotemark[1]
% \email{ariana_tang@outlook.com}
% \affiliation{%
%}

%%
%% By default, the full list of authors will be used in the page
%% headers. Often, this list is too long, and will overlap
%% other information printed in the page headers. This command allows
%% the author to define a more concise list
%% of authors' names for this purpose.
\renewcommand{\shortauthors}{Ophir Friedler, Hu Fu, Anna Karlin, and Ariana Tang}

%%
%% The abstract is a short summary of the work to be presented in the
%% article.
\begin{abstract}
  Platforms design the form of presentation by which sellers are shown to the buyers.  This design not only shapes the buyers' experience but also leads to different market equilibria or dynamics.  One component in this design  is through the platform's mediation of the search frictions experienced by the buyers for different sellers.  We take a model of monopolistic competition and show that, on one hand, when all sellers have the same inspection costs, the market sees no stable price since the sellers always have incentives to undercut each other, and, on the other hand, the platform may stabilize the price by giving prominence to one seller chosen by a carefully designed mechanism.  This calls to mind Amazon's Buy Box.  We study natural mechanisms for choosing the prominent seller, characterize the range of equilibrium prices implementable by them, and find that in certain scenarios the buyers' surplus improves as the search friction increases.
\end{abstract}

%%
%% The code below is generated by the tool at http://dl.acm.org/ccs.cfm.
%% Please copy and paste the code instead of the example below.
%%
\begin{CCSXML}
<ccs2012>
   <concept>
       <concept_id>10003752.10010070.10010099.10010106</concept_id>
       <concept_desc>Theory of computation~Market equilibria</concept_desc>
       <concept_significance>500</concept_significance>
       </concept>
 </ccs2012>
\end{CCSXML}

\ccsdesc[500]{Theory of computation~Market equilibria}

%%
%% Keywords. The author(s) should pick words that accurately describe
%% the work being presented. Separate the keywords with commas.
\keywords{Search friction, platform market design, Pandora box problem}
%% A "teaser" image appears between the author and affiliation
%% information and the body of the document, and typically spans the
%% page.
% \begin{teaserfigure}
%   \includegraphics[width=\textwidth]{sampleteaser}
%   \caption{Seattle Mariners at Spring Training, 2010.}
%   \Description{Enjoying the baseball game from the third-base
%   seats. Ichiro Suzuki preparing to bat.}
%   \label{fig:teaser}
% \end{teaserfigure}

\received{08 Oct 2024}
\received[revised]{15 Oct 2024}
\received[accepted]{20 Jan 2025}

%%
%% This command processes the author and affiliation and title
%% information and builds the first part of the formatted document.
\maketitle

\section{Introduction}
Platforms that enable transactions between buyers and providers have become major venues of the e-commerce. 
As a few salient examples, Amazon and eBay allow buyers to purchase from a vast body of sellers, and Airbnb connects travelers and hosts.  
An important activity on these websites is for one side of the market to search on the other side for a service or good.  
Search is sequential and time-consuming, and usually cannot be exhaustive.  
The way in which the platform \emph{presents} products and services to be searched by the buyers, therefore, crucially affects the market dynamics and outcomes.  
In most platforms, the providers/sellers set their own prices --- third-party sellers on Amazon, for example, set up their own prices, and so do hosts on Airbnb.
In wielding its power to mediate the interface, the platform must take into consideration both the sellers' pricing strategies and the buyers' search policies.

Our starting point is a monopolistic competition model where symmetric but differentiated sellers post their prices visible to all; a buyer must incur an inspection cost to determine her value for any specific seller, and this cost is the same for each seller.\footnote{As we explain later, this is essentially the classical model by \citet{wolinsky86true}, except that in our setting the prices are visible prior to search.  We note that prices usually do not take much effort to see in most online platforms.}
We observe that such a market does not have a pure equilibrium, i.e., no prices are stable, because the sellers are incentivized to undercut each other to be searched first by the buyer.
In practice, price volatility caused by ``algorithmic pricing'' has been voiced by the media \citep{Taft14, brown2016amazon} and has been a concern of consumers.

In this work, we analyze a particularly simple scheme of presentation, wherein \emph{one} seller is made prominent, with drastically reduced inspection cost, while all the other sellers are treated equally.
The platform's rule for choosing the prominent seller sets up a mechanism, in response to which the sellers strategically set their prices.
We show that this presentation scheme, coupled with appropriate mechanisms, can stabilize prices at pure equilibria.
We further analyze the range of prices achievable at equilibria, and derive implications on welfare and consumer surplus.
An interesting discovery, among others, is that an increased inspection cost often increases the consumers' surplus, because the higher search barrier can induce sellers to lower their prices in order to gain prominence.

Before detailing our contributions, we take a moment to introduce Amazon's so-called \emph{Buy Box}, a typical interface design that features one prominent seller.
The Buy Box is an important motivation for our work.
% ; in the rest of the paper we refer to the prominent seller as the one \emph{in the Buy Box}.

\paragraph{The Amazon Buy Box}

\begin{figure}
\includegraphics[width=0.475\textwidth]{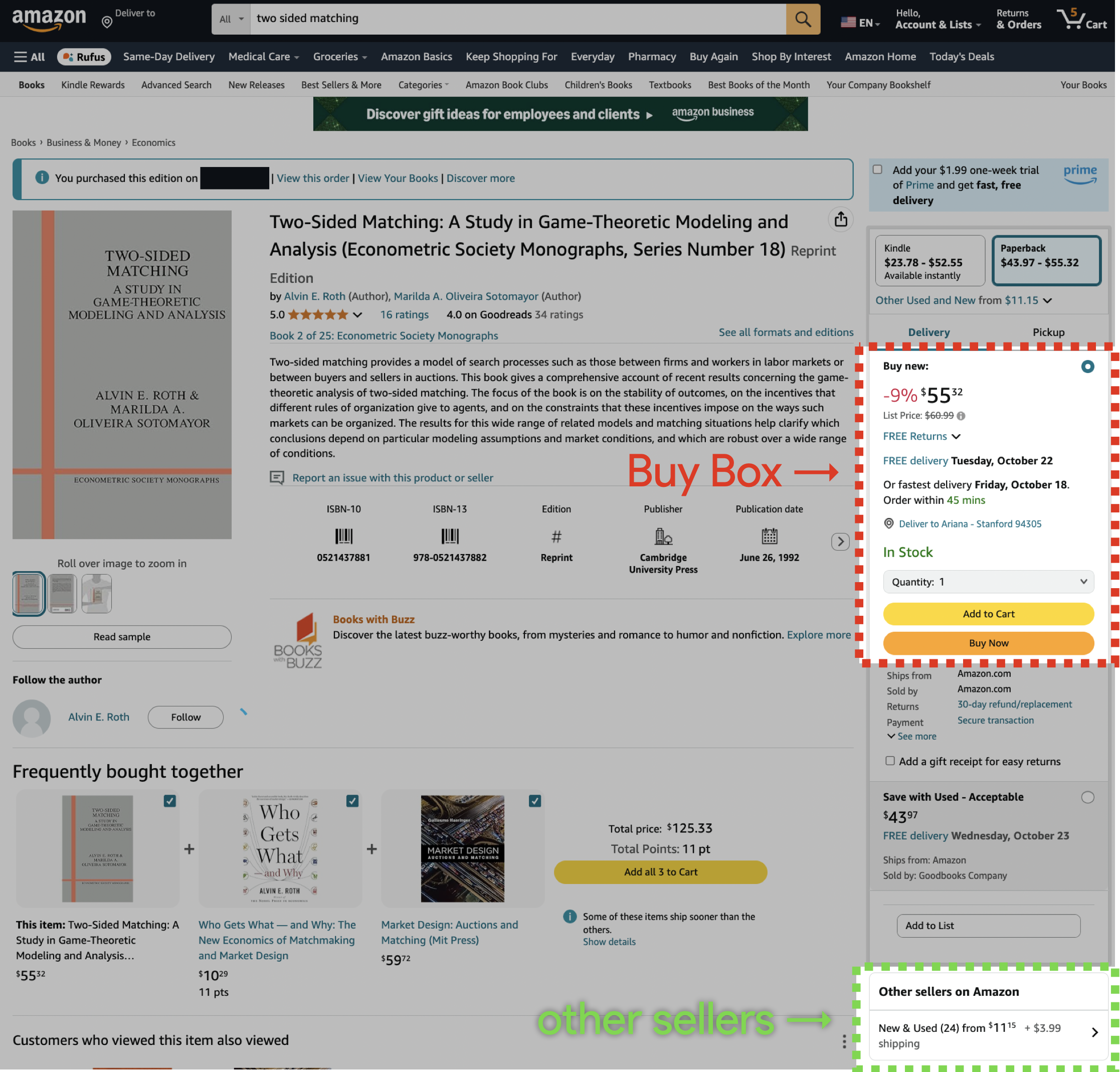}
\includegraphics[width=0.50\textwidth]{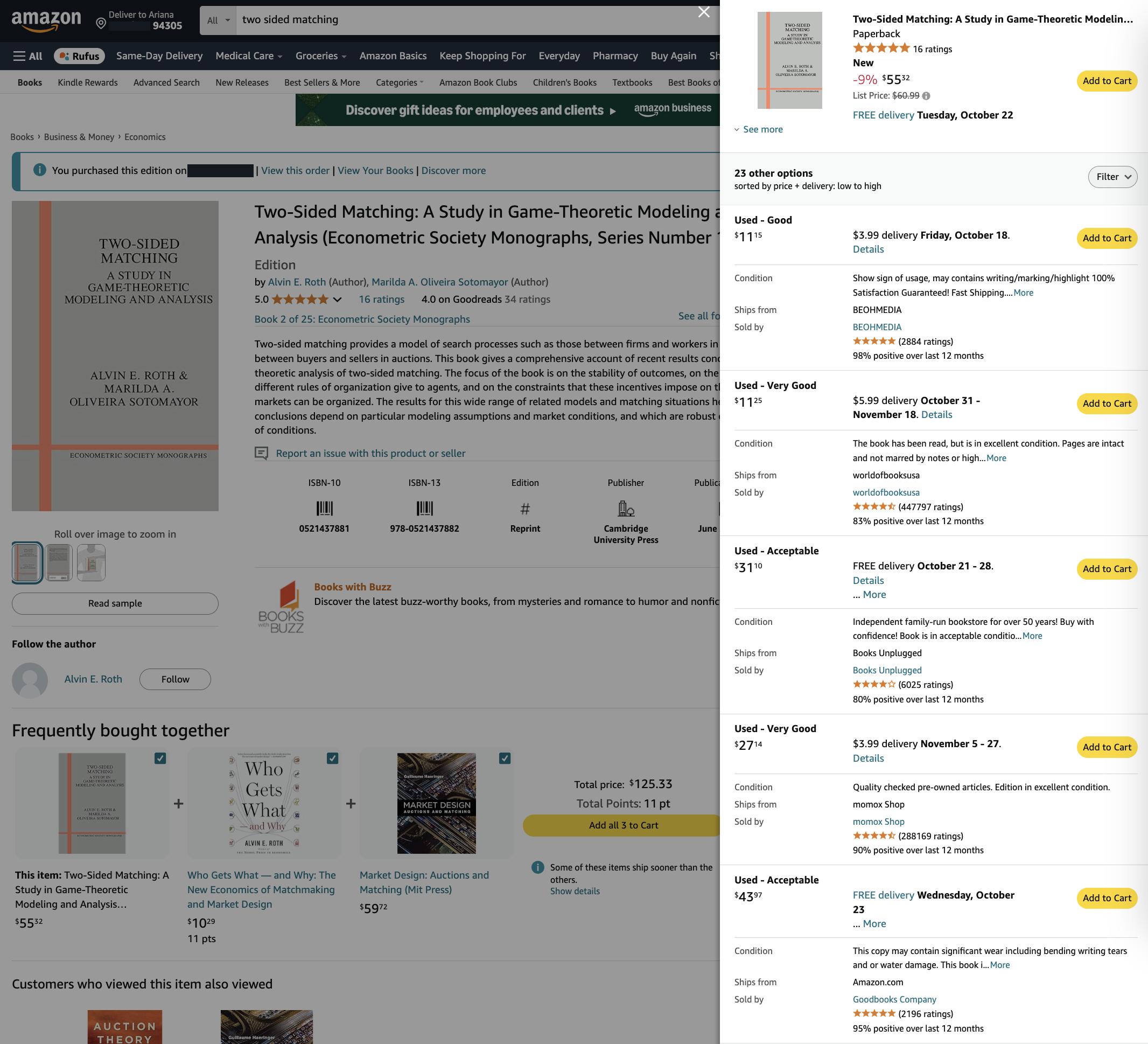}
% \caption{Illustration of Amazon's Buy Box. Top: The Buy Box is highlighted in \textcolor{red}{red}. To view other sellers not in the Buy Box, a buyer needs to scroll down and click on the link we highlight in \textcolor{green}{green}. Bottom: A view of sellers not in the Buy Box.}
\caption{Illustration of Amazon's Buy Box. Top: The Buy Box is highlighted in red. To view other sellers not in the Buy Box, a buyer needs to scroll down and click on the link we highlight in green. Bottom: A view of sellers not in the Buy Box.}
\label{fig:buy_box_demo}
\end{figure}

When a buyer reaches a page on Amazon for a particular product, a {\em Buy Box} is shown (red box in Figure~\ref{fig:buy_box_demo}). 
There are typically multiple sellers for the product, but the Buy Box shows only one seller, with that seller's price and the buttons to press (either ``Add to Cart" or ``Buy now with 1-click") if the buyer were ready to purchase from the Buy Box seller.

% \begin{figure}
%     \centering
%     \includegraphics[width=0.96\linewidth]{figures/buy_box_demo-1.png}
%     \caption{Buy Box interface demonstration. }
%     \label{fig:buy_box_demo-1}
%     \Description{description placeholder}
% \end{figure}
Below the Buy Box (and not always readily visible to a typical buyer), is an additional link (the green box in  Figure~\ref{fig:buy_box_demo}) to the other sellers of the same product (bottom of Figure~\ref{fig:buy_box_demo}), with some information about the prices they offer.
% \begin{figure}
%     \centering
%     \includegraphics[width=0.96\linewidth]{figures/buy_box_demo-2.png}
%     \caption{Outside the Buy Box.}
%     \label{fig:buy_box_demo-2}
%     \Description{description placeholder}
% \end{figure}
% See Figures~\ref{fig:BuyBox}
% and~\ref{fig:NonBuyBox}.

By presenting the sellers in this way, Amazon de facto reduces the inspection cost for the Buy Box seller, increasing the chance of a sale to this seller.
For the other sellers of this same product, Amazon has, for all practical purposes, increased the search cost. 

For the same product, the sellers' offers differ in aspects such as shipping time, location of seller, return and refund policy, product description (especially for used items), and whether they are ``fulfilled by Amazon''.
For any particular seller, a buyer can be certain that her value should be in a certain interval, but must spend effort to determine her exact value, according to her preference.

Placement in the Buy Box has a large impact on the demand for a seller's product, and Amazon's \emph{Buy Box mechanism} for determining the Buy Box seller has drawn attention from both the industry \citep{Bible} and academic works \citep{chen-algorithmic_pricing-2016}.  
The precise mechanism Amazon uses is proprietary information, and is generally considered to depend on many factors, including the sellers' prices, their past performances, etc..
Pricing on Amazon has long been known to involve software that helps to update prices dynamically; at least part of the drive for the incessant price changes is to compete for the Buy Box.

Wild price fluctuation on Amazon has been well documented~\citep{Taft14,brown2016amazon, hackernoon}; \citet{chen-algorithmic_pricing-2016} showed by simulation and by empirical observation that ``sellers that use algorithmic strategies to maintain low prices relative to their competitors are likely to gain a large advantage in the struggle to win the Buy Box''.
With fluctuating prices, not only should the sellers be constantly ready to change their prices, but buyers have also the added burden of uncertainty and should be strategic with the timing of purchase. 

In this work we build a theoretical framework and study a simplified scenario where the Buy Box mechanism only reacts to the prices set by the sellers.
We show that a mechanism which rewards the Buy Box to the most competitive price is intrinsically unstable in that no pure equilibrium exists.
We propose mechanisms admitting pure equilibria, study the range of equilibria prices, and derive welfare and surplus implications.

\paragraph{Our Contributions.}
We start with a model we call \emph{plain presentation} (without a prominent seller). 
 This is a monopolistic competition, where $m$ sellers offer differentiated goods/services at prices set by themselves.  
For each seller~$i$, a buyer has value $v_i$ drawn i.i.d.\@ from a known distribution~$F$.
We normalize values so that $F$ is supported on $[V, V + 1]$ for some $V > 0$; for many results we assume $V \geq 2$, and $c < \Ex{v_i} - V$.
For scenarios such as online retail platforms, these assumptions seem realistic.  (See further discussion in Section~\ref{sec:prelim}.)
The buyer knows~$F$ and can see the prices, but must incur an inspection cost~$c$ to be certain of any particular~$v_i$.
The buyer performs a sequential search, with free recall, in order to maximize her expected utility.
The optimal search policy is given by \citet{weitzman1979optimal}'s renowned index-based policy, which we present in detail in Section~\ref{sec:prelim}.
One feature of sequential search is that, whenever the buyer has found a value high enough, the search stops.
Sellers that are inspected earlier therefore have an advantage.

In this plain presentation model, the buyer's optimal policy goes over the sellers in the order of decreasing prices, which gives the sellers incentives to undercut each other.
We formally show that this market admits no pure symmetric equilibrium.\footnote{\citet{AZ11} mention in passing that ``this framework, where a consumer's match utility is independently distributed across firms, apparently does not lead to a tractable solution for how firms choose prices''.  Our proof may be seen as a formalization of this observation.}
Interestingly, for a market with two sellers, when $V \geq 2$, we show that even asymmetric pure equilibrium does not exist. 
For this result (Theorem~\ref{theorem:no_BB_no_eq}), we use a technique from \citet{armstrongAER}, which allows a nontrivial derivation of the sellers'  respective probabilities of selling had they been at an equilibrium;  a contradiction ensues from this calculation.

  When \emph{prominence} is given to a seller, e.g.\@ in the form of a Buy Box, we assume for simplicity that this seller has no inspection cost, i.e., a buyer always knows her value for this seller before any search, whereas every other seller still incurs an inspection cost~$c$.
As many platforms claim to aim for the lowest, most competitive prices, it is tempting to use a mechanism to induce low prices, e.g.\@ by giving prominence to only the seller with the lowest price.
It is not difficult to see that such an approach gives even more incentive for the sellers to undercut each other, and non-existence of equilibrium from the plain presentation model persists.

We therefore explore mechanisms that admit pure equilibria.
The first mechanism we consider is a \emph{dictator} --- the platform stipulates a price~$t$, and prominence is awarded to only sellers that post price~$t$; if there are multiple such sellers, pick one uniformly at random.
The mechanism is somewhat unnatural, but we show that it has the advantage that it encompasses all equilibria, in the sense that, if any natural mechanism (Definition~\ref{def:standard}) admits a pure symmetric equilibrium, the dictator mechanism admits the same equilibrium (by stipulating that price).
In Section~\ref{sec:implementable}, we completely characterize the range of equilibria realizable in the dictator mechanism (and hence all possible symmetric equilibria in natural mechanisms).  
Our main finding is that the prices implementable at equilibrium always constitute an interval.
The higher the inspection cost~$c$, the more likely a pure equilibrium exists and, when it does, the lower end of the interval is smaller, that is, the lower the implementable lowest price at equilibrium.
For the case of two sellers, we show more crisply that, as $c$ increases, the set of implementable prices grows inclusion-wise.
These findings suggest that, if a platform uses prominence solely to drive down prices, it is effective to increase the inspection cost for non-prominent sellers.

Towards more practical mechanisms, we consider \emph{threshold mechanisms}, which stipulate a price~$t$ and award prominence to sellers that post prices below~$t$; again, if there are multiple such sellers, pick one uniformly at random.
The threshold~$t$ acts similarly to a reserve price in auctions, but  the mechanism does not discriminate among sellers setting prices below~$t$.
This non-discrimination removes sellers' incentives to undercut each other, and makes it possible for pure equilibria to sustain.
If the platform uses a threshold mechanism and aims to drive down prices, it may lower the threshold~$t$ (to the extent that equilibrium still exists), rather than directly encourage competition among the sellers.
We again characterize the range of prices implementable at equilibria in threshold mechanisms (Theorem~\ref{thm:threshold}).
It is perhaps not surprising that there are settings where the dictator mechanism implements a larger range of prices than the threshold mechanism (Proposition~\ref{prop:dict-threshold-gap});
what we find interesting is that, whenever the threshold mechanism admits an equilibrium, the lowest price it is capable of maintaining at equilibrium is the same as that of the dictator mechanism (Corollary~\ref{cor:threshold}).
This suggests that, if the platform aims to drive down prices, the more practical-looking threshold mechanism is just as powerful as the stringent dictator mechanism, as long as the inspection cost is calibrated to a level that allows equilibria to exist.

Finally, we discuss welfare and consumer surplus at equilibria under the prominence mechanisms.
Since the inspection cost is a ``burned'' effort, and prices paid by the buyer and received by the sellers cancel out, the social welfare decreases as $c$ increases (Theorem~\ref{thm:welfare}).
The case of consumer surplus is more complex and interesting.
As $c$ increases, the lowest possible price implementable at equilibrium by a prominence mechanism decreases, but it is also more costly for the consumer to search.
The two factors work in opposite directions for the consumer surplus.
For many value distributions we experiment with, the consumer surplus largely increases as $c$ increases, but it is not necessarily the case that the surplus attains its maximum when $c$ is as large as possible.
In Theorem~\ref{thm:surplus-nonmonotone}, we derive a sufficient condition on~$F$ under which the consumer surplus is maximized at an inspection cost not at its maximum, if the platform implements the lowest equilibrium price.
We see this result as a proof of concept, illustrating that the relationship between the consumer surplus and the inspection cost is a complicated one.
% showing that increased inspection cost may enhance consumer's surplus due to intensified sellers' competition, even though at first glance it costs buyers more effort to acquire information.

\subsection*{Related Works}
The two works most closely related to the current work are \citet{AVZ09} and \citet{AZ11}, as both explicitly considered monopolistic competitions with search frictions and with prominent sellers, but both have key differences from our work.
\citet{AVZ09} consider a market where the a seller's price is invisible to the buyer unless the seller is inspected.
Price changes therefore do not directly affect the buyer's search order, which allows pure equilibria to exist.
We observe that in most online platforms the buyer sees sellers' prices quite easily, and take this as our starting point.
\citet{AZ11} study markets with visible prices, but circumvent the nonexistence of equilibria by considering negatively correlated values, which are stylized in a Hotelling model.  
For online shopping, we consider such negative correlation less well motivated, and stick to the classical i.i.d.\@ setting.
Moreover, we emphasize the mechanism design aspect in assigning prominence, in addition to static equilibrium analysis.
For other works on markets with search friction, we refer the reader to \citet{armstrong17survey} for a beautiful survey.

A small but growing number of works explicitly model buyers' search behavior to study platform presentations.
\citet{Chu_MS2020_PositionRankingAuction}, e.g., consider the ranking of heterogeneous products for a buyer whose search order is determined by this ranking, and they consider a multiobjective optimization that includes surplus and sales revenue.
\citet{Golrezaei_2022_MS_product_ranking} propose a two-stage search policy motivated by empirical evidence, and study product ranking in response to such search. % in order to maximize welfare and purchase probability.
\citet{Branco_MS2012_Search4ProductInfo} propose a continuous inspection procedure for a buyer to evaluate a single seller, and study the seller's optimal pricing in response.
Interestingly, in this setting very different from ours, it is also observed that increased search cost can sometimes benefit the buyer in equilibrium, due to the lower price set by the seller.

% The way we normalize the support of~$F$ to $[V, V + 1]$ and a proof in Section~\ref{sec:plain} both follow \citet{armstrongAER}, although the problem they study is on consumers' private signaling, quite distant from our setting.

% Beyond theoretical studies of search and presentation mechanisms, recent legal and regulatory research has also examined Amazon's Buy Box. For example, 
Amazon's Buy Box has drawn attention from researchers from many angles and disciplines.
Noticeably, legal scholars have studied consequences of algorithmic pricing 
\citep{MacKayWeinstein_DynamicPricingAlgorithms, Cavallo_OnlineCompetitionPricing} and anti-trust concerns of the Buy Box mechanism  \citep{Raval_SelfPreferencingBuyBox, DeMelis_AmazonAntitrustBuyBox}.
Our work takes the perspective of theory in market design, and stands at a distance from these.

% explore how algorithmic pricing affects competition and price adjustments across platforms like Amazon;  study Amazon’s Buy Box mechanism, with a focus on self-preferencing practices and potential antitrust concerns. 
% These works address broader regulatory challenges but provide complementary insights into platform strategies that impact both sellers and consumers.

\subsection*{Organization}
%The rest of the paper is organized as follows. 
In Section \ref{sec:prelim}, we present the monopolistic competition with search frictions, with a description of the buyer's optimal search procedure.
Section \ref{sec:plain} formally proves that the plain presentation sees no stable prices in equilibrium.
Section \ref{sec:buy-box} defines prominence mechanisms and presents the prices implementable at equilibria by the two mechanisms we study.
Section \ref{sec:welfare} analyzes welfare and consumer surplus under prominence mechanisms.
Most proofs are relegated to the appendix.

\section{Model and Preliminaries}
\label{sec:pandoras_box}
\label{sec:prelim}

Consider a unit-demand buyer (she) who looks to buy an item from one of $m$ sellers (them) on a platform.
% Due to horizontally differentiated services, 
The buyer has potentially different values for purchasing from different sellers;
her value $v_i$ for each seller~$i$ is a random variable drawn i.i.d.\@ from a distribution~$F$.
The prior~$F$ is public knowledge to both the buyer and all the sellers, but neither party knows the realization~$v_i$.
 We assume that $F(\cdot)$ is continuous and supported on the normalized range $[V, V + 1]$ for some $V > 0$, an assumption also made by  \citet{armstrongAER}.
 Also following~\citep{armstrongAER}, for much of the paper, we assume $V \geq 2$.\footnote{Since the support is normalized, this assumption generally says that the highest value the buyer may have should not be too much higher than the lowest value.  This can therefore be seen as a tail condition on the value distribution, which may call to mind other tail assumptions made in the mechanism design literature. }
 %, we assume that the support of $F$ be normalized in an interval $[V, V + 1]$ for some $V > 0$.  
Each seller $i$ sets their price~$p_i$,  visible to the buyer.

% Initially, the buyer does not know her exact values for the sellers; 
To learn her value~$v_i$ for seller~$i$, the buyer may pay an \emph{inspection cost}~$c_i$. 
The inspection takes place in a sequential manner.  Namely, at any time, the buyer may inspect the value of a seller of her choice (and incur a cost), or to buy from one of the inspected sellers and quit, or to quit without purchase.
The inspection costs are potentially controlled by the platform, and affect a buyer's behavior.
% The platform's presentation controls the inspection costs, which creates a game among all the three parties.  
% We detail the steps of the game below.
The three parties therefore participate in a game, whose order of actions we detail below.

\paragraph{Order of Actions.}
 The platform, the sellers and the buyer take actions in the following order.
\begin{itemize}
  \item \textbf{Stage 1}: The platform specifies a \textit{mechanism} $\mathcal M$ that maps seller prices, 
  $\mathbf p = (p_1, \ldots, p_m)$,
  to their inspection costs $c_1, \ldots, c_m$.
  \item \textbf{Stage 2}: The sellers, aware of $\mathcal M$ and knowing that the buyer will 
  %screen the sellers and make purchase using  
  perform her optimal search strategy (described below), choose their prices $p_1, \ldots, p_m$.
  \item \textbf{Stage 3}: $\mathcal M$ maps the prices to inspection costs $c_1, \ldots, c_m$.
  \item \textbf{Stage 4}: The buyer performs an optimal sequential search given the value prior $F$, the prices, and the inspection costs.
      % When the buyer enters the platform, she sees for free her exact realized value $v_k$ in the Buy Box. For the sellers not in the Buy–Box, she sees their prices $\mathbf p_{-k}$ but not their values. If the buyer wants to inspect the value for seller $i$, she need to pay a inspection cost of $c$.
\end{itemize}

% In this work we study a limited family of presentations, namely, those choosing one seller and setting their inspection cost to 0; we study mechanisms for such presentations and derive market implications.

\paragraph{Utilities and Objectives.} Each seller~$i$ aims to maximize their expected \emph{revenue}, which is their price $p_i$ times the probability the buyer buys from them.
The buyer aims to maximize her expected payoff, which is the value of the item she eventually buys minus its price, minus all the inspection costs she pays along the way.
Formally, for each $i \in [m]$, let $Z_i$ be the indicator random variable for the event that the buyer inspects seller~$i$, and $Y_i$ the indicator for the buyer purchasing from seller~$i$, then at price profile $\ps = (p_1, \ldots, p_m)$ and inspection costs $c_1, \ldots, c_m$, the buyer's expected utility is
    \begin{align*}
      \CS & := \E \left[\sum_{i \in [m]} Y_i(v_i - p_i)\right] - \E \left[\sum_{i \in [m]}c_i Z_i \right];
    \end{align*}
This is sometimes referred to as the \emph{consumer surplus}.
    The \emph{social welfare} is the sum of the sellers' revenues and the consumer surplus.  Formally,
    \begin{align*}
      \SW & := \Ex{\sum_i (v_i Y_i - c_i Z_i) }.
    \end{align*}

\paragraph{Optimal Sequential Search.}
\label{paragraph:optimal_sequential_search}
 The buyer's optimal sequence of actions is given by the optimal policy for the so-called \emph{Pandora Box} problem, first given by \citet{weitzman1979optimal}, who proved its optimality. 
We now describe this policy, known as the \emph{index policy}. % and refer the reader to \citet{kleinberg2016descending} for a proof of its correctness.
% , and to \citet{BC22survey} for a survey of recent algorithmic developments.

    \begin{enumerate}
    \label{pseudo_algo:Pandora_demo}
        \item For each seller $i$, compute an \textit{index} $\theta_i$, which is the unique solution to the equation:
        $$ \mathbb E_{v_i \sim F} \left[[v_i - p_i]^+ - \theta\right]^+ = c_i,
        $$
        where $[x]^+ := \max(x, 0)$. 
	Tag each seller~$i$ with the index $\theta_i$.
        
        % She initializes by assigning this index value on each seller.
        
        \item % Assume the buyer breaks ties in favor of the inspected sellers. 
	  Among all sellers whose tags are non-negative indices, inspect seller~$i$ with the highest index~$\theta_i$, see the value~$v_i$ and replace the tag by the \textit{utility} of buying from seller~$i$: $v_i - p_i$.
        
	\item  If at any point the largest tag on a seller is a utility (instead of an index), purchase from that seller and quit; otherwise go back to the previous step.
	  If all the sellers have been inspected, then 

	  % continue to inspect the remaining sellers until there is no non-negative index left, then she either
        \begin{itemize}
	  \item if any seller's price is below their value,  make a purchase from the seller that yields the highest utility;
	  \item otherwise leave without any purchase.
        \end{itemize}
    \end{enumerate}

\begin{lemma}[Kleinberg et al.~\cite{kleinberg2016descending}]
Following the index policy, the buyer ends up buying from the seller that maximizes $\kappa_i := \min (v_i - p_i, \theta_i)$, if $\max_{i} \kappa_i \geq 0$. 
Her expected utility is $\Ex{\max_i [\kappa_i]^+}$.
\end{lemma}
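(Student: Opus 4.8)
The plan is to re-derive this characterization (due to Kleinberg, Waggoner and Weyl) directly from the index policy, in two steps: first pin down, on every realization, which seller the policy buys from, and then compute the expected payoff by an amortization argument. Throughout write $r_i := [v_i-p_i]^+$ for the ``effective reward'' of seller~$i$, so that the index equation reads $\Ex[v_i\sim F]{(r_i-\theta_i)^+}=c_i$ and $[\kappa_i]^+=\min(r_i,[\theta_i]^+)$. Recall that the index policy inspects sellers in order of decreasing index, restricted to sellers with $\theta_i\ge 0$, and halts as soon as the best reward revealed so far is at least the next index; hence the inspected set $S$ is always a prefix of this order and every $i\in S$ has $\theta_i\ge 0$. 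On any realization the policy's payoff is $M_S-\sum_{i\in S}c_i$, where $M_S:=[\max_{i\in S}(v_i-p_i)]^+=\max_{i\in S}r_i$ (with $\max\emptyset=0$), since the policy buys from the inspected seller of highest utility exactly when that utility is positive.

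The engine of the whole argument is the following structural fact: \emph{if the index policy inspects seller~$i$ and finds $v_i-p_i>\theta_i$, it immediately stops and buys from~$i$.} Indeed, right after inspecting~$i$, seller~$i$ carries the tag $v_i-p_i$; every earlier-inspected seller carries a utility tag at most the best-so-far reward, which is strictly below $\theta_i$ (that is why $i$ got inspected); and every not-yet-inspected seller carries an index tag at most $\theta_i$ (decreasing order). So seller~$i$'s tag is the strict maximum and is a utility, and the policy halts at~$i$. Contrapositively, if $i\in S$ but the policy buys from some $J\ne i$, then $v_i-p_i\le\theta_i$, whence (using $\theta_i\ge 0$) $r_i\le\theta_i$, i.e. $(r_i-\theta_i)^+=0$. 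From this we get the first claim: let $J$ be the seller purchased from (which exists when $\max_i\kappa_i>0$; the complementary event yields utility~$0$, matching $\max_i[\kappa_i]^+=0$). For $i\in S$, $i\ne J$, we have $\kappa_i\le v_i-p_i\le v_J-p_J$ (as $J$ maximizes utility among inspected sellers) and also $\kappa_i\le\theta_J$ (if $i$ precedes $J$ then $v_i-p_i$ is below the best reward seen before $J$ was opened, itself below $\theta_J$; if $i$ follows $J$ then $\theta_i\le\theta_J$), so $\kappa_i\le\kappa_J$. For $i\notin S$, at the halting moment $v_J-p_J$ was at least the largest remaining index, hence $\ge\theta_i$, and $\theta_J\ge\theta_i$ since $S$ is a prefix; so $\kappa_J=\min(v_J-p_J,\theta_J)\ge\theta_i\ge\kappa_i$. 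Thus $\kappa_J=\max_i\kappa_i$.

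For the expected utility, take expectations in $\CS=\Ex{M_S-\sum_{i\in S}c_i}$ and substitute $c_i=\Ex[v_i]{(r_i-\theta_i)^+}$ (valid whenever $\theta_i\ge 0$, the only sellers that can be inspected):
\[
  \CS=\Ex{M_S}-\sum_i\Prx{i\in S}\,\Ex[v_i]{(r_i-\theta_i)^+}=\Ex{M_S-\sum_i\mathbf 1[i\in S]\,(r_i-\theta_i)^+},
\]
where the second equality uses that $\{i\in S\}$ is determined by the rewards of sellers inspected before~$i$ and is therefore independent of $r_i$. It remains to show the bracketed quantity equals $\max_i[\kappa_i]^+$ on every realization. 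If $\max_i\kappa_i\ge 0$, then by the previous paragraph the policy buys from $J=\argmax_i\kappa_i$, so $M_S=r_J$, and by the structural fact $(r_i-\theta_i)^+=0$ for all inspected $i\ne J$; hence the bracket is $r_J-(r_J-\theta_J)^+=\min(r_J,\theta_J)=\kappa_J=\max_i[\kappa_i]^+$. If $\max_i\kappa_i<0$, then no inspected seller has positive utility, so $M_S=0$ and every $(r_i-\theta_i)^+=(0-\theta_i)^+=0$ on $S$, while every $\kappa_i<0$, so both sides vanish. This yields $\CS=\Ex{\max_i[\kappa_i]^+}$.

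The conceptually non-routine step is the structural fact that an inspected-but-not-chosen seller must have revealed reward below its own index; this is what makes the per-realization accounting collapse to $\min(r_J,\theta_J)$, and hence the whole identity work. The remaining obstacles are bookkeeping: justifying the independence $\mathbf 1[i\in S]\perp r_i$ cleanly (standard for adaptive search policies, since the decision to inspect~$i$ precedes revealing~$r_i$), and dealing with ties among indices and the knife-edge $\theta_i=0$ (handled by the arbitrary tie-breaking rule, and immaterial to the expectation since $F$ is continuous).
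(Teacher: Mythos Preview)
The paper does not supply a proof of this lemma; it is stated with attribution to \citet{kleinberg2016descending} and used as a black box. There is therefore nothing to compare against in the paper itself. Your proposal is a faithful and essentially correct re-derivation of the Kleinberg--Waggoner--Weyl amortization argument: the ``structural fact'' that an inspected-but-not-chosen seller must have revealed utility at most its own index is exactly the observation that drives their proof, and your per-realization accounting identity $M_S-\sum_{i\in S}(r_i-\theta_i)^+=\max_i[\kappa_i]^+$ is the heart of their lemma.

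One small gap to close: in the expected-utility paragraph you write ``if $\max_i\kappa_i\ge 0$, then by the previous paragraph the policy buys from $J=\argmax_i\kappa_i$,'' but the previous paragraph only established $\kappa_J=\max_i\kappa_i$ \emph{given} that a purchase~$J$ occurs. You should add the one-line converse: if $\max_i\kappa_i>0$, a purchase must occur, because the maximizing seller~$i^*$ has $\theta_{i^*}>0$ and $v_{i^*}-p_{i^*}>0$, so either it is eventually inspected (yielding a positive utility tag and hence a purchase) or the policy halted earlier with a utility already exceeding $\theta_{i^*}>0$. With that patch, and with the boundary event $\max_i\kappa_i=0$ absorbed by continuity of~$F$ as you already note, the argument is complete.
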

    % A moment of thought shows that, following This policy is also known as the \emph{Weitzman's algorithm},
    % Henceforth, without loss of generality, we assume that $\theta_i \ge 0, \forall i\in [m]$ for all price profile $\mathbf p$ discussed. 

    % To exclude the degenerate cases where the inspection cost is so high that a buyer never inspects any seller, we assume throughout the paper that when $p_i = 0$, $\theta_i > 0$.

% Without loss of generality assume $\theta_1 \ge \theta_2 \ge \ldots \ge \theta_m$, that for sellers $i < j$, $i$'s product is somehow less expensive or more attractive than $j$ and the buyer should prefer to inspect $i$ first. Since, sellers with negative indices will never be inspected anyway, let $\theta_m \ge 0$. Then, the buyer would implement the above algorithm by using indices $\theta_i$ as thresholds: assume she has inspected the first $k$ sellers, let $k^*= \max_{j\le k}(u_j - p_j)$ and denote her current best option as $u_{k^*} = v_{k^*} - p_{k^*}$. She would pay $c$ to further inspect the next seller $k + 1$ if and only if $v_{k^*} - p_{k^*} < \theta_{k+1}$. Otherwise, she either buys from seller $k^*$ if $u_{k^*\ge 0}$, or leaves without any purchase.

% The interface provided by the platform controls the timing, the cost of the inspection $c$, and the visibility of the prices set by the seller.

\paragraph{Plain Presentation Without Prominence.} 

When the platform presents the sellers ``plainly'', only their prices $\mathbf p:= \avector{p}{m}$ are shown to the buyer.  
Each seller's inspection cost is $c_i = c$, no matter what prices they set.

\paragraph{Presentation with Prominence.} 
In a presentation with prominence, one seller is made prominent, with inspection cost 0, and all the other sellers' inspection costs are set to some $c > 0$.
% The prominent seller is said to be \textit{in the Buy Box}. 
We assume all the sellers' prices are visible to the buyer, but she only sees her value for the prominent seller unless she performs (costly) inspection.
% In effect, the buyer always sees the value of the seller in the Buy Box, before she performs any further inspection among the other sellers (if she chooses to).
The identity of the prominent seller is the result of the platform's mechanism, which we define formally in Section~\ref{sec:buy-box}.

    For a seller not prominent, we often use $\theta_0(c)$ to denote their Weizman index if they post price~$0$, i.e., $\theta_0(c)$ is the solution to the equation $\mathbb E_{v_i \sim F}[v_i - \theta]^+ = c$.  
    When the inspection cost $c$ is clear from the context, we omit it and write $\theta_0$.  

    \noindent \textbf{Assumptions of Non-degeneracy.}
    Throughout the paper we assume $\theta_0 > 0$: when this is not the case, a seller not prominent is never inspected even if they post price~$0$, and the setting degenerates to a monopoly of the prominent seller.
    For most results in the paper, we further assume $\theta_0 > V$.
    When this is not the case, the inspection cost~$c$ is larger than $\Ex{v_i} - V$, which is the uncertainty the buyer has about her value for a seller.
    For one thing, search frictions in online shopping are typically not this big.
    For another, when inspection costs more than the expected uncertainty a buyer has about the value, it is realistic that the buyer may make a purchase \emph{without inspection}, an option interesting by itself \citep{GMS08, doval18, BK19, FLL23, BC23} but not captured by Weitzman's algorithm. 
    Our work focuses on the regime where inspection costs are small relative to the value uncertainty.
    Let $\bar c$ be the cost such that $\theta_0(\bar c) = V$.  
    Assuming $\theta_0 > V$ is equivalent to assuming $c < \bar c$.
    
    A seller with a negative index is never considered by the buyer.
    We therefore mostly restrict our attention to prices low enough to guarantee nonnegative indices.
    For such prices, the calculation of Weitzman indices is simplified by the following proposition:

    \begin{proposition}
    \label{prop:define_theta_0}
     If $\theta_i \geq 0$ under price~$p_i$, then $\theta_i = \theta_0 - p_i$.
    \end{proposition}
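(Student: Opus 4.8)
The plan is to massage the equation that defines $\theta_i$ into the equation that defines $\theta_0$ (up to the shift by $p_i$), and then conclude by uniqueness of the root. First I would write the two defining relations side by side: by the index policy $\theta_i$ is the unique solution of $\mathbb{E}_{v\sim F}\,[\,[v-p_i]^+-\theta\,]^+ = c$, while $\theta_0$ is the unique solution of $\mathbb{E}_{v\sim F}\,[v-\theta]^+ = c$.

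The one technical ingredient is the pointwise identity
\[
  [\,[x]^+-\theta\,]^+ \;=\; [x-\theta]^+ \qquad \text{for all } x\in\mathbb{R},\ \theta\ge 0 .
\]
This follows from a two-case check: if $x\ge 0$ then $[x]^+=x$ and the two sides coincide; if $x<0$ then $[x]^+=0$, so the left side is $[-\theta]^+=0$ (here we use $\theta\ge 0$) while the right side is $0$ because $x-\theta<0$.

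Next I would apply this identity pointwise with $x=v-p_i$ and $\theta=\theta_i$, which is legitimate precisely because of the hypothesis $\theta_i\ge 0$. Taking expectations over $v\sim F$, the equation defining $\theta_i$ becomes $\mathbb{E}_{v\sim F}\,[v-(p_i+\theta_i)]^+ = c$, so that $p_i+\theta_i$ solves the same equation $\mathbb{E}_{v\sim F}\,[v-\theta]^+=c$ that defines $\theta_0$. Since the map $\theta\mapsto \mathbb{E}_{v\sim F}\,[v-\theta]^+$ is continuous, non-increasing, and strictly decreasing wherever it is positive (its one-sided derivative is $-1$ for $\theta\le V$ and $-(1-F(\theta))$ for $\theta\in[V,V+1]$, using continuity of $F$), this equation has a unique solution. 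Therefore $p_i+\theta_i=\theta_0$, i.e.\ $\theta_i=\theta_0-p_i$, as claimed.

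I do not expect a genuine obstacle here. The only place that demands care is the sign bookkeeping in the pointwise identity, together with the observation that the assumption $\theta_i\ge 0$ is exactly what licenses collapsing the nested positive parts: without it, $[\,[x]^+-\theta\,]^+$ and $[x-\theta]^+$ truly disagree (e.g.\ for $\theta=-1$, $x=-2$ the former is $1$ and the latter is $0$).
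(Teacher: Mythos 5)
Your proof is correct and follows essentially the same route as the paper: collapse the nested positive parts via the identity $[\,[x]^+-\theta\,]^+=[x-\theta]^+$ for $\theta\ge 0$, take expectations, and conclude by uniqueness of the root of $\mathbb{E}[v-\theta]^+=c$. Your explicit justification of uniqueness via monotonicity is a fine addition, but the argument is otherwise the paper's own.
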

    \begin{proof}
        For $\theta_i \geq 0$, $\mathbb E[ [v_i - p_i]^+ - \theta_i]^+ = [v_i - p_i - \theta_i]^+$.
        By definition of~$\theta_i$, $\mathbb E[v_i - p_i - \theta_i]^+ = c$; comparing this with $\mathbb E[v_i - \theta_0]^+ = c$ gives the proposition.
    \end{proof}

\paragraph{Equilibria}
With or without prominence, the expected revenue of any seller~$i$ can be calculated from the price profile: let~$D_i(\mathbf p)$ be the probability with which the buyer buys from seller~$i$ if the price profile is $\mathbf p$, then the seller's revenue is $\Rev_i(\mathbf p) = p_i D_i(\mathbf p)$.  
A price profile $\mathbf{p}$ constitutes a \emph{pure Nash equilibrium} if, for any seller $i \in [m]$, % the following condition holds:
$$
\Rev_i(\mathbf{p}) \geq \Rev_i(p_i', \mathbf{p}_{-i}), \forall p_i'.
$$

\section{Non-Existence of Pure Equilibria in Plain Presentations}
\label{sec:plain}

We show that a plain presentation generally does not admit pure Nash equilibria.  
The argument is most intuitive for the non-existence of symmetric equilibria (Proposition~\ref{proposition_no_symmetric}): any symmetric positive price profile is not stable because the sellers have incentives to undercut each other.
It may come as a surprise that, under mild technical assumptions, even asymmetric equilibria do not exist for two sellers (Theorem~\ref{theorem:no_BB_no_eq}).
% This follows from a nontrivial argument similar to a proof by \citet{armstrongAER}.  
Both results contrast with monopolistic competitions \emph{without} inspection costs, where pure equilibria often exist \citep{Perloff1985_ProductDiffEq}.

 % We show that in a plain presentation without a Buy Box, there is \emph{no pure Nash equilibrium} in the sellers' pricing game. Due to the positive inspection cost, sellers have a perpetual incentive to undercut each other's prices to become the first in the buyer's inspection sequence. This incentive undermines the existence of symmetric equilibria, as any tie in prices (and thus in the buyer's indices for inspection) can be broken by a seller slightly lowering their price to gain precedence and increase revenue. We formalize this intuition in Proposition~\ref{proposition_no_symmetric}, which shows that at least one seller can benefit from unilaterally deviating when prices are tied. Then we extend this result to asymmetric equilibria. Specifically, in the case of two sellers, we prove in Theorem~\ref{theorem:no_BB_no_eq} that no pure-strategy equilibrium exists at all. 

% Consider a plain presentation without Buy Box, where sellers set prices and the buyer employs the Pandora's Box algorithm. A fundamental characteristic of the competition dynamic in this context is, sellers, by strategically adjusting their prices, can significantly manipulate demand so as to increase revenue, particularly when prices are tied:

\begin{restatable}{proposition}{propnotie}
\label{proposition_no_symmetric}
For any $p > 0$, $\ps = (p, \ldots, p)$ is not an equilibrium in a plain presentation.
When $\theta_0 > V$, there is no symmetric equilibrium in a plain presentation.
\end{restatable}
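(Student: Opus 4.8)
The plan is to exhibit profitable deviations, and everything hinges on one fact: because the inspection cost $c$ is positive we have $\theta_0 < V+1$, hence $t := F(\theta_0) < 1$ --- equivalently, the buyer's value for a seller priced at $0$ exceeds that seller's Weitzman index $\theta_0$ with positive probability $1-t$. Now fix $p>0$. If $p > \theta_0$, every index $\theta_0 - p$ is negative, the buyer inspects no one, and all revenues are $0$; a seller deviating to any price in $(0,\theta_0)$ then becomes the only inspectable seller and sells whenever the buyer's value for it exceeds its price, so $(p,\dots,p)$ is not an equilibrium. Assume henceforth $0 < p \le \theta_0$, so by Proposition~\ref{prop:define_theta_0} every seller's index at $(p,\dots,p)$ equals $\theta_0 - p \ge 0$. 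By the Kleinberg--Weitzman characterization the buyer purchases iff $\max_i \min(v_i-p,\theta_0-p)\ge 0$, i.e.\ iff $\max_i v_i \ge p$; so with $q := \Prx{\max_i v_i \ge p}$, symmetry gives $\Rev_i(p,\dots,p) = p\,q/m$ for each $i$.

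The core step: let seller~$1$ deviate to $p-\epsilon$ for small $\epsilon>0$. It then has the strictly largest index, so the buyer inspects it first, and she either (i) halts and buys from seller~$1$, which happens exactly when $v_1-(p-\epsilon)\ge\theta_0-p$, i.e.\ $v_1\ge\theta_0-\epsilon$ (probability $1-F(\theta_0-\epsilon)$), or (ii) continues through the remaining, tied sellers, halting at the first of them with value $\ge\theta_0$, and if none exists inspecting all of them and buying from the seller of largest value (seller~$1$'s value counted as $v_1+\epsilon$) provided that purchase is profitable. Tracking these cases, seller~$1$'s sale probability under the deviation tends, as $\epsilon\downarrow 0$, to $(1-t)+\tfrac1m\Prx{p\le\max_i v_i<\theta_0}$: the second term is seller~$1$'s unchanged $1/m$-share of the purchase probability on $\{\max_i v_i<\theta_0\}$, while the first term replaces the mere $\tfrac1m\Prx{\max_i v_i\ge\theta_0}=\tfrac1m(1-t^m)$-share that seller~$1$ enjoyed at the symmetric profile --- halting at seller~$1$ hands it the \emph{entire} event $\{v_1\ge\theta_0\}$, regardless of competitors' values. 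Since $t<1$ we have $1-t>\tfrac1m(1-t^m)$ (equivalently $t^m-mt+(m-1)>0$ on $(0,1)$), so the limit strictly exceeds $q/m$; hence $\Rev_1(p-\epsilon,p,\dots,p)\to p\bigl((1-t)+\tfrac1m\Prx{p\le\max_i v_i<\theta_0}\bigr) > p\,q/m = \Rev_1(p,\dots,p)$, and for small $\epsilon>0$ the undercut is strictly profitable. This proves the first claim.

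For the second claim, with $\theta_0>V$ the only symmetric profile left to rule out is $(0,\dots,0)$, whose revenue is $0$; since prices are nonnegative I would deviate \emph{upward}. If seller~$1$ moves to some $p'\in(0,\theta_0-V)$ --- a nonempty interval exactly because $\theta_0>V$ --- then with probability $F(\theta_0-p')^{m-1}>0$ all of $v_2,\dots,v_m$ fall below $\theta_0-p'$, so the buyer does not halt among sellers $2,\dots,m$ and proceeds to inspect seller~$1$; intersecting with $\{v_1\ge\theta_0\}$ (probability $1-t>0$) she then buys from seller~$1$. Thus $\Rev_1(p',0,\dots,0)>0=\Rev_1(0,\dots,0)$, so $(0,\dots,0)$ is not an equilibrium.

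The main obstacle is the bookkeeping in the core step: faithfully tracing the buyer's index policy after the perturbation, separating the ``halt-at-seller-$1$'' probability mass from the rest, and merging the subcases $p\le V$ and $V<p\le\theta_0$ together with the measure-zero tie subtleties (ties at the cap $\theta_0-p$, broken by the random inspection order at $(p,\dots,p)$, are broken in seller~$1$'s favor once it undercuts). The conceptual crux is seeing that an \emph{infinitesimal} undercut already suffices: because $\theta_0<V+1$, the Weitzman cap lies strictly inside the support $[V,V+1]$, so an arbitrarily small price cut still captures a block of probability bounded away from $0$ (for $c=0$, where $\theta_0=V+1$, this block has measure zero and symmetric equilibria can survive, as in classical Perloff--Salop-type models).
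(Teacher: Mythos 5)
Your proposal is correct and follows essentially the same route as the paper: an infinitesimal undercut hands the deviator the entire event $\{v_i \ge \theta_0\}$ instead of (roughly) a $1/m$ share of it, giving a jump in demand of at least $(1-F(\theta_0)) - \tfrac{1}{m}\bigl(1-F(\theta_0)^m\bigr) > 0$, and the zero-price profile is ruled out by an upward deviation to $p' \in (0,\theta_0 - V)$ with demand at least $(1-F(\theta_0))F^{m-1}(\theta_0 - p') > 0$ --- exactly the paper's two steps (your explicit treatment of $p>\theta_0$ is a welcome extra). The only slight difference is that you assume uniform tie-breaking to get $\Rev_i(p,\dots,p)=pq/m$ exactly, whereas the paper avoids this by picking, via pigeonhole, a seller who sells with probability at most $q/m$ under whatever tie-breaking rule is in force; your argument is patched verbatim by applying the deviation to that seller.
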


The proposition is intuitive: when every seller posts the same price~$p$, a seller may slightly cut down their price and make sure they are inspected first by the buyer, which produces a jump increase in their demand.
The proof lower bounds this increase, and argues that a small enough undercutting guarantees strict revenue increase.

% We remark that in this proof, we actually do not need the assumption that the support of $F$ is in $[V, V + 1]$ with $V \ge 2$.
 For two sellers, we can show that even \emph{asymmetric} equilibrium does not exist.

\begin{restatable}{theorem}{thmTwoSellerNoEq}
\label{theorem:no_BB_no_eq}
    In a plain presentation with two sellers, whose values are supported on the interval $[V, V + 1]$ for $V \geq 2$, there is no pure Nash equilibrium.
\end{restatable}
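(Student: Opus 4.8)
The plan is to assume, for contradiction, that a pure Nash equilibrium $(p_1, p_2)$ exists, and to pin down the two sellers' selling probabilities $D_1, D_2$ precisely enough at that equilibrium to derive an impossibility. First I would observe that, since there is no prominent seller, the buyer inspects the two sellers in order of decreasing price (the higher-priced seller has the lower Weitzman index $\theta_i = \theta_0 - p_i$ by Proposition~\ref{prop:define_theta_0}, and the buyer inspects higher index first). So if $p_1 \neq p_2$, say $p_1 > p_2$, the buyer always inspects seller~$2$ first. Using the description of the index policy together with the Kleinberg--Weitzman characterization ($\kappa_i = \min(v_i - p_i, \theta_i)$, purchase from the seller maximizing $\kappa_i$), I would write down $D_1$ and $D_2$ as explicit integrals over $F$, and likewise the derivative of each seller's revenue $\Rev_i = p_i D_i$ with respect to its own price.

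The key technical device, borrowed from \citet{armstrongAER}, is that at an interior equilibrium the first-order conditions $\frac{\partial}{\partial p_i}\bigl(p_i D_i(\mathbf p)\bigr) = 0$ must hold, and these can be manipulated to express the equilibrium selling probabilities $D_1$ and $D_2$ (or their sum/ratio) in closed form in terms of the primitives $V$ and $c$ — essentially because the ``marginal'' buyer who is indifferent has a value that sits at a boundary of the support $[V, V+1]$, and the condition $V \geq 2$ forces certain $[\,\cdot\,]^+$ terms to be identically the thing inside. I would carry out this computation for both the symmetric candidate (already ruled out by Proposition~\ref{proposition_no_symmetric}, but worth re-deriving in this language) and, crucially, for the asymmetric case $p_1 \neq p_2$. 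Then I would check consistency: the probabilities so derived must be nonnegative, must sum to at most~$1$, and must be compatible with the ordering of prices (the first-inspected, cheaper seller should sell with the larger probability). The contradiction should emerge because one of these derived quantities turns out to be negative, or to exceed~$1$, or to violate the price ordering — the hypothesis $V \geq 2$ is exactly what makes the offending term have the wrong sign.

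A few cases need separate handling: (i) boundary equilibria where one seller posts a price so high that $\theta_i < 0$ and it is never inspected — then that seller sells with probability~$0$ and earns~$0$ revenue, but could profitably deviate to a small positive price and capture positive demand, contradiction; (ii) equilibria where some seller posts price~$0$ — ruled out since a tiny price increase preserves being inspected-first-or-not and strictly increases revenue; (iii) the genuinely interior case, which is where the \citeauthor{armstrongAER}-style probability computation does the work. I would organize the proof as: reduce to the interior asymmetric case, set up the integral expressions for $D_1, D_2$, apply the two first-order conditions, solve for the $D_i$, and exhibit the contradiction.

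The main obstacle I anticipate is the bookkeeping in the asymmetric case: the buyer's behavior branches on whether, after inspecting the cheaper seller~$2$ and seeing $v_2$, the realized utility $v_2 - p_2$ already exceeds the index $\theta_1 = \theta_0 - p_1$ of seller~$1$ (in which case she stops and buys from~$2$), versus the case where she goes on to inspect seller~$1$ and then picks the better of the two. Writing $D_1$ and $D_2$ correctly across these branches, and then seeing the algebra collapse under $V \geq 2$ so that the first-order conditions become tractable, is the delicate part; everything after the probabilities are solved for should be a short sign check.
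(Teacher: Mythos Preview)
Your overall strategy---use the two equilibrium conditions to pin down $D_1, D_2$, then derive a contradiction from the price ordering---is exactly the paper's. The paper phrases it as a ``sandwich'' (the demand $D_1(\Delta_p')$ is trapped between two smooth bounds coming from the two sellers' no-deviation conditions, so their slopes must agree at the equilibrium point), but this is equivalent to your first-order conditions once one knows the demand is differentiable. The contradiction is precisely the ordering one you anticipate: the FOCs force $D_1 = p_1/(p_1+p_2) > \tfrac12$, yet the higher-priced seller is inspected second and cannot sell with probability exceeding $\tfrac12$.

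There is, however, a real gap in your plan: you have the role of $V \geq 2$ wrong, and the missing piece is what actually makes the FOCs close. The hypothesis $V \geq 2$ is \emph{not} used to kill $[\cdot]^+$ terms in the demand integrals or to push a marginal value to a support endpoint. It is used, via Lemma~\ref{lemma:all_buy}, to guarantee that at any equilibrium the cheaper price satisfies $p_2 < V$, so the buyer \emph{always} makes a purchase and $D_1 + D_2 \equiv 1$. This identity is doing all the work: it is what lets both demands be written as functions of the single variable $\Delta_p = p_1 - p_2$, and it is what lets you pass from the two FOCs (which only give the ratio $D_1/D_2 = p_1/p_2$) to an actual value $D_1 = p_1/(p_1+p_2)$. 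Without it, the demands depend on $p_1$ and $p_2$ separately (the outside option of not buying breaks the translation invariance), the FOCs do not reduce to a clean ratio, and you will not get a closed form in $V$ and $c$ as you expect---indeed the paper never solves for $D_i$ in terms of $V$ and $c$, only in terms of $p_1, p_2$.

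So before the FOC computation, you need to prove Lemma~\ref{lemma:all_buy}: at any pure equilibrium with $V \geq 2$, the lowest price is below $V$. The argument is a separate deviation bound (any seller can undercut the cheapest by $1$ and capture all demand), not part of the FOC algebra. Once that lemma is in place, your plan goes through essentially as the paper's does.
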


% \hufu{maybe elaborate a bit about the proof}\note{shall we put the intuition pic here?}
The proof is technically interesting and involved, and follows the proof strategy of a result by \citet{armstrongAER}, who study consumers' private signaling, a problem quite distant from ours.
For a proof by contradiction, one may assume an asymmetric equilibrium exists and write the equilibrium condition for the two sellers: either seller, if deviating to a different price, should see weakly less revenue.
This gives upper and lower bounds on the demand function of a seller as they vary their price.
We derive Lemma~\ref{lemma:all_buy} below, %similar to one by \citet{armstrongAER}, 
which allows us to express the said bounds and the demand all in terms of one common variable, so that the demand is sandwiched between the two bounds.
% Another crucial observation by \citet{armstrongAER}  (made in a different setting) is that, 
The three functions coincide at the equilibrium point and their derivatives at that point must be equal.
This immediately allows one to express precisely the demand of the sellers at the equilibrium, which leads to a contradiction.
(See Figure~\ref{fig:theorem_1_proof_intuition} for an illustration.)

\begin{figure}
        \centering
        \includegraphics[width=\linewidth]{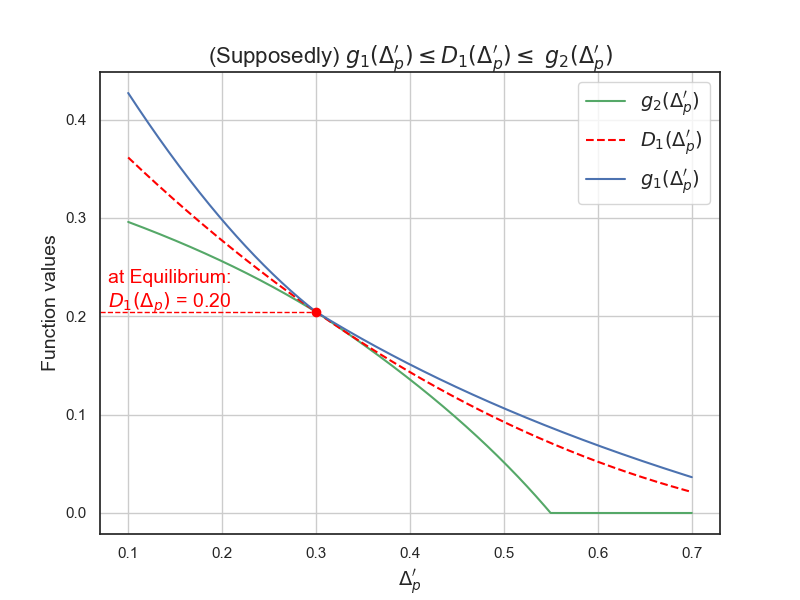}
        \caption[Illustration for a step in the proof of Theorem~\ref{theorem:no_BB_no_eq}.
        ]
        %Equilibrium condition for $(p_1, p_2)$ requires that $g_1(\Delta_p') = \frac{p_1 D_1(\Delta_p)}{p_2 + \Delta_p'}, g_2(\Delta_p') = 1 - \frac{p_2 D_2(\Delta_p)}{p_1' - \Delta_p'}$ \textit{sandwich} $D_1(\Delta_p)$ (inequality~(\ref{ineq:theorem1_key})). Moreover, at $\Delta_p = p_1 - p_2$ the inequality is tight. Therefore, it is further required that the derivative of $g_1(\cdot), g_2(\cdot)$ and $D_1(\cdot)$ coincide at $\Delta_p$.} % The example is demonstrated using the example of $f(v) \propto e^v$.}
        {Illustration for a step in the proof of Theorem~\ref{theorem:no_BB_no_eq}. \par \textmd{
        The equilibrium conditions, together with Lemma~\ref{lemma:all_buy}, require that seller~1's demand, as the seller varies their price, is sandwiched between an upper bound $g_1$ and a lower bound~$g_2$, all three functions expressed in a common variable~$\Delta'_p$.  
        They coincide at the point $\Delta'_p = \Delta_p$, which is the hypothetical equilibrium point. 
        The three functions must have the same derivative at~$\Delta'_p = \Delta_p$.
        }}
        \label{fig:theorem_1_proof_intuition}
        \Description{placeholder}
    \end{figure}

\begin{restatable}{lemma}{lemmaAllBuy}
    \label{lemma:all_buy}
    In a plain presentation, for $F$ supported on $[V, V+1]$ with $V \geq 2$, the buyer always makes a purchase if the the sellers' prices reach an equilibrium.
\end{restatable}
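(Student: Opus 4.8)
The plan is to reduce the statement to a claim about equilibrium prices, and then to rule out the offending configuration by exhibiting a profitable deviation.

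\emph{Step 1 (Reduction).} By the cited lemma of \citet{kleinberg2016descending}, the buyer makes a purchase precisely when $\max_i \kappa_i \ge 0$, with $\kappa_i = \min(v_i - p_i, \theta_i)$; equivalently, precisely when some seller~$i$ has both $v_i \ge p_i$ and $\theta_i \ge 0$. Using the non-degeneracy assumption $\theta_0 > V$ together with Proposition~\ref{prop:define_theta_0}, I would check that, with probability one, the buyer purchases if and only if at least one seller posts a price at most~$V$: any seller with $p_i \le V$ always has $v_i \ge V \ge p_i$ and $\theta_i = \theta_0 - p_i > 0$, whereas if every $p_i > V$ then the event $\{v_i < p_i \text{ for all } i\}$ has positive probability $\prod_i F(p_i)$, and on it the buyer buys nothing. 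Hence the lemma is equivalent to: at any pure equilibrium, $\min_i p_i \le V$.

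\emph{Step 2 (A revenue formula; setup for contradiction).} I would carry out the rest for the two sellers relevant to Theorem~\ref{theorem:no_BB_no_eq}. Suppose, for contradiction, that an equilibrium has all prices above~$V$; order the sellers so that $p_1 \le p_2$. A seller posting $p_i \ge V+1$ has $v_i - p_i \le 0$ always, hence zero revenue and a profitable deviation (e.g.\ to price~$V$); a seller with negative index sells with probability zero; so $p_1, p_2 \in (V, \min(V+1, \theta_0))$, and by Proposition~\ref{proposition_no_symmetric} in fact $p_1 < p_2$. Because the buyer always purchases from the seller with the largest $\kappa_i$, irrespective of inspection order, seller~$i$'s revenue as a function of its own price~$p$ is $\Rev_i(p) = p\cdot\Pr[\tilde W_i \ge p]$, where $\tilde W_i := \min(v_i,\theta_0) - [\kappa_{-i}]^+$ and $\kappa_j = \min(v_j,\theta_0) - p_j$ is independent of~$p$. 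Thus $p_1$ is a monopoly price for the random variable $\tilde W_1$, which satisfies $\tilde W_1 \le v_1 \le V+1$ almost surely and $\tilde W_1 = \min(v_1,\theta_0) \ge V$ on the positive-probability event $\{v_2 \le p_2\}$; symmetrically for~$\tilde W_2$.

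\emph{Step 3 (The deviation, and the main obstacle).} It remains to contradict $p_1 > V$. The driving idea is that seller~1 should move to a price at most~$V$, where its value always covers its price: from $\tilde W_1 \le v_1$ one gets $\Rev_1(p_1) \le p_1(1 - F(p_1))$, and comparing with $\Rev_1(V) = V\Pr[\tilde W_1 \ge V]$ shows that lowering the price to~$V$ sacrifices at most $(p_1 - V)\Pr[\tilde W_1 \ge p_1] < \Pr[\tilde W_1 \ge p_1]$ and gains $\Pr[V \le \tilde W_1 < p_1]$ extra sales, each worth~$V$. Since $V \ge 2$ forces $p_1/V \le (V+1)/V \le 3/2$, this accounting favors the deviation unless $\tilde W_1$ puts little mass in $[V, p_1)$, i.e.\ is concentrated near the top of its range. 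This residual case is the main obstacle. When $\tilde W_1$ is concentrated near the top, the identity $\tilde W_1 = \min(v_1,\theta_0) - [\min(v_2,\theta_0) - p_2]^+$ with $v_1, v_2$ i.i.d.\ forces~$F$ to place almost all its mass near~$V+1$, which in turn forces $\tilde W_2 = \min(v_2,\theta_0) - [\min(v_1,\theta_0) - p_1]^+$ to be concentrated near~$p_1 < p_2$; then $\Rev_2$ is strictly increasing at~$p_2$, so seller~2 strictly profits by shaving its price toward~$p_1$, contradicting seller~2's optimality. Making this dichotomy rigorous---most cleanly by writing the first-order conditions $\Rev_1'(p_1) = \Rev_2'(p_2) = 0$ and showing that, on a support of the form $[V, V+1]$ with $V \ge 2$, they cannot hold simultaneously with $p_1, p_2 \in (V, V+1)$---is the technical heart, and is where the argument follows the strategy of \citet{armstrongAER}.
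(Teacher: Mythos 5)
Your Step~1 reduction is fine (under the paper's standing assumption $c < \bar c$, a price $p_i \le V$ guarantees both $v_i \ge p_i$ and a nonnegative index, so the lemma amounts to showing $\min_i p_i \le V$ at any equilibrium), and the identity $\Rev_i(p) = p\,\Pr[\tilde W_i \ge p]$ in Step~2 is correct for prices with nonnegative index. The gap is that Step~3 never actually rules out an equilibrium with all prices in $(V, V+1)$. Your deviation-to-$V$ accounting works only ``unless $\tilde W_1$ puts little mass in $[V, p_1)$'', and the treatment of that residual case --- that it forces $F$ to concentrate near $V+1$, hence $\tilde W_2$ concentrates near $p_1$, hence $\Rev_2$ is strictly increasing at $p_2$ --- is asserted rather than proved, and it is not clear it is even true as stated (for instance, $F$ could concentrate between $p_1$ and $p_2$, or the term $[\kappa_2]^+$ could be negligible because $c$ is large, without $F$ piling up at the top of its support). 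You acknowledge that making the dichotomy rigorous is ``the technical heart'' and defer it to a first-order-condition analysis you do not carry out, so what you have is a plan with the decisive step missing, not a proof. A secondary gap: you only treat $m = 2$, whereas the lemma is stated, proved, and used for $m$ sellers.

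The paper's own argument needs none of this machinery; it exploits only the fact that the support has length~$1$. If all prices are positive and $p_m$ denotes the lowest posted price, then a seller deviating to any price below $p_m - 1$ captures the entire market (their realized utility exceeds every rival's for every value realization, and their index is the largest), so every seller's equilibrium revenue must be at least $p_m - 1$; on the other hand total revenue is at most $p_1 \le p_m + 1$. Summing the $m$ revenue inequalities yields an absolute upper bound on $p_m$ ($\tfrac{m+1}{m-1}$ in the paper), from which the paper concludes $p_m < V$, so the cheapest seller is always worth buying from; the case $\min_i p_i = 0$ is handled directly. This requires no first-order conditions, no differentiability or shape conditions on $F$, and covers all $m$ at once; note also that the Armstrong-style sandwich argument you gesture at is what the paper uses for Theorem~\ref{theorem:no_BB_no_eq}, which itself relies on this lemma, so proving the lemma that way would be doing the harder analysis first. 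If you want to rescue your route, the undercut-by-more-than-one deviation is the ingredient to import: it supplies a uniform lower bound on every seller's equilibrium revenue and replaces the unproved dichotomy in your Step~3.
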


% The lemma has analogs in later settings we consider as well, but requires more conditions and modifications.
% Henceforth, we assume this outside option is irrelevant and the buyer always purchases from one of the sellers. Perhaps surprisingly, plain presentation without a Buy Box fails to sustain any pure Nash equilibrium for two sellers.

We can further show that even $\epsilon$-approximate \emph{symmetric} equilibrium does not exist in plain presentations with two players.
\begin{restatable}{theorem}{thmeps}
  In a plain presentation, for any $F$ supported on $[V, V + 1]$ and any $c > 0$, there exists a $\Delta > 0$, such that for any positive $\eps < \Delta$, there is no pure symmetric $\epsilon$-equilibrium.
\end{restatable}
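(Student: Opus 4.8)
The plan is to prove the quantitatively sharper statement that from \emph{every} symmetric price profile some seller has a unilateral deviation raising its revenue by at least a fixed $\Delta>0$; then no symmetric profile is an $\eps$-equilibrium once $\eps<\Delta$. We work, as elsewhere in the paper, under $\theta_0>V$ (equivalently $c<\bar c$), and note $\theta_0<V+1$ always holds since $\mathbb E[(v-\theta)^+]=0$ for $\theta\ge V+1$ while it equals $c>0$ at $\theta_0$. Write $\kappa_i=\min(v_i-p_i,\theta_i)$ and, for any price with non-negative index, $W_i:=\min(v_i,\theta_0)=\kappa_i+p_i$; by Proposition~\ref{prop:define_theta_0} and the lemma of \citet{kleinberg2016descending}, the buyer purchases from the seller maximizing $\kappa_i$ whenever $\max_i\kappa_i\ge0$. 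Under our assumptions $W_i$ is supported on $[V,\theta_0]$ with an atom of mass $\beta:=1-F(\theta_0)\in(0,1)$ at $\theta_0$, and this atom drives the proof. For a symmetric profile put $G(p):=\sup_{p'}\bigl(\Rev_1(p',p,\dots,p)-\Rev_1(p,\dots,p)\bigr)$; since $(p,\dots,p)$ is an $\eps$-equilibrium iff $G(p)\le\eps$, it suffices to produce $\Delta:=\inf_{p\ge0}G(p)>0$, and then any $\eps\in(0,\Delta)$ works.

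I would bound $G(p)$ on three ranges. \emph{(i) High prices $p>\theta_0$:} every index is negative, so the buyer inspects nobody and all revenues vanish; deviating to price $V$ makes seller~$1$ the unique positive-index seller, inspected first and selling with probability~$1$, so $G(p)\ge V$. \emph{(ii) Moderate prices $p_0\le p\le\theta_0$} (for a small constant $p_0>0$ fixed later): seller~$1$ undercuts slightly to $p-\eta$, acquiring the strictly largest index; by the characterization above its demand becomes $D_1'(\eta)=\Pr[\,W_1+\eta>W_j\ \forall j\neq1,\ W_1+\eta\ge p\,]$, which decreases to $A:=\Pr[\,W_1\ge W_j\ \forall j\neq1,\ W_1\ge p\,]$ as $\eta\downarrow0$, while the symmetric demand is $D_1=\tfrac1m\Pr[\max_iW_i\ge p]$. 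Splitting on whether $W_1=\theta_0$ shows that, for \emph{every} $p\le\theta_0$, $A-D_1=\beta-\tfrac1m\bigl(1-(1-\beta)^m\bigr)=:\gamma_m$, which is strictly positive by Bernoulli's inequality and independent of $p$. Hence the undercutting gain $(p-\eta)D_1'(\eta)-pD_1$ tends to $p\gamma_m$ as $\eta\downarrow0$, so $G(p)\ge p\gamma_m\ge p_0\gamma_m$. \emph{(iii) Low prices $0\le p<p_0$:} the symmetric revenue is at most $p<p_0$; here seller~$1$ deviates \emph{upward} by the fixed gap $d:=\tfrac12\min(V,\theta_0-V)$, to price $p+d$. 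Since $p+d<V\le W_1$ and $p+d<\theta_0$, seller~$1$'s index stays non-negative and its demand is $D_1'=\Pr[\,W_1-\max_{j\neq1}W_j>d\,]$; as $d<\theta_0-V$, the event $\{W_1=\theta_0\}\cap\bigcap_{j\neq1}\{W_j<\theta_0-d\}$ has probability $\beta F(\theta_0-d)^{m-1}=:\rho_0>0$ and forces $W_1-W_j>d$, so $D_1'\ge\rho_0$, the deviation earns at least $d\rho_0$, and $G(p)\ge d\rho_0-p>d\rho_0-p_0$.

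To close, set $p_0:=\tfrac12 d\rho_0$; one checks $p_0+d<V$, as used above. Then range (iii) gives $G(p)\ge\tfrac12 d\rho_0$, range (ii) gives $G(p)\ge p_0\gamma_m=\tfrac12 d\rho_0\gamma_m$, and range (i) gives $G(p)\ge V$, so $\Delta:=\min\{V,\ \tfrac12 d\rho_0,\ \tfrac12 d\rho_0\gamma_m\}>0$ does the job. The argument is uniform in $m\ge2$; for two sellers it specializes to $\gamma_2=\beta^2/2$ and $\rho_0=\beta F(\theta_0-d)$.

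The step I expect to be hardest is range (iii). The natural instinct---as in the exact-equilibrium argument of Proposition~\ref{proposition_no_symmetric}---is to use slight undercutting as the deviation for all $p$; but the undercutting gain is only $\Theta(p)$ and hence vanishes as $p\to0$ (and is unavailable at $p=0$), so for small prices one must instead \emph{raise} the price, and the raise must be strictly less than $\theta_0-V$, the full spread of $W_i$, or else $D_1'$ drops to $0$ and the deviation is worthless. The secondary point carrying range (ii) is that the undercutting jump $A-D_1$ is not merely positive but exactly constant in $p$; this is precisely the contribution of the atom of $W_i$ at $\theta_0$.
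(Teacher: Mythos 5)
Your proposal is correct and follows essentially the same three-case strategy as the paper's own proof: an upward deviation handles low symmetric prices, undercutting handles moderate prices (the gain being a price-independent demand jump, driven by values above $\theta_0$, times $p$), and a deviation to price $V$ handles the regime where the common index is negative, with your constants $\rho_0$ and $\gamma_m$ playing the roles of the paper's $r(x^*)$ and $\Pr[B_0\cap A_0]-\Pr[A_0]/m$. The only nitpick is that the symmetric demand equals $\frac1m\Pr[\max_i W_i\ge p]$ only under symmetric tie-breaking; as in the proof of Proposition~\ref{proposition_no_symmetric}, one should apply the deviation argument to a seller whose equilibrium demand is at most the average, which changes nothing else.
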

% We relegate the proof to the appendix.

\section{Prominence Mechanisms}
\label{sec:buy-box}

As set up in Section~\ref{sec:prelim}, a prominent seller is chosen by a mechanism in response to the price profile.  
This mechanism, called the \emph{prominence mechanism}, is at the crux of the three-party game.
We formally define it below, and quickly observe that the most na\"ive mechanism gives no more price stability than a plain presentation.
% In Section~\ref{sec:dictator}, we introduce a family of powerful mechanism called dictators, and show them to encompass all price equilibria; that is, any symmetric equilibrium implementable by a \emph{standard} mechanism is implementable under a dictator mechanism. Then in section~\ref{sec:implementable}, we give a full characterization of the range of prices implementable at equilibria. In Section~\ref{sec:threshold}, we introduce threshold mechanisms, which are less stringent and more natural looking. We characterize the equilibria implementable in threshold mechanisms, and compare the range with that of dictator mechanisms.

% This section introduces the study of Buy Box mechanism designed to foster market stability and competitive pricing.
% We first show that a seemingly intuitive Buy Box mechanism that prioritizes the lowest-price seller does not support any equilibrium. We then show that appropriate mechanisms for allocating the Buy Box can admit equilibrium, hence stablizing prices in the market. We further derive the mechanism (the dictator-$t$ mechanism) that achieves the lowest equilibrium price, and discuss its welfare and surplus implications. En route, we discuss another natural mechanism (the threshold-$t$ mechanism) and compare the two mechanisms. Towards the end, we endogenize the choice of the inspection cost and discuss its effect on welfare and surplus.

% \paragraph{Prominence Mechanisms}
 % In general, this policy requires a \textit{BuyBox mechanism} for choosing the prioritized seller: 
\begin{definition}
A \emph{prominence mechanism} $\mathcal M$ maps a price profile $\mathbf p = \avector{p}{m}$ to $\mathbf x^{\mathcal M} (\mathbf p):= \avector{x}{m}$, with $\sum x_i \leq 1$, where $x_i$ is the probability that seller~$i$ wins prominence.
\end{definition}
% In this work, we focus on the market outcomes of various Buy Box mechanisms, especially those that induce low prices in pure equilibria.
% One can imagine some natural implementation of Buy Box mechanisms, for example, simply awarding the Buy Box to the sellers that offer the lowest price, or the highest product quality, etc.

% \vskip 12pt
If one aims to drive down prices, it is tempting to simply give prominence to the seller with the lowest price:

\begin{definition}
    The \emph{Lowest Price First (LPF) Mechanism} gives prominence to the seller with the lowest price, breaking ties uniformly at random.
\end{definition}

% Proposition \ref{proposition_no_symmetric} and Theorem \ref{theorem:no_BB_no_eq} naturally extends to the LPF mechanism: the following corollary demonstrate that, despite some action is taken in alleviating the buyer's inspection obstacle by placing the lowest-price seller in a Buy Box, the market yet again admit no pure equilibrium.
It takes only a moment of thought to realize that, compared with the plain presentation, LPF only offers more incentive for the sellers to undercut each other, and can no better sustain equilibria.
% The proof of the following proposition can be found in the appendix.

\begin{restatable}{proposition}{proplpf}% [LPF mechanism is just as bad as plain presentation]
  \begin{enumerate}[(a)]
    \item % The LPF mechanism does not support any pure symmetric equilibrium.
      For any $p> 0$, the symmetric price profile $\mathbf p = (p, \ldots, p)$ is not an equilibrium under the LPF mechanism.
      If $\theta_0 \geq V$, LPF admits no symmetric equilibrium.
    
    \item With two sellers, the LPF mechanism admits no pure equilibrium.
  \end{enumerate}
\end{restatable}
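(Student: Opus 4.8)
\emph{Plan.}

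For part (a), the plan is to isolate one monotonicity fact and run a standard undercutting argument on top of it. The fact is: at a symmetric profile where all $m$ sellers post the same price $p$, the prominent seller strictly outsells every non-prominent seller. I would prove this by coupling the values $v_1,\dots,v_m$ as i.i.d.\ draws and using the Kleinberg et al.\ description --- the buyer buys from the seller maximizing $\kappa_\ell$, with $\kappa_j = v_j - p$ for the prominent seller $j$ (zero cost, so no cap) and $\kappa_\ell = \min(v_\ell - p,\ \theta_0 - p)$ for $\ell \neq j$ by Proposition~\ref{prop:define_theta_0}. The involution swapping the realized values $v_j$ and $v_k$ of the prominent seller and a fixed non-prominent seller $k$ is measure-preserving; one checks that whenever $k$ wins one has $v_k \ge v_j$, and that applying the swap makes $j$ the winner, so $\Prx{k\text{ wins}} \le \Prx{j\text{ wins}}$, with strict inequality witnessed by the positive-probability event $\{v_j > \theta_0,\ v_k > \theta_0\}$ (on which $j$ wins, while its swap-preimage does not make $k$ win). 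If non-prominent sellers are never inspected the inequality is trivial. Granting this, at $(p,\dots,p)$ with $p>0$ each seller has demand $\tfrac1m q^{\mathrm{prom}} + \tfrac{m-1}{m}q^{\mathrm{reg}}$ with $q^{\mathrm{prom}} > q^{\mathrm{reg}}$, hence revenue strictly below $p\,q^{\mathrm{prom}} = \lim_{\eps\to 0^+}(p-\eps)\,D^{\mathrm{prom}}(p-\eps;\,p,\dots,p)$, so undercutting to $p-\eps$ for small $\eps>0$ is strictly profitable (the prominent seller's demand is continuous in its own price since $F$ is continuous); the edge case $p \ge V+1$, where all revenues are $0$, is handled by deviating to price $V$. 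Finally, when $\theta_0 > V$ the all-zero profile is not an equilibrium either: a seller earning $0$ can move to a small $p' \in (0,\theta_0 - V)$ and still win --- hence earn positive revenue --- whenever its own value exceeds $\theta_0$ while the other $m-1$ values all lie in $[V,\theta_0 - p')$, an event of positive probability. These together exclude every symmetric equilibrium.

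For part (b), let $(p_1,p_2)$ be a pure equilibrium with $m=2$. If $p_1 = p_2$ it is symmetric, so part (a) applies. Otherwise assume $p_1 < p_2$, so seller $1$ is the unique prominent seller. I would first dispose of degeneracies: if either seller has revenue $0$ --- in particular if $p_1 = 0$, or $p_1 \ge V+1$, or $p_2 > \theta_0$ --- an explicit small deviation that makes that seller prominent at some price in $(0,V+1)$ is profitable; so we may take $0 < p_1 < p_2 < V+1$ and $p_2 \le \theta_0$. Since $F$ is supported on $[V,V+1]$, the equilibrium demands have the closed forms $D_1 = \Prx{v_1 \ge \max\!\big(p_1,\ \min(v_2,\theta_0) - (p_2-p_1)\big)}$ and $D_1 + D_2 = 1 - F(p_1)F(p_2)$. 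Because $p_1 \in (0,p_2)$ is seller $1$'s best response to $p_2$ and $(0,p_2)$ is open, $p_1$ is an interior critical point of $p \mapsto p\,D_1$; similarly $p_2 \in (p_1,\infty)$ is an interior critical point of $p \mapsto p\,D_2$. Armed with these first-order conditions, the ``no profitable creep/undercut'' bounds (e.g.\ $p_1 D_1 \ge p_2\,q^{\mathrm{prom}}(p_2,p_2)$ from seller $1$ creeping toward $p_2$, and $p_2 D_2 \ge p_1\,q^{\mathrm{prom}}(p_1,p_1)$ from seller $2$ undercutting below $p_1$), and the closed forms, I would mimic the argument of Theorem~\ref{theorem:no_BB_no_eq}: write each seller's demand, as a function of its own price, as sandwiched between an upper and a lower bound that coincide at the hypothetical equilibrium, infer that the three curves share a derivative there, and read off from this an explicit value of the equilibrium demand that the closed forms cannot satisfy --- a contradiction.

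The main obstacle is the asymmetric case of part (b). Part (a) is routine once the coupling fact is in place; the only subtleties there are the reading of the prominent seller's Weitzman index (effectively $+\infty$, so its capped value is the realized utility $v_j - p_j$) and the usual continuity and boundary bookkeeping. In part (b), by contrast, the two sellers have structurally different demand functions --- one prominent, one not --- so the sandwiching-and-derivative computation of Theorem~\ref{theorem:no_BB_no_eq} cannot be quoted and must be redone with the prominence-adjusted formulas; one must also argue that every equilibrium genuinely lands in the interior first-order regime rather than on a boundary where seller $1$'s optimal price is pushed up against $p_2$ (itself a driver of nonexistence), and dispose of the remaining boundary cases ($p_2 = \theta_0$, prices straddling $V$, non-differentiable $F$).
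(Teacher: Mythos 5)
Your part (a) is correct, though it takes a different route from the paper: you prove a coupling fact ($q^{\mathrm{prom}}>q^{\mathrm{reg}}$ at a symmetric profile via the value-swap involution) and then undercut; the paper instead observes that undercutting under LPF is weakly more profitable than in the plain presentation and simply re-runs the argument of Proposition~\ref{proposition_no_symmetric} (with the same $p=0$, $\theta_0>V$ patch you give). Both work; your version is self-contained but longer.

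Part (b) is where there is a genuine gap. The decisive step --- deriving the contradiction for an asymmetric profile --- is only promised (``mimic the argument of Theorem~\ref{theorem:no_BB_no_eq}\dots must be redone with the prominence-adjusted formulas''), and you have missed the observation that makes it unnecessary to redo anything: with two sellers at distinct prices $p_1<p_2$, the LPF demands are \emph{identical} to the plain-presentation demands. Indeed, the only difference is that the prominent (lower-priced) seller's value is uncapped, $\kappa_1=v_1-p_1$ instead of $\min(v_1-p_1,\theta_0-p_1)$; but whenever the cap would bind, $\kappa_1\ge\theta_0-p_1>\theta_0-p_2\ge\kappa_2$, so seller~1 wins in either model and the purchase events coincide. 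Hence the demand functions, as functions of the price difference, are literally those of Theorem~\ref{theorem:no_BB_no_eq}, and nonexistence follows with no new computation --- this is the paper's proof. Your framing of the two demands as ``structurally different'' is what forces you into an unfinished redo. Moreover, the sandwich argument you intend to imitate needs $D_1+D_2\equiv 1$ in a neighborhood of the candidate equilibrium (that is how the two one-sided revenue inequalities are combined into upper and lower bounds on the same function of the price gap); in the paper this comes from Lemma~\ref{lemma:all_buy} (equilibrium prices are below $2\le V$, so the buyer always buys), and for LPF from the same argument or Lemma~\ref{lemma:lemma_1_extension_for_Buy_Box}. Your restrictions ($0<p_1<p_2<V+1$, $p_2\le\theta_0$) do not deliver this --- $1-F(p_1)F(p_2)$ need not equal $1$, and it moves when a seller deviates --- and your interior first-order conditions do not substitute for it. Until you either prove the LPF analogue of Lemma~\ref{lemma:all_buy} and carry out the sandwich computation, or note the demand identity and invoke Theorem~\ref{theorem:no_BB_no_eq} directly, part (b) is not established.
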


\subsection{The Dictator Mechanisms}
\label{sec:dictator}

% To fix the issue, specifically, consider the following Dictator-$t$ mechanism

We now introduce a family of powerful mechanisms, where a seller has to post a price dictated by the platform to win prominence.
It turns out that these mechanisms encompass all symmetric equilibria in a natural family of mechanisms
(Definition~\ref{def:standard}, Theorem~\ref{thm:dict}). 

\begin{definition} [Dictator-$t$ mechanism]
  The \emph{Dictator-$t$ mechanism} is a mechanism parameterized with a target price~$t > 0$. It assigns prominence to sellers who set their prices at $t$, breaking ties uniformly at random.
  Formally, for any price profile $\mathbf{p} = (p_1, p_2, \ldots, p_m)$, the prominence allocation vector $\mathbf{x}(\mathbf{p})$ is 
    $$
    \mathbf{x}^\text{Dictator-$t$}(\mathbf{p}) = \left(\frac{\mathbb{I}[p_1 = t]}{N_t},\  \frac{\mathbb{I}[p_2 = t]}{N_t}, \ \ldots,\  \frac{\mathbb{I}[p_m = t]}{N_t} \right),
    $$
    where $N_t = \sum_{i \in [m]} \mathbb{I}[p_i = t]$ is the number of sellers pricing at~$t$, and $\mathbb{I}[\cdot]$ is the indicator function.
\end{definition}

% A mechanism can sometimes be designed to be arbitrary and unfair. For instance, assigning the Buy Box to a particular seller regardless of the prices set by others. To ensure that the mechanisms we study are fair in general and foster competition, we focus on the following specific class of nontrivial mechanisms that are \textit{anonymous} and \textit{always allocate}.
% We next show that the Dictator-$t$ mechanism obtains the lowest possible price at equilibrium among a class of nontrivial mechanisms.

\begin{definition}
\label{def:standard}
 A prominence mechanism is said to be
    \begin{itemize}
        \item[(a)] \emph{anonymous}, if for any permutation $\sigma$ on $[m]$ and any price profile $\mathbf{p} := \avector{p}{m}$, the allocation satisfies $x_i(\mathbf{p}) = x_{\sigma(i)}(p_{\sigma(1)}, \dots, p_{\sigma(m)})$ for every $i \in [m]$;
        \item[(b)] \emph{always allocating in equilibrium}, if for any symmetric equilibrium $\mathbf{p}$, $\sum_{i=1}^{m} x_i(\mathbf{p}) = 1$.
    \end{itemize}
    A mechanism is said to be \emph{standard} if it is both anonymous and always allocating in equilibrium.
\end{definition}

\begin{restatable}{theorem}{thmDictatorOpt}
\label{theorem:dictator_opt}
\label{thm:dict}
    For any standard prominence mechanism $\mathcal{M}$, if $\mathbf p = (p, \dots, p)$ is a symmetric equilibrium, then $\mathbf p$ is also an equilibrium for the Dictator-$p$ mechanism.
\end{restatable}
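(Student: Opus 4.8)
The plan is to compare the Dictator-$p$ mechanism with $\mathcal M$ at the profile $\mathbf p=(p,\dots,p)$ and at every one-seller deviation from it, establishing two facts: (i) the two mechanisms induce the same revenue for every seller at $\mathbf p$; (ii) after a unilateral deviation, Dictator-$p$ gives the deviator no more revenue than $\mathcal M$ would. Chaining (i), (ii) with the equilibrium condition of $\mathcal M$ then shows no seller can profitably deviate under Dictator-$p$.

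For (i): at $\mathbf p$, anonymity makes all sellers' prominence probabilities equal under $\mathcal M$, and since $\mathbf p$ is a symmetric equilibrium, the ``always allocating in equilibrium'' property forces these probabilities to be exactly $1/m$; under Dictator-$p$ all $m$ sellers post the target price $p$, so $N_p=m$ and each again wins prominence with probability $1/m$. Since the buyer's index policy depends only on the collection of (price, inspection-cost) pairs, the two mechanisms induce the same purchase distribution at $\mathbf p$, hence $\Rev_i(\mathbf p)$ is the same under both.

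For (ii): fix a seller $i$ and a deviation to $p_i'$ (the case $p_i'=p$ is vacuous), and write $\mathbf q=(p_i',\mathbf p_{-i})$. I would split seller $i$'s purchase probability according to the identity of the prominent seller: $D_i(\mathbf q;k)$ if $k$ is prominent, $D_i(\mathbf q;\emptyset)$ if nobody is. Symmetry of the $m-1$ non-deviators (equal price $p$, equal cost, i.i.d.\ values) gives a common value $d:=D_i(\mathbf q;k)$ for all $k\neq i$, and applying anonymity of $\mathcal M$ to transpositions of equal-priced sellers gives those sellers a common prominence probability $y$, with seller $i$ getting some $x_i$ and $x_i+(m-1)y\le 1$. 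Let $D_i^{\mathrm D}(\mathbf q)$ and $D_i^{\mathcal M}(\mathbf q)$ denote seller $i$'s demand at $\mathbf q$ under Dictator-$p$ and under $\mathcal M$. Under Dictator-$p$ the deviator is never prominent, so $D_i^{\mathrm D}(\mathbf q)=d$, whereas
\[
D_i^{\mathcal M}(\mathbf q)=x_i\,D_i(\mathbf q;i)+(m-1)y\,d+\bigl(1-x_i-(m-1)y\bigr)D_i(\mathbf q;\emptyset).
\]
The key monotonicity fact is: lowering a seller's inspection cost (weakly) increases its Weitzman index, hence increases $\kappa=\min(v-p,\theta)$ pointwise in the value realization, hence — since by Kleinberg et al.\ the buyer buys from $\argmax_j\kappa_j$ whenever this maximum is nonnegative, and ties have probability zero under continuous $F$ — (weakly) increases that seller's purchase probability and (weakly) decreases each other seller's. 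Applying this with the cost of seller $i$ dropped to $0$ gives $D_i(\mathbf q;i)\ge D_i(\mathbf q;\emptyset)$, and with a competitor's cost dropped to $0$ gives $D_i(\mathbf q;\emptyset)\ge d$. Since the coefficients of $D_i(\mathbf q;i)$ and $D_i(\mathbf q;\emptyset)$ above are nonnegative, this forces $D_i^{\mathcal M}(\mathbf q)\ge d=D_i^{\mathrm D}(\mathbf q)$, and multiplying by $p_i'\ge 0$ gives $\Rev_i^{\mathrm D}(\mathbf q)\le\Rev_i^{\mathcal M}(\mathbf q)$. Combining with $\Rev_i^{\mathcal M}(\mathbf q)\le\Rev_i^{\mathcal M}(\mathbf p)=\Rev_i^{\mathrm D}(\mathbf p)$ (the first inequality is the equilibrium condition of $\mathcal M$, the equality is (i)), seller $i$ gains nothing by deviating under Dictator-$p$; as $i$ and $p_i'$ were arbitrary, $\mathbf p$ is an equilibrium for Dictator-$p$.

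The step I expect to require the most care is the cost-monotonicity of demand: making precise that a pointwise increase in one $\kappa_k$ can only enlarge the event ``the buyer buys from $k$'' and shrink each event ``the buyer buys from $i$'' for $i\ne k$, with the proper bookkeeping of the ``no purchase'' region and the measure-zero ties. Everything else reduces to the weighted-average comparison in the display plus the two anonymity / always-allocating observations, which are routine.
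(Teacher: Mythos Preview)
Your proposal is correct and follows essentially the same two-step structure as the paper's proof: equal revenues at $\mathbf p$ by anonymity plus always-allocating, then deviator demand under Dictator-$p$ is dominated by deviator demand under $\mathcal M$. Your treatment is in fact more careful than the paper's on two points: you explicitly handle the possibility that no seller is prominent after a deviation (the ``always allocating'' condition applies only at the equilibrium profile, so the paper's tacit assumption that \emph{some} seller is always prominent after a deviation is not justified), and you spell out the cost-monotonicity of demand via the $\kappa$ characterization rather than declaring $D_{\text{BB}}\ge D_{\text{NBB}}$ ``evident.'' One small point worth making precise when you write it up: ties in $\kappa$ among the non-deviating sellers can have positive probability (they share the same index $\theta_0-p$), but ties involving the \emph{deviator} have measure zero because $p_i'\neq p$ forces $\theta_0-p_i'\neq\theta_0-p$, so the deviator's purchase event is determined by strict inequalities almost surely and your pointwise monotonicity argument goes through cleanly.
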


The theorem is proved by observing that, in any standard mechanism, at a symmetric pure equilibrium price~$p$, each seller must be in the Buy Box with probability $\frac 1 m$, the same as in the Dictator-$p$  mechanism.
It remains only to show that a deviation in the dictator mechanism cannot be more profitable than the same deviation in any other standard mechanism.

\subsection{Implementable Prices}
\label{sec:implementable}

In this section we characterize prices implementable at symmetric equilibria by standard mechanisms.
\textbf{Throughout the rest of the paper, we assume $V \geq 2$ and $\theta_0 > V$, i.e., $c < \bar c$.}
We have discussed these conditions in Section~\ref{sec:prelim} (Assumptions of Non-Degeneracy).
% Let $\bar c := \Ex v - V$ be the inspection cost corresponding to $\theta_0 = V$; then equivalently, we assume $c < \bar c$.

% Second, we will show that if the the Buy-Box mechanism set inspection cost higher than $\bar c$, equilibrium price can be made arbitrarily low.

% \begin{restatable}{proposition}{propLowPriceWhenCTooBig}[Arbitrarily low-price equilibrium for $c > \bar{c}$]
%         For any $c > \bar{c}$, there exists a Buy-Box mechanism that supports arbitrarily low equilibrium prices.
% \end{restatable}

\begin{definition}
  A price $t$ is \emph{implementable} if there exists a standard prominence mechanism under which the price profile $\mathbf p = (t, \ldots, t)$ is an equilibrium.
  Let $T(c)$ denote the set of implementable prices when the inspection cost is~$c$.
  % constitutes an equilibrium under some Buy Box mechanism. Given the inspection cost level fixed as $c$, we denote the set of such implementable prices as $$T(c) := \{t : t \text{ is implementable when inspection cost is } c\}.$$
\end{definition}

By Theorem~\ref{thm:dict}, to characterize $T(c)$, it suffices to characterize equilibria under the dictator mechanisms.
$T(c)$, when non-empty, turns out to be always an interval, with endpoinds best expressed as the extreme values of two functions.
The following demand function for deviations is crucial in expressing these endpoints.

\begin{restatable}{lemma}{lemmaDcxExpression}[Demand for a Seller Deviating from the Symmetric Equilibrium under the Dictator Mechanism]
    \label{lemma:D_c_independent_of_t}    
    For \( c \in (0, \bar{c}) \), consider a symmetric equilibrium price \( t \). If a seller deviates to a price \( p \), then their demand, expressed as a function of the amount of deviation $x := p - t$, is  
    % Specifically, the demand after deviating to $p$ is
    % \( \D_c : \mathbb{R} \to [0, 1] \), which maps the price difference \( x = p - t \in \mathbb{R} \) to demand. Specifically, the demand function \( \D_c(x) \) is defined as
    % \begin{align}
    %     \D_c(x) := \begin{cases}
    %         1 
    %         &\text{ if } x \le -1\\
    %         \int_V^{\theta_0(c) -x} \int_{\max(\theta_0(c) + x, v_2 + x)}^{V + 1} \, \dd F(v_1)\, \dd F(v_2) \\
    %         \quad + \int_V^{\theta_0(c) -x} \int_{v_2 + x}^{\max(\theta_0(c) + x, v_2 + x)} F^{m-2}(v_1 - x)\, \dd F(v_1)\, \dd F(v_2) 
    %         & \text{ if } -1 < x < 0\\
    %         \int_V^{\theta_0(c) - x} \left(1 - F(v + x)\right) \, \dd F^{m-1}(v) 
    %         & \text{ if }  0 \le x < \theta_0(c) - V\\
    %         0 & \text{ if } \theta_0(c) - V < x
    %     \end{cases}
    % \end{align}
    \begin{align}
        \D_c(x) := \begin{cases}
	  1, \quad \text{if } x \le -1, \\[8pt]
            \int_V^{\theta_0(c) - x} \left[ \int_{\max(\theta_0(c) + x, v_2 + x)}^{V + 1} 
            \, \dd F(v_1)\, \dd F(v_2) \right.\\
            + \left.\int_{v_2 + x}^{\max(\theta_0(c) + x, v_2 + x)} 
	    F^{m-2}(v_1 - x)\, \dd F(v_1)\,\right] \dd F(v_2), \\
            \quad \quad  \text{if } -1 < x < 0, \\[8pt]
	    \int_V^{\theta_0(c) - x} \left(1 - F(v + x)\right) \, \dd F^{m-1}(v), \\
            \quad \quad \text{if }  0 \le x < \theta_0(c) - V, \\[8pt]
	    0, \quad \text{if } \theta_0(c) - V \leq x.
        \end{cases}
    \end{align}
\end{restatable}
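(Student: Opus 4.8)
The plan is to read the deviating seller's purchase probability off Weitzman's index policy --- using the amortized‑value (``$\kappa$'') description of \citet{kleinberg2016descending} --- and then to observe that the resulting expression does not involve $t$.

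\emph{Set‑up.} Under the Dictator‑$t$ mechanism a seller who deviates to any $p\ne t$ loses prominence, so exactly one of the other $m-1$ sellers, all still posting $t$, becomes prominent, uniformly at random; since the values are i.i.d.\ it is irrelevant which one, so I fix one as ``the prominent seller'' and the other $m-2$ as ``ordinary competitors''. Thus I work with: the deviator at price $p=t+x$ and inspection cost $c$; one prominent seller at price $t$ and inspection cost $0$; and $m-2$ ordinary competitors at price $t$ and inspection cost $c$. By Proposition~\ref{prop:define_theta_0} the deviator's Weitzman index is $\theta_0(c)-(t+x)$ whenever that is nonnegative, i.e.\ $p\le\theta_0(c)$ (which holds automatically in the first three cases below), while the prominent seller, having no inspection cost, is simply known to the buyer. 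Hence the buyer purchases from the seller maximizing
\begin{equation*}
 \kappa_{\mathrm{dev}}=\min\!\big(v_{\mathrm{dev}},\theta_0(c)\big)-(t+x),\qquad
 \kappa_{\mathrm{pro}}=v_{\mathrm{pro}}-t,\qquad
 \kappa_j=\min\!\big(v_j,\theta_0(c)\big)-t\ \ (j\text{ ordinary}),
\end{equation*}
provided the maximum is nonnegative.

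\emph{Independence of $t$.} Every $\kappa$ above carries the term $-t$, and only the deviator's carries an extra $-x$, so the event ``$\kappa_{\mathrm{dev}}$ is the (a.s.\ unique, by continuity of $F$) maximizer'' is exactly
\begin{equation*}
 \min\!\big(v_{\mathrm{dev}},\theta_0(c)\big)-x\;\ge\;\max\!\Big(v_{\mathrm{pro}},\,\min\!\big(\,\textstyle\max_j v_j,\;\theta_0(c)\big)\Big),
\end{equation*}
which involves neither $t$ nor the mechanism (the only positive‑probability ties are between two ordinary competitors capped at $\theta_0(c)-t$, which never concern the deviator, whose cap sits at $\theta_0(c)-(t+x)$). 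The one remaining place $t$ could intrude is the constraint $\kappa_{\mathrm{dev}}\ge 0$; but at a symmetric equilibrium $t\le V$ (equivalently, the buyer purchases with probability one at $(t,\dots,t)$ --- the prominence analogue of Lemma~\ref{lemma:all_buy}, which I would establish first), and then, on the event that $\kappa_{\mathrm{dev}}$ is the maximizer, $\kappa_{\mathrm{dev}}\ge\kappa_{\mathrm{pro}}=v_{\mathrm{pro}}-t\ge V-t\ge 0$ for free. So the purchase probability equals the probability of the displayed event, a function of $c$ and $x$ alone; this is $\D_c(x)$.

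\emph{Evaluating the probability.} Writing $R=\max_j v_j$ for the maximum of the $m-2$ ordinary values, I would split into the four regimes of $x$.
\begin{itemize}
 \item $x\le-1$: the left side is $\ge V+1$ and the right side is $\le V+1$, so $\D_c(x)=1$.
 \item $x\ge\theta_0(c)-V$: the left side is $\le\theta_0(c)-x\le V$ and the right side is $\ge v_{\mathrm{pro}}\ge V$, so $\D_c(x)=0$ (and if moreover $x>\theta_0(c)-t$ the deviator's index is nonpositive and it is never inspected, so $\D_c(x)=0$ there as well).
 \item $0\le x<\theta_0(c)-V$: here $\min(v_{\mathrm{dev}},\theta_0(c))-x<\theta_0(c)$, so the competitor condition collapses to $\max(v_{\mathrm{pro}},R)\le\min(v_{\mathrm{dev}},\theta_0(c))-x$; conditioning on $v_{\mathrm{dev}}$, splitting at $v_{\mathrm{dev}}=\theta_0(c)$ and integrating gives $\int_V^{\theta_0(c)}F^{m-1}(v-x)\,\dd F(v)+\big(1-F(\theta_0(c))\big)F^{m-1}(\theta_0(c)-x)$, which an integration by parts turns into $\int_V^{\theta_0(c)-x}(1-F(v+x))\,\dd F^{m-1}(v)$.
 \item $-1<x<0$: now $\min(v_{\mathrm{dev}},\theta_0(c))-x$ may exceed $\theta_0(c)$, so the competitor constraint no longer reduces to a single order statistic; conditioning on $v_{\mathrm{pro}}$ and on $v_{\mathrm{dev}}$ and separating the sub‑regimes $v_{\mathrm{dev}}\ge\theta_0(c)$, $\theta_0(c)+x\le v_{\mathrm{dev}}<\theta_0(c)$, and $v_{\mathrm{dev}}<\theta_0(c)+x$ produces the stated nested integral with break point $\max(\theta_0(c)+x,v_{\mathrm{pro}}+x)$; because the inner integrations are against $\dd F$, any limit below $V$ drops out, so no explicit truncation is needed.
\end{itemize}

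\emph{The hard part.} The $-1<x<0$ case is where the effort concentrates: the prominent seller's amortized value $v_{\mathrm{pro}}-t$ is \emph{not} capped at $\theta_0(c)-t$ (it has zero inspection cost), while the deviator's and the ordinary competitors' are, so the ``best competitor'' is not a function of a single maximum. One must handle the prominent seller separately from the $m-2$ ordinary ones, track which of the three value regimes $v_{\mathrm{dev}}$ lies in, and verify that the resulting region integrals reassemble precisely into the two inner integrals of the statement. I also expect the $t$‑independence step to hinge on first pinning down the equilibrium bound $t\le V$, which is not itself contained in the excerpt.
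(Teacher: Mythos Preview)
Your approach is correct and reaches the same formulas as the paper, but by a different organizing device. The paper traces the buyer's sequential search operationally: for each range of $x$ it lists, in order, the conditions under which the deviator is inspected, under which search stops (or continues) thereafter, and under which the buyer finally buys from the deviator; for $-1<x<0$ this yields two mutually exclusive scenarios (stop right after the deviator, versus inspect everyone and then buy), and the two inner integrals in the statement are exactly their contributions; for $0\le x<\theta_0(c)-V$ the paper conditions on $\max_{i\ge 2}v_i$, which gives the stated formula in one line. You instead use the amortized-value description and write the single event $\{\kappa_{\mathrm{dev}}=\max_i\kappa_i\}$, conditioning on $v_{\mathrm{dev}}$; your three sub-regimes for $v_{\mathrm{dev}}$ relative to $\theta_0$ and $\theta_0+x$ recover the paper's two scenarios, and in the $0\le x$ case your choice of conditioning variable costs you an extra integration by parts to match. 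What your framing buys is a one-line proof of $t$-independence --- every $\kappa$ carries a common $-t$, which cancels in the $\arg\max$ --- whereas the paper simply computes and observes at the end that $t$ is absent. Your flag that one must first establish $t\le V$ is correct; the paper supplies this separately as Lemma~\ref{lemma:lemma_1_extension_for_Buy_Box}.
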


\begin{restatable}{theorem}{thmTc}
    \label{theorem:m_seller_general_T(c)}
    For $c \in (0, \bar c)$, 
    %if $t^*(c) \leq \bar t(c)$, where 
    let $t^*(c) := \sup_{x > 0} \left\{\frac{x \D_c(x)}{\frac1m - \D_c(x)} \right\}$
    \newline
    and 
    $\bar t(c) := \inf_{x < 0} \left\{\left(\frac{(-x) \D_c(x)}{\D_c(x) - \frac1m}\right)^+\right\}\setminus \{0\}$.
    The set of implementable prices $T(c)$, is non-empty if and only if $t^*(c) \leq \bar t(c)$.
    When non-empty, $T(c)$ is the closed interval $[t^*(c), \bar t(c)]$.
    % determined by the inspection cost $c$: %, characterized w.r.t inspection cost $c$:
    % , no standard mechanism admits any symmetric equilibrium.
\end{restatable}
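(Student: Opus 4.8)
The plan is to reduce, via Theorem~\ref{thm:dict}, to analyzing equilibria of the Dictator-$t$ mechanisms. First I would note that every Dictator-$t$ is itself standard in the sense of Definition~\ref{def:standard}: it is manifestly anonymous, and it is ``always allocating in equilibrium'' because if all sellers post a common price $p\neq t$ then no seller is prominent, so the game coincides with the plain presentation, which has no symmetric equilibrium (Proposition~\ref{proposition_no_symmetric}, using $\theta_0>V$). Hence a price $t$ is implementable if and only if $\mathbf p=(t,\dots,t)$ is an equilibrium of Dictator-$t$, and it remains to characterize exactly such $t$.

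Second, I would pin down the equilibrium revenue. Under Dictator-$t$, at $(t,\dots,t)$ every seller is prominent with probability $\tfrac1m$ by anonymity. I claim any implementable price satisfies $t\le\tfrac{m}{m-1}\le V$ (this uses $V\ge2$): if $t>V$, a seller deviating to a deep undercut --- to price $t-1$ when $t-1\le\theta_0$, and to price $V$ otherwise --- has a Weitzman index and a value $\kappa_i:=\min(v_i-p_i,\theta_i)$ (the quantity from the index-policy characterization, by which the buyer buys from the seller maximizing $\kappa_i$ when this maximum is nonnegative) that weakly dominate those of every competitor at every value realization; so this deviant is bought from whenever any sale occurs, and its revenue exceeds the equilibrium revenue (which is at most $t/m$, and is $0$ once $t>V+1$) since $t>V\ge2\ge\tfrac m{m-1}$ --- a contradiction. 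Thus $t\le V$, whence the prominent seller always offers positive utility $v_0-t$, the buyer always purchases, and by symmetry each seller's equilibrium demand is exactly $\tfrac1m$ and the equilibrium revenue is exactly $t/m$.

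Third, fixing $t\le V$, the equilibrium condition becomes: for every deviation price $p$, with $x:=p-t$, the deviant's revenue $(t+x)\,\D_c(x)\le t/m$, where $\D_c$ is the deviation demand of Lemma~\ref{lemma:D_c_independent_of_t}. Splitting on the sign of $x$: for $x>0$ one has $\D_c(x)<\tfrac1m$ strictly --- the deviant is weakly more expensive than, and searched strictly after, each competitor, so its value $\kappa$ lies strictly below each competitor's at every realization, and a value-swapping coupling forces its sale probability to be strictly below that of each of the other $m-1$ sellers, whose probabilities sum to at most $1$ --- so $(t+x)\D_c(x)\le t/m$ rearranges to $t\ge\tfrac{x\D_c(x)}{1/m-\D_c(x)}$, and since $\D_c(x)=0$ for $x\ge\theta_0-V$ the supremum over $x>0$ is exactly $t^*(c)$; for $x<0$ the same rearrangement yields $t\le\tfrac{(-x)\D_c(x)}{\D_c(x)-1/m}$ when $\D_c(x)>\tfrac1m$ and an empty constraint when $\D_c(x)\le\tfrac1m$ (since then $(t+x)\D_c(x)\le t\D_c(x)\le t/m$), which is exactly what the ``$(\cdot)^+$'' and ``$\setminus\{0\}$'' in the definition of $\bar t(c)$ encode, so the infimum constraint is $t\le\bar t(c)$. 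Hence $(t,\dots,t)$ is an equilibrium iff $t^*(c)\le t\le\bar t(c)$, so $T(c)\neq\emptyset$ iff $t^*(c)\le\bar t(c)$.

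Finally, to see $T(c)$ is the \emph{closed} interval $[t^*(c),\bar t(c)]$: $\D_c$ is continuous on $(0,\theta_0-V)$ with $\D_c(x)\to0$ as $x\uparrow\theta_0-V$ and $\tfrac{x\D_c(x)}{1/m-\D_c(x)}\to0$ as $x\downarrow0$, so the supremum defining $t^*(c)$ is attained and the overcutting inequalities hold (weakly) at $t=t^*(c)$; likewise the rearranged undercutting inequalities hold at $t=\bar t(c)$ by definition of the infimum, and $\bar t(c)\le\tfrac m{m-1}$ is finite because $\D_c(x)=1$ for $x\le-1$. So both endpoints are implementable whenever $t^*(c)\le\bar t(c)$. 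The step I expect to be the main obstacle is controlling the off-model regime $t>V$ --- equivalently, justifying that the equilibrium demand is the clean value $\tfrac1m$ rather than $\tfrac{1-F(t)^m}{m}$: this needs the ``deep undercut seizes the demand'' argument with its small case split on the position of $t-1$ relative to $\theta_0$ and $V+1$. The remaining ingredients --- the strict bound $\D_c(x)<\tfrac1m$ for $x>0$ via the coupling, and the matching of the $(\cdot)^+/\setminus\{0\}$ conventions to vacuous constraints --- are routine once set up.
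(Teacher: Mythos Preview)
Your approach is correct and essentially the same as the paper's: reduce via Theorem~\ref{thm:dict} to the Dictator-$t$ mechanism, write the no-deviation condition $(t+x)\D_c(x)\le t/m$, and split by the sign of~$x$. You supply details the paper glosses over---verifying that Dictator-$t$ is itself standard, arguing $t\le V$ so the equilibrium revenue is exactly $t/m$ (the paper delegates this to Lemma~\ref{lemma:lemma_1_extension_for_Buy_Box}), and justifying $\D_c(x)<\tfrac1m$ for $x>0$ where the paper merely asserts ``it's evident''---with only two cosmetic slips: your deep-undercut argument establishes $t\le V$, not the stronger $t\le m/(m-1)$ you announce (though the latter does follow once you have $\bar t(c)\le m/(m-1)$), and the phrase ``$\kappa$ lies strictly below each competitor's at every realization'' should be read as domination of the $\kappa$-\emph{functions} (same value plugged into both roles), which is precisely what makes the value-swap coupling yield $\Pr[\text{deviant sells}]<\Pr[\text{competitor $j$ sells}]$ for each~$j$.
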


\begin{restatable}{Example}{exampleTc}
    As an example, for the uniform distribution $F = U[2, 3]$ and two sellers, we work out this interval.
    For this distribution, $\bar c = \frac 1 2$.
    For $c \in (0, \bar c )$,
    % we show $t^*(c)$ and $\bar t(c)$ can be calculated in the example of $v\sim \text{U}[2, 3]$ with two sellers. For $c\in (0, \bar c)$, where $\bar c = \frac12$:
\begin{align*}
    t^*(c) 
    & = \frac{\sqrt{-((2\sqrt{2c} - 2c) + \frac12) + \sqrt{2(2\sqrt{2c} - 2c) + \frac14}}}{\frac12\sqrt{2(2\sqrt{2c} - 2c) + \frac14} - 1}\\
    & \quad \times \frac12\left(\frac32 - \sqrt{2(2\sqrt{2c} - 2c) + \frac14}\right)\\
    \bar t(c) & = 2.
\end{align*}
For any $c\in (0, \bar c)$, $t^*(c) <\bar t(c)$. 
Therefore for all small enough $c$, the dictator mechanism admits a pure equilibrium.

% the Buy Box can support any equilibrium price in $[t^*(c), \bar t(c)]$ when inspection cost is $c$.
\end{restatable}

\begin{restatable}{proposition}{propTStarProperties} % [Properties of lowest implementable price]
    \label{prop:t_star_properties}
    % The lowest price implementable by standard mechanism---$t^*(c)$---
    For $c \in (0, \bar c)$, $t^*(c)$ is {continuous} and {monotone decreasing} in~$c$.
    Moreover, % at the end points of $(0, \bar c)$
    \begin{enumerate}[(i)]
        \item $\lim_{c \to \bar c^-}t^*(c) = 0$;
        \label{convergence_0}
      \item $\lim_{c\to 0^+}t^*(c) = t_0$, if a symmetric equilibrium $(t_0, \ldots, t_0)$ exists for the monopolistic competition without inspection cost.
	%(i.e. $c = 0$).
        \label{convergence_bar_c}
    \end{enumerate}
\end{restatable}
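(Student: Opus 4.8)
The plan is to analyze $t^*(c)$ through the function $g_c(x):=x\D_c(x)/(\tfrac1m-\D_c(x))$, so that $t^*(c)=\sup_{x>0}g_c(x)$. Two preliminary facts will be used throughout. (a) By Lemma~\ref{lemma:D_c_independent_of_t}, $\D_c(x)$ depends on $c$ only through $\theta_0(c)$, and $\theta_0(\cdot)$ is continuous and strictly decreasing on $(0,\bar c)$ with $\theta_0(c)\to V+1$ as $c\to0^+$ and $\theta_0(c)\to V$ as $c\to\bar c^-$ (invert the continuous strictly decreasing map $\theta\mapsto\Ex{[v-\theta]^+}$). (b) For every $x>0$ and $c\in(0,\bar c)$ one has $\D_c(x)<\tfrac1m$, since $\D_c(\cdot)$ is decreasing and $\D_c(0)=\int_V^{\theta_0(c)}(1-F(v))\,\dd F^{m-1}(v)<\int_V^{V+1}(1-F(v))\,\dd F^{m-1}(v)=\tfrac1m$ because $\theta_0(c)<V+1$. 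Hence $g_c$ is finite, vanishes at $x=0$ and at $x=\theta_0(c)-V$, is continuous and positive between them, and attains its maximum at some interior $x^*(c)\in(0,\theta_0(c)-V)$. For \emph{monotonicity}: for fixed $x>0$ the integrand defining $\D_c(x)$ is nonnegative and its upper limit $\theta_0(c)-x$ decreases in $c$, so $c\mapsto\D_c(x)$ is non-increasing, and strictly decreasing wherever positive (using that $F$ has full support on $[V,V+1]$); since $u\mapsto xu/(\tfrac1m-u)$ is strictly increasing on $[0,\tfrac1m)$, each $c\mapsto g_c(x)$ is non-increasing, hence $t^*(c)=\sup_x g_c(x)$ is non-increasing, and strictly so because $t^*(c_1)\ge g_{c_1}(x^*(c_2))>g_{c_2}(x^*(c_2))=t^*(c_2)$ for $c_1<c_2$. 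For \emph{continuity}: $(c,x)\mapsto g_c(x)$ is jointly continuous on $\{c\in(0,\bar c),\,0\le x\le\theta_0(c)-V\}$ and $c\mapsto[0,\theta_0(c)-V]$ is a continuous compact-valued correspondence, so $t^*(c)=\max_x g_c(x)$ is continuous by Berge's maximum theorem.

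For part (i), as $c\to\bar c^-$ we have $\theta_0(c)\to V$. For $0<x<\theta_0(c)-V$, bounding $x\D_c(x)\le(\theta_0(c)-V)\cdot\tfrac1m$ in the numerator and $\tfrac1m-\D_c(x)\ge\tfrac1m-\D_c(0)$ in the denominator gives $g_c(x)\le\bigl((\theta_0(c)-V)/m\bigr)/\bigl(\tfrac1m-\D_c(0)\bigr)$, and the same bound holds trivially for $x\ge\theta_0(c)-V$, where $g_c=0$. Since $\D_c(0)\le F^{m-1}(\theta_0(c))\to F^{m-1}(V)=0$, we conclude $0\le t^*(c)\le\bigl((\theta_0(c)-V)/m\bigr)/\bigl(\tfrac1m-\D_c(0)\bigr)\to0$.

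For part (ii), let $D^0(x)=\Prx{v_i\ge\max_{j\ne i}v_j+x,\ v_i\ge t_0+x}$ be the demand of a seller deviating to $t_0+x$ against $m-1$ rivals priced at $t_0$ in the frictionless model; then $D^0$ is decreasing, $D^0(0)=\tfrac1m$, and the hypothesis that $(t_0,\dots,t_0)$ is an equilibrium gives $(t_0+x)D^0(x)\le t_0/m$, i.e.\ $xD^0(x)/(\tfrac1m-D^0(x))\le t_0$, for all $x>0$. Two observations then finish the proof. (1) $\D_c(x)\le D^0(x)$ for every $x>0$ and $c\in(0,\bar c)$: a deviator who raises her price forfeits prominence, so lowering all inspection costs to zero can only enlarge her demand; concretely, in the relevant regime $t_0\le V$, the integral for $\D_c(x)$ has the same nonnegative integrand as that for $D^0(x)$ but over the shorter range $[V,\theta_0(c)-x]\subseteq[V,V+1-x]$. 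By monotonicity of $u\mapsto xu/(\tfrac1m-u)$, this yields $g_c(x)\le xD^0(x)/(\tfrac1m-D^0(x))\le t_0$, hence $t^*(c)\le t_0$ for all $c$, so $\limsup_{c\to0^+}t^*(c)\le t_0$. (2) As $c\to0^+$, $\theta_0(c)\to V+1$, so $\D_c(x)\to D^0(x)$ and thus $g_c(x)\to xD^0(x)/(\tfrac1m-D^0(x))$ for each fixed $x>0$; hence $\liminf_{c\to0^+}t^*(c)\ge xD^0(x)/(\tfrac1m-D^0(x))$ for every $x>0$, and letting $x\to0^+$, using $\lim_{x\to0^+}xD^0(x)/(\tfrac1m-D^0(x))=t_0$ — which is precisely the frictionless first-order condition $\tfrac1m=-t_0\,(D^0)'(0)$ — gives $\liminf_{c\to0^+}t^*(c)\ge t_0$. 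Combining (1) and (2), $\lim_{c\to0^+}t^*(c)=t_0$. The one delicate point is observation (2) near $x=0$: in the frictionless limit the supremum defining $t_0$ is not attained but only approached as $x\to0^+$, where the deviator's revenue curve is tangent to the equilibrium-revenue line, leaving no uniform slack; this is exactly what the pointwise domination $\D_c(x)\le D^0(x)$ (which gives $t^*(c)\le t_0$ outright) and the first-order-condition identification of $\lim_{x\to0^+}g_0(x)$ are designed to circumvent.
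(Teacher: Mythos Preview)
Your proof is correct and follows essentially the same strategy as the paper's: the monotonicity argument via $c\mapsto\D_c(x)$ being non-increasing is identical, the continuity argument is the same idea (joint continuity of $\phi(c,x)$ plus compactness of the domain) though you invoke Berge's theorem where the paper hand-waves, the bounding in part~(i) is the same, and part~(ii) hinges in both cases on identifying $\lim_{x\to0^+}g_0(x)=t_0$ via the frictionless first-order condition. Your organization of~(ii) is slightly cleaner than the paper's: rather than first establishing $\sup_x\phi(0,x)=t_0$ and then appealing to continuity of $t^*$ at $c=0$, you use the domination $\D_c(x)\le D^0(x)$ to get $t^*(c)\le t_0$ outright (limsup bound) and pointwise convergence $g_c(x)\to g_0(x)$ for the liminf bound, which sidesteps any delicacy about extending continuity to the endpoint $c=0$.
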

     
\begin{restatable}{proposition}{propTBarProperties}% [Properties of highest implementable price]
    \label{prop:t_bar_properties}
    % The highest price implementable by standard mechanism---$\bar t(c)$---
    When there are two sellers, for $c \in (0, \bar c)$, $\bar t(c)$ is continuous and monotone increasing in $c$.
\end{restatable}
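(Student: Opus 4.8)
\emph{Proof plan.}
The plan is to isolate two structural properties of the deviation demand $\D_c$ and then assemble them. Recall $\theta_0(c)$ solves $\mathbb E[v-\theta_0]^+=c$; the left side is continuous and strictly decreasing in $\theta_0\in[V,V+1]$, so $c\mapsto\theta_0(c)$ is a continuous, strictly decreasing bijection of $(0,\bar c)$ onto $(V,V+1)$, and $\D_c$ depends on $c$ only through $\theta_0(c)$. Specializing Lemma~\ref{lemma:D_c_independent_of_t} to $m=2$ (so $F^{m-2}\equiv1$), the $-1<x<0$ piece collapses to $\D_c(x)=\int_V^{\theta_0(c)-x}\bigl(1-F(v+x)\bigr)\,\dd F(v)$, together with $\D_c(x)=1$ for $x\le-1$. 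From this I would record three facts: (i) $(c,x)\mapsto\D_c(x)$ is jointly continuous on $(0,\bar c)\times(-\infty,0)$ (dominated convergence); (ii) for each fixed $x\in(-1,0)$, $c\mapsto\D_c(x)$ is nonincreasing, since larger $c$ means smaller $\theta_0(c)$ and hence a shorter range of integration of the nonnegative integrand $1-F(v+x)$; (iii) for each fixed $c$, $x\mapsto\D_c(x)$ is decreasing on $(-1,0)$ (using that $F$ has full support on $[V,V+1]$), with limit $\mathbb P[\min(v_1,\theta_0(c))>v_2]<\tfrac12$ as $x\to0^-$ (strict because $\theta_0(c)<V+1$), while $\D_c(x)\equiv1$ on $(-\infty,-1]$.

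Put $\phi_c(x):=\frac{(-x)\D_c(x)}{\D_c(x)-\tfrac12}$ on $S_c:=\{x<0:\D_c(x)>\tfrac12\}$; one checks that the paper's definition of $\bar t(c)$ equals $\inf_{x\in S_c}\phi_c(x)$, the $(\cdot)^+$ and the removal of $0$ merely discarding the $x$ with $\D_c(x)<\tfrac12$ (where the ratio is negative). By (iii), $S_c=(-\infty,x_0(c))$ with $x_0(c)\in(-1,0)$ the unique crossing point $\D_c(x_0(c))=\tfrac12$, and $x_0(\cdot)$ is continuous and nonincreasing by (i)--(iii). For \textbf{monotonicity} the crucial elementary observation is that, for fixed $x<0$, the map $\D\mapsto\frac{(-x)\D}{\D-1/2}=(-x)+\frac{(-x)/2}{\D-1/2}$ is strictly decreasing on $(\tfrac12,1]$ and bounded below there by $2(-x)$. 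Hence if $c_1<c_2$: by (ii) $S_{c_2}\subseteq S_{c_1}$, and for $x\in S_{c_2}$ we have $\tfrac12<\D_{c_2}(x)\le\D_{c_1}(x)$, so $\phi_{c_2}(x)\ge\phi_{c_1}(x)$; therefore $\bar t(c_2)=\inf_{S_{c_2}}\phi_{c_2}\ge\inf_{S_{c_2}}\phi_{c_1}\ge\inf_{S_{c_1}}\phi_{c_1}=\bar t(c_1)$. (If strictness is wanted, note $\phi_c\to\infty$ both as $x\to-\infty$ and as $x\to x_0(c)^-$, so the infimum defining $\bar t(c_2)$ is attained at some $x^\ast\in S_{c_2}$; if $x^\ast$ avoids the degenerate region $\theta_0(c_2)-x^\ast>V+1$ then $\D_{c_2}(x^\ast)<\D_{c_1}(x^\ast)$ strictly, giving $\bar t(c_2)=\phi_{c_2}(x^\ast)>\phi_{c_1}(x^\ast)\ge\bar t(c_1)$.)

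For \textbf{continuity} I would fix $c_0$ and prove upper and lower semicontinuity. Upper semicontinuity is routine: given $\eps>0$, pick $x\in S_{c_0}$ with $\phi_{c_0}(x)<\bar t(c_0)+\eps$; since $\D_{c_0}(x)>\tfrac12$ strictly, (i) gives $x\in S_c$ and $\phi_c(x)\to\phi_{c_0}(x)$ as $c\to c_0$, whence $\limsup_{c\to c_0}\bar t(c)\le\phi_{c_0}(x)<\bar t(c_0)+\eps$. \emph{The main obstacle is lower semicontinuity}, because $S_c$ is unbounded below, moves with $c$, and $\phi_c$ blows up at its right endpoint. The plan: take $c_n\to c_0$ with $\bar t(c_n)\to\ell$ (finite, as $\bar t(c)\le\phi_c(-1)=2$) and near-minimizers $x_n\in S_{c_n}$ with $\phi_{c_n}(x_n)\le\bar t(c_n)+\tfrac1n$. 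The bound $\phi_{c_n}(x_n)\ge2(-x_n)$ keeps $(-x_n)$ bounded, while $x_n<x_0(c_n)\to x_0(c_0)<0$ keeps $x_n$ bounded away from $0$; pass to a subsequence with $x_n\to x^\ast\in[-(\ell+1),x_0(c_0)]$. If $x^\ast=x_0(c_0)$ then $\D_{c_n}(x_n)\to\D_{c_0}(x_0(c_0))=\tfrac12$ from above, forcing $\phi_{c_n}(x_n)\to\infty$, a contradiction; hence $x^\ast\in S_{c_0}$, and joint continuity gives $\ell=\lim_n\phi_{c_n}(x_n)=\phi_{c_0}(x^\ast)\ge\bar t(c_0)$. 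Combined with upper semicontinuity this yields continuity. (Equivalently one can invoke Berge's maximum theorem for the continuous correspondence $c\mapsto[-1,x_0(c)]$; the point is only to control the blow-up of $\phi_c$ at the moving endpoint $x_0(c)$, which the argument above does by hand.)
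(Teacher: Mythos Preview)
Your monotonicity argument is essentially the same as the paper's: both reduce to showing that $c\mapsto\D_c(x)$ is nonincreasing for $x<0$, and then exploit the identity $\frac{(-x)\D}{\D-\tfrac12}=(-x)+\frac{(-x)/2}{\D-\tfrac12}$, which is decreasing in $\D$ on $(\tfrac12,1]$. The only cosmetic difference is how the first step is carried out. The paper keeps the general-$m$ formula for $\D_c$, differentiates in $\theta$, and finds that for $m=2$ the derivative simplifies to $h_x'(\theta)=(1-F(\theta))f(\theta-x)\ge0$, whence $\partial_c\D_c(x)=\theta_0'(c)\,h_x'(\theta_0(c))\le0$. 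You instead first specialize to $m=2$ so that $F^{m-2}\equiv1$ collapses the two inner integrals into the single expression $\D_c(x)=\int_V^{\theta_0(c)-x}(1-F(v+x))\,\dd F(v)$, from which the monotonicity in $c$ is read off directly from the shrinking upper limit. Your route is a bit shorter and avoids differentiation, but the content is the same.

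Where your write-up goes beyond the paper is continuity: the paper's proof only establishes monotonicity and does not address continuity at all. Your treatment of continuity is careful and correct. In particular, your handling of the lower-semicontinuity obstacle (the infimum is taken over a $c$-dependent, unbounded domain, and $\phi_c$ blows up at the moving right endpoint $x_0(c)$) is the right one: the universal lower bound $\phi_c(x)\ge2(-x)$ traps near-minimizers in a compact set, and the joint continuity $\D_{c_n}(x_n)\to\D_{c_0}(x^\ast)$ rules out $x^\ast=x_0(c_0)$ by forcing the denominator to vanish. This is a genuine addition to what the paper supplies.
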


\begin{restatable}{corollary}{corrNested}
    \label{corollary:nested}
    For a market with two sellers, if $0 < c < c' < \bar c$, then $T(c) \subseteq T(c')$.
    
    % \note{For $m$ sellers with $m > 2$. $t^*(c)$ monotonically decreases w.r.t. $c$. $\bar t(c)$ doesn't, though.}
\end{restatable}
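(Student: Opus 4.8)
The plan is to combine the interval characterization of $T(c)$ from Theorem~\ref{theorem:m_seller_general_T(c)} with the monotonicity of the two endpoint functions established in Propositions~\ref{prop:t_star_properties} and~\ref{prop:t_bar_properties}. Recall that, when non-empty, $T(c) = [t^*(c), \bar t(c)]$, and that by Theorem~\ref{theorem:m_seller_general_T(c)} non-emptiness of $T(c)$ is equivalent to $t^*(c) \le \bar t(c)$. So the entire task reduces to showing that the left endpoint is non-increasing in $c$, the right endpoint is non-decreasing in $c$, and that non-emptiness is inherited when we pass from $c$ to the larger cost $c'$.

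First I would dispose of the trivial case: if $T(c) = \emptyset$, then $T(c) \subseteq T(c')$ holds vacuously, so we may assume $T(c) \neq \emptyset$, i.e.\ $t^*(c) \le \bar t(c)$. Since $0 < c < c' < \bar c$, Proposition~\ref{prop:t_star_properties} gives that $t^*$ is monotone decreasing, hence $t^*(c') \le t^*(c)$; and Proposition~\ref{prop:t_bar_properties} (this is exactly where the two-seller hypothesis is used) gives that $\bar t$ is monotone increasing, hence $\bar t(c) \le \bar t(c')$. Chaining these with the assumed inequality $t^*(c) \le \bar t(c)$ yields
\[
t^*(c') \;\le\; t^*(c) \;\le\; \bar t(c) \;\le\; \bar t(c').
\]
In particular $t^*(c') \le \bar t(c')$, so by Theorem~\ref{theorem:m_seller_general_T(c)} the set $T(c')$ is non-empty and equals the closed interval $[t^*(c'), \bar t(c')]$. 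The displayed inequalities then give the inclusion of intervals $[t^*(c), \bar t(c)] \subseteq [t^*(c'), \bar t(c')]$, i.e.\ $T(c) \subseteq T(c')$, as claimed.

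The corollary is thus essentially immediate once the two monotonicity statements are in hand; the only mild subtlety, handled by the chained inequality above, is to verify that non-emptiness of $T$ is preserved when moving to the larger cost. Consequently there is no real obstacle inside the proof of the corollary itself — the genuine work lives in Propositions~\ref{prop:t_star_properties} and~\ref{prop:t_bar_properties}. In particular, monotonicity of $\bar t$ is the ingredient special to $m = 2$ (and indeed the reason the corollary is stated only for two sellers), so I would expect that, rather than anything in this corollary, to be where the technical effort is concentrated.
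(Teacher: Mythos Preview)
Your proposal is correct and matches the paper's approach: the corollary is stated immediately after Propositions~\ref{prop:t_star_properties} and~\ref{prop:t_bar_properties} precisely because it follows from them together with the interval characterization of Theorem~\ref{theorem:m_seller_general_T(c)}, exactly as you argue. The paper does not even spell out a separate proof, treating it as immediate from the two monotonicity results.
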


An extension of Lemma~\ref{lemma:all_buy} to the setting with prominence plays an important role in these characterizations.
% One reason we need the assumption $c < \bar c$ is that this extension is true only under this assumption.

\begin{restatable}{lemma}{lemmaAllBuyBB}
    \label{lemma:lemma_1_extension_for_Buy_Box}
    % Let \( F \) be a distribution with support in \( [V, V + 1] \), where \( V \geq 2 \). 
    If $V \geq 2$ and \( c \leq \bar{c} \), then under any equilibrium of any prominence mechanism, the buyer always makes a purchase.
\end{restatable}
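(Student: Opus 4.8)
The plan is to argue by contradiction. Suppose $\mathbf p$ is an equilibrium of some prominence mechanism at which the buyer fails to make a purchase with probability $q>0$; I will produce a profitable deviation. First I locate the ``bad'' event. By \citet{kleinberg2016descending}, the buyer fails to purchase exactly when $\kappa_i<0$ for every seller $i$. The prominent seller $j$ (inspection cost $0$) has Weitzman index $(V+1-p_j)^+\ge v_j-p_j$ (as $v_j\le V+1$), so $\kappa_j=v_j-p_j$; hence the buyer can fail only when $v_j<p_j$, which occurs with positive probability only if $p_j>V$. Symmetrically, since $\theta_0\ge V$ (equivalent to $c\le\bar c$), Proposition~\ref{prop:define_theta_0} gives that any seller pricing at $p_i\le V$ has nonnegative index and $\kappa_i=\min(v_i-p_i,\theta_0-p_i)\ge\min(0,\theta_0-V)=0$; so if any seller prices at most $V$, the buyer always purchases. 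Therefore, in the bad case every seller prices strictly above $V$, and in fact within $(V,V+1)$: a seller pricing at $p_i\ge V+1$ never sells (prominent: $v_i<V+1\le p_i$; non-prominent: index $\theta_0-p_i<0$, using $\theta_0<V+1$, so it is never inspected), hence has zero revenue, yet could deviate to price $V$ for positive revenue (with positive probability every competitor's $\kappa$ is negative while its own is $\ge0$).

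The key deviation is a one-unit undercut. Let $p_{\min}:=\min_i p_i\in(V,V+1)$. Any seller can deviate to the price $p_{\min}-1\in(V-1,V)\subseteq(0,\theta_0)$ and win the buyer outright. At that price the deviator's index is positive (it equals $\theta_0-p_{\min}+1$ if the deviator is not prominent, $V+2-p_{\min}$ if it is), and from $v_i\ge V$ its $\kappa$-value is at least $V+1-p_{\min}$; every other seller, still priced at some $p_k\in(V,V+1)$, has $\kappa_k\le V+1-p_k\le V+1-p_{\min}$ (whether or not $k$ is prominent, and even if $k$'s index is negative). So the deviator attains the largest $\kappa$, which is positive, and (ties having measure zero as $F$ is continuous) the buyer buys from it with probability $1$. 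Hence this deviation earns revenue $p_{\min}-1$, and since $\mathbf p$ is an equilibrium, every seller's equilibrium revenue is at least $p_{\min}-1\ge V-1\ge1$.

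Now the counting. Summing over the $m$ sellers, the total equilibrium revenue is at least $m(p_{\min}-1)$; on the other hand it equals $\sum_i p_iD_i(\mathbf p)\le(\max_i p_i)\sum_i D_i(\mathbf p)=(\max_i p_i)(1-q)<\max_i p_i<V+1$. Thus $m(p_{\min}-1)<V+1$, and since $p_{\min}>V$ this forces $m<\tfrac{V+1}{V-1}$ --- impossible when $V\ge3$. For a \emph{symmetric} equilibrium $(t,\dots,t)$ with $t>V$, the same idea is sharper: the average demand $\tfrac{1-q}{m}$ is below $\tfrac1m$, so some seller $i^*$ has $D_{i^*}(\mathbf p)<\tfrac1m$ and hence equilibrium revenue $tD_{i^*}<t/m$; combined with the previous paragraph this gives $t-1<t/m$, i.e.\ $t<\tfrac{m}{m-1}\le2\le V$, contradicting $t>V$. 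This resolves every symmetric equilibrium (all $V\ge2$, $m\ge2$) and every equilibrium with $V\ge3$ or $m\ge3$.

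The remaining case --- an asymmetric equilibrium with $m=2$ and $V\in[2,3)$ --- is the main obstacle, since the crude counting above only yields $2(p_{\min}-1)<p_{\max}<V+1$, which is consistent. I would close it by additionally bounding $p_{\max}-p_{\min}$ at equilibrium (a seller priced well above its lone competitor is easily undercut, so the two equilibrium prices cannot be far apart), and, if that does not suffice on its own, by invoking the derivative argument already used for Lemma~\ref{lemma:all_buy} and Theorem~\ref{theorem:no_BB_no_eq}: the two equilibrium conditions sandwich seller~$1$'s demand between explicit upper and lower bounds that agree (together with their derivatives) at the putative equilibrium, which pins down the demands precisely and contradicts $q>0$.
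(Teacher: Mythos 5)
Your overall strategy is the paper's: show that if some seller prices at or below $V$ the buyer always purchases (your Case-1 argument via the $\kappa_i$ characterization is a clean rephrasing of the paper's inspection-incentive argument, using $c\le\bar c$ in the same way), and rule out equilibria in which all prices exceed $V$ by the one-unit undercut to $p_{\min}-1$, which gives the deviator demand $1$ and hence forces every seller's equilibrium revenue to be at least $p_{\min}-1$, followed by a revenue-counting step. Those parts are correct. The genuine gap is exactly the case you flag and do not close: an asymmetric profile with $m=2$ and $V\in[2,3)$ in which both prices lie in $(V,V+1)$. There the counting only gives $2(p_{\min}-1)\le\max_i p_i\le p_{\min}+1$, which is consistent with $p_{\min}>V$, and neither of your proposed repairs is actually carried out. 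Sharpening ``$p_{\max}-p_{\min}$ is small'' is not developed (the bound $p_{\max}\le p_{\min}+1$ is already what the counting uses), and the sandwich/derivative argument of Lemma~\ref{lemma:all_buy} and Theorem~\ref{theorem:no_BB_no_eq} does not port as stated: it presupposes $D_1(\cdot)+D_2(\cdot)\equiv 1$ under the equilibrium prices \emph{and} under nearby deviations --- i.e.\ precisely the ``buyer always purchases'' property being proved --- and it requires the demands to be functions of the price difference alone, which fails under an arbitrary prominence mechanism because the allocation of prominence can change arbitrarily (even discontinuously) when one seller deviates. So, as written, your proposal establishes the lemma for all symmetric equilibria and for general equilibria only when $m\ge3$ or $V\ge3$; the remaining case is a sketch, not a proof.

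For comparison, the paper closes this step with the same deviation and the same counting, concluding from $m(p_m-1)\le p_m+1$ a numerical bound on $p_m$ that it asserts contradicts $p_m>V\ge2$; as displayed, that bound is $p_m\le\frac{m+1}{m-1}$, which yields a contradiction with $V\ge 2$ only for $m\ge3$ (for $m=2$ it gives $p_m\le 3$). In other words, the point at which you stalled is precisely the thinnest point of the paper's own argument, and an honest completion of the two-seller, $V\in[2,3)$ case would need a genuinely better deviation or demand bound than the unit undercut provides --- for instance, exploiting that the \emph{non-prominent} low-price seller's demand is small when both prices exceed $V$ (the buyer inspects it only when the prominent seller's realized utility falls below $\theta_0-p_{\min}<\theta_0-V$), which caps its equilibrium revenue well below the $p_{\min}-1$ it could secure by deviating. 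Your write-up, by contrast, leaves the case open.
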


% \paragraph{Remark.} (Comparison of Lemma~\ref{lemma:all_buy} and Lemma~\ref{lemma:lemma_1_extension_for_Buy_Box}) Lemma~\ref{lemma:all_buy} states that in plain presentation without a Buy-Box, for any price equilibrium, the smallest price satisfies $p_m < V$. To obtain a similar conclusion under the Buy-Box mechanism, it is required that the inspection cost $c \leq \bar{c}$ (Lemma~\ref{lemma:lemma_1_extension_for_Buy_Box}). We now show that this assumption 

The assumption $c < \bar c$ is necessary for this extension.
% , and hence the assumption throughout Section~\ref{sec:implementable}.
% The proof of Proposition~\ref{prop:high_price_eq} 
One may construct a prominence mechanism under which prices well above~$V$ are sustained as an equilibrium for sufficiently large~$c$.
% where, with a Buy-Box mechanism accompanied by a sufficiently high inspection cost, a mechanism can be designed such that prices above $V$ (even the monopoly price $p^\star := \arg\max_p \, p \, (1 - F(p))$) may be supported as equilibria (as long as inspection cost is high enough, in this case $c > \mathbb{E}[v] - \frac{p^\star}{m}$):

\begin{restatable}{proposition}{propHighPriceWhenCTooBig}% [High-price equilibrium for $c > \bar{c}$]
    \label{prop:high_price_eq}
    For sufficiently large $c$, there exists a prominence mechanism with an equilibrium at which prices are all higher than~$V$.
\end{restatable}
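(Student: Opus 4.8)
The plan is to exhibit a concrete prominence mechanism and a price $t > V$ that is an equilibrium under it, for all sufficiently large $c$ (still with $c < \bar c$ excluded, so actually we want $c$ close to or above $\bar c$; note the proposition does \emph{not} assume $\theta_0 > V$, so we are free to take $c \ge \bar c$, i.e.\ $\theta_0 \le V$). The natural candidate mechanism is a Dictator-$t$ mechanism with target price $t$ slightly above $V$: prominence goes to a seller iff they post exactly $t$, ties broken uniformly. At the symmetric profile $\mathbf p = (t,\dots,t)$ each seller is prominent with probability $\tfrac1m$ and earns $\tfrac{t}{m}$; I must show no unilateral deviation to a price $p \neq t$ is profitable. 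A deviating seller loses prominence entirely (it goes to one of the $m-1$ others at price $t$), so I need an upper bound on the demand $D_i(p, t, \dots, t)$ of a non-prominent seller facing $m-1$ prominent competitors all at price $t$, and I need $p \cdot D_i(p, t,\dots,t) \le \tfrac{t}{m}$ for every $p$.

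The key observation is that when $c$ is large, $\theta_0(c)$ is small; once $\theta_0(c) \le V$, Proposition~\ref{prop:define_theta_0} together with the index policy says a non-prominent seller who posts any price $p \ge 0$ has Weitzman index $\theta_0(c) - p \le V - p$, which is below $V$ whenever $p > 0$ — in fact below the value of \emph{any} realized prominent value $v_j \ge V$ minus a price-$t$ term as long as $t$ is not too large relative to $\theta_0(c)$. The cleanest route: first I would argue that for $c$ large enough, a non-prominent seller is \emph{never inspected} in the deviation profile, because every prominent seller $j$ is seen for free with value $v_j \ge V$, giving the buyer a baseline utility $v_j - t$; the buyer inspects the deviator only if the deviator's index $\theta_0(c) - p$ exceeds this baseline, i.e.\ $\theta_0(c) - p > \max_j (v_j - t) \ge V - t$. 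If I pick $t$ with $V < t$ and also ensure $\theta_0(c) < t$ — guaranteed once $c \ge \bar c$, since then $\theta_0(c) \le V < t$ — then $\theta_0(c) - p \le \theta_0(c) < t$, while the baseline $\max_j(v_j - t)$ can be as large as $V+1-t$; so whether inspection ever happens depends on the realization. To kill inspection entirely I would instead note that with $m \ge 2$ prominent sellers at price $t$, $\max_j v_j$ stochastically dominates a single draw, so the baseline is $\max_j v_j - t$; the deviator is inspected only when $\theta_0(c) + p < \theta_0(c) \le$ ... — this needs care, so the honest statement is that the deviator's demand is at most $\Pr[\theta_0(c) - p \ge \max_j v_j - t]$, which I can bound crudely by $\Pr[\max_j v_j \le \theta_0(c) - p + t] \le \Pr[\text{all } v_j \le \theta_0(c) + t - p]$, a quantity that is $0$ once $\theta_0(c) + t - p < V$, i.e.\ once $p > \theta_0(c) + t - V$. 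For $p$ in the remaining range $[0, \theta_0(c) + t - V]$, I bound $D_i \le F(\theta_0(c)+t-p)^{\,m-1}$ and then verify $p \cdot F(\theta_0(c)+t-p)^{m-1} \le t/m$: since $p \le \theta_0(c) + t - V$ and $F$ is a CDF on $[V, V+1]$, the product is at most $(\theta_0(c) + t - V)\cdot(\theta_0(c))^{m-1}$ (bounding $F(\cdot) \le \theta_0(c)$ using $F(\theta_0(c)+t-p) \le F(\theta_0(c)+t-p)$ and the support width $1$... ), which $\to 0$ as $c \to \infty$ since $\theta_0(c) \to 0$; hence for $c$ large it is below $t/m$ for our fixed $t > V$.

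The remaining case is a downward deviation to $p < 0$ or to $p \in [0,\dots]$ that I haven't covered, but negative prices only lower revenue (demand $\le 1$, revenue $\le p < 0$), and an upward deviation $p > t$ makes the deviator even less attractive, so demand is still bounded by the same expression with $p > t > V > \theta_0(c)$, forcing $F(\theta_0(c)+t-p) = 0$ and revenue $0 < t/m$. Assembling: choose any $t \in (V, V+\tfrac12)$, say $t = V + \tfrac14$; the above shows there is $c^\star$ such that for $c \ge c^\star$ no deviation beats $t/m$, so $(t,\dots,t)$ is an equilibrium of Dictator-$t$ and all prices exceed $V$. \textbf{The main obstacle} I anticipate is getting the deviation-demand bound exactly right — in particular correctly handling how the \emph{recall} feature of Weitzman's policy interacts with multiple free (prominent) boxes, and making sure the crude bound $D_i \le F(\theta_0(c)+t-p)^{m-1}$ is actually valid rather than merely plausible; once that inequality is nailed down, the rest is an elementary limit argument driven by $\theta_0(c) \to 0$.
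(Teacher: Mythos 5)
Your overall plan---fix a Dictator-$t$ mechanism with a target price above $V$ and argue that for large $c$ a non-prominent deviator gets essentially no demand---is the same route the paper takes (the paper uses the monopoly price $p^\star=\argmax_p p\,(1-F(p))$ where you use an arbitrary $t\in(V,V+1)$). However, two steps in your write-up are genuinely broken. First, the equilibrium benchmark is wrong: at $(t,\dots,t)$ with $t>V$ the buyer buys from the prominent seller only when her value exceeds $t$ (for large $c$ the non-prominent indices $\theta_0(c)-t$ are negative, so she inspects no one else), so each seller's equilibrium revenue is $\tfrac{t}{m}\,(1-F(t))$, not $\tfrac{t}{m}$. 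Every deviation must be compared against this smaller (but still fixed and positive, since $F(t)<1$ for $t<V+1$) quantity; as written you verify the wrong inequality. Relatedly, only \emph{one} of the $m-1$ rivals is prominent under the mechanism, not all of them, so the exponent $m-1$ in your demand bound has no justification (though it is also not needed).

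Second, and more importantly, your quantitative bound does not close. The bound you commit to is $p\,F(\theta_0(c)+t-p)^{m-1}$ on $p\in[0,\theta_0(c)+t-V]$, and this does not tend to $0$ as $c$ grows: at $p=(t-V)/2$ it converges to $\tfrac{t-V}{2}\,F\bigl(V+\tfrac{t-V}{2}\bigr)^{m-1}>0$, and the rescue step ``$F(\cdot)\le\theta_0(c)$'' is simply false. The missing observation---which is exactly the crux of the paper's proof---is that a non-prominent deviator is inspected \emph{at all} only if its Weitzman index is nonnegative, i.e.\ only if $p\le\theta_0(c)$: the index policy never opens a box with a negative index, and the freely observed prominent seller already gives the buyer a fallback of $\max(0,v_{j^*}-t)\ge 0$. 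Hence any deviation earns at most $\theta_0(c)$, which vanishes as $c$ increases and is eventually below $\tfrac{t}{m}(1-F(t))$; this handles precisely the mid-range deviations $p\in(\theta_0(c),\,t-V)$ that your bound leaves open. With that one fact inserted (and the corrected benchmark), your choice $t=V+\tfrac14$ and the limit argument do go through, matching the paper's argument in substance.
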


% However, notice that in these equilibria with $c > \bar{c}$, there is actually no competition in the market: a buyer comes, buys from the Buy-Box without considering the options outside the Buy-Box. Henceforth, we focus on $c \leq \bar{c}$, when the inspection cost is low enough such that it doesn't overshadow potential competition between the sellers. In this case, the buyer would sometimes opt to inspect outside the Buy-Box, compare multiple sellers' products, and buy from the best one.

% We show that, setting any inspection cost $c \in (0, \bar{c})$, the set of prices supportable at equilibrium in the Dictator-$t$ mechanism (and hence any anonymous mechanism that always allocates, by Theorem~\ref{theorem:dictator_opt}) is a closed interval (Proposition~\ref{theorem:m_seller_general_T(c)}). We further fully characterize the left endpoint of this interval, i.e., the lowest price supportable at a symmetric equilibrium by any nontrivial Buy-Box mechanism (Theorem~\ref{theorem:m_seller_general_T(c)}).

\subsection{Threshold Mechanisms}
\label{sec:threshold}

Dictator mechanisms encompass all pure symmetric equilibria, but are unnatural, and seem to abuse the platform's power.
In this section we consider threshold mechanisms, which we consider more practical.
A threshold mechanism only sets a price upper bound for sellers to be eligible for prominence, but otherwise does not discriminate among the eligible sellers.

We again characterize the range of prices implementable at equilibria by the threshold mechanisms (Theorem~\ref{thm:threshold}), and compare it with that of the dictator mechanisms.
An interesting finding is that, even though the threshold mechanisms' range of equilibria prices is generally smaller, and more prone to be empty, whenever an equilibrium does exist, the lowest equilibrium price is the same as that of the dictator mechanisms. 

% We now turn our attention to a deeper examination of the properties associated with various pricing mechanisms. A notable extension of the Dictator mechanism involves relaxing a singular target price, $ t $, perhaps, to a price range. This leads us to consider the twin Threshold-$ t $ mechanism, defined as follows:

\begin{definition}
 For a price $t > 0$, the \emph{Threshold-$t$} mechanism gives prominence to a seller chosen uniformly at random from those whose prices are no higher than $t$. 
 If all prices are higher than~$t$, no seller is given prominence.
\end{definition}

% First, we analyze what equilibrium price can be supported by the Threshold-$t$ mechanism.
For inspection cost~$c$, let $\tilde T(c)$ be the set of prices implementable at symmetric equilibria under a threshold mechanism.  

\begin{restatable}{theorem}{thmThresholdTc}
    \label{theorem:threshold_T(c)}
    \label{thm:threshold}
   % The region $\tilde T(c)$ for Threshold-$t$ mechanism 
    For $ c\in (0, \bar c)$, for $x < 0$,
    define % $\tilde \D_c(x)$:
    \begin{small}
        \begin{align*}
            \tilde \D_c(x) & = \frac1m \D_c(x) + \left(1 - \frac1m \right)\times \\
            & \quad  \left(1 - F(\theta_0(c) + x) + \int_V^{\theta_0(c) - x} (1 - F(v + x))\,\dd F^{m-1}(v)\right).
          \end{align*}
    \end{small}
    Let 
    $\hat t(c) = \sup_{x < 0}\frac{(-x)\tilde \D_c(x)}{\tilde \D_c(x) - \frac1m}$, and $t^*(c)$ the same as in Theorem~\ref{theorem:m_seller_general_T(c)}.
    $\tilde T(c)$ is non-empty if and only if $\hat t(c) \geq t^*(c)$.
    When non-empty, $\tilde T(c)$ is the closed interval $[t^*(c), \hat t(c)]$.
\end{restatable}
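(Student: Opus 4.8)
The plan is to run, for the Threshold-$t$ mechanism, the same equilibrium-condition calculus that produced Theorem~\ref{theorem:m_seller_general_T(c)} for the dictator mechanism; the one genuinely new feature is that a seller who \emph{lowers} their price below $t$ remains eligible for prominence, so downward deviations are governed by a new demand function $\tilde\D_c$ instead of by $\D_c$. First I would fix the equilibrium revenue: at $\mathbf p=(t,\dots,t)$ under Threshold-$t$ every price is $\le t$, so all $m$ sellers are eligible and each wins prominence with probability $\tfrac1m$; by Lemma~\ref{lemma:lemma_1_extension_for_Buy_Box} (applicable since $V\ge2$ and $c<\bar c$) the buyer buys with probability $1$ at any equilibrium, so if $(t,\dots,t)$ is an equilibrium the $m$ demands sum to $1$ and, by the full symmetry of the profile, each seller sells with probability exactly $\tfrac1m$ and earns $\tfrac tm$. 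Hence $(t,\dots,t)$ is a symmetric equilibrium iff no unilateral deviation to a price $t+x$ earns more than $\tfrac tm$.

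I would then split the deviations at $x=0$. If $x>0$, the deviator's price exceeds $t$, so it is ineligible and the remaining $m-1$ sellers (all at $t$) split prominence uniformly — exactly the configuration defining $\D_c$ in Lemma~\ref{lemma:D_c_independent_of_t} — so the deviator's demand is $\D_c(x)<\tfrac1m$, and $(t+x)\D_c(x)\le\tfrac tm$ rearranges, just as for the dictator mechanism, to $t\ge \frac{x\D_c(x)}{\frac1m-\D_c(x)}$ for all $x>0$, i.e.\ $t\ge t^*(c)$; this is why the lowest implementable price is unchanged. If $x<0$, the deviator's price is still $\le t$, so it stays eligible and prominence is still awarded uniformly among all $m$ sellers; conditioning on who wins prominence,
\[
  \tilde\D_c(x)\;=\;\tfrac1m\,D^{\mathrm{prom}}_c(x)\;+\;\tfrac{m-1}{m}\,\D_c(x),
\]
where $D^{\mathrm{prom}}_c(x)$ is the deviator's demand in the probability-$\tfrac1m$ event that it is itself prominent, and in the complementary event one of the $m-1$ incumbents is prominent and the deviator is a non-prominent seller at price $t+x$ against $m-1$ sellers at $t$ with one prominent — exactly the scenario of Lemma~\ref{lemma:D_c_independent_of_t}, giving $\D_c(x)$.

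To evaluate $D^{\mathrm{prom}}_c(x)$, note that the prominent deviator (inspection cost $0$) has Weitzman index $V+1-(t+x)$, the largest in the market when $x<0$, so it is inspected for free; by Proposition~\ref{prop:define_theta_0} each incumbent at price $t$ has index $\theta_0(c)-t$, and since the buyer purchases from the seller maximizing $\kappa_i=\min(v_i-p_i,\theta_i)$, she buys from the deviator iff $v_0-(t+x)\ge\max_j\min(v_j-t,\theta_0(c)-t)$, i.e.\ iff $v_0\ge\min(\max_jv_j,\theta_0(c))+x$ and $v_0\ge t+x$. Integrating this probability over $v_0$ and over the maximum of the $m-1$ incumbent values, splitting on whether $v_0\ge\theta_0(c)+x$, yields the closed form for $D^{\mathrm{prom}}_c$ appearing in the statement of $\tilde\D_c$, together with the boundary facts that $D^{\mathrm{prom}}_c(x)=1$ (hence $\tilde\D_c(x)=1$) once $x\le-1$ and that $\tilde\D_c(x)>\tfrac1m$ for every $x<0$. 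For each such $x$ the condition $(t+x)\tilde\D_c(x)\le\tfrac tm$ is then equivalent to $t\le\frac{(-x)\tilde\D_c(x)}{\tilde\D_c(x)-\frac1m}$; taking these over all $x<0$ gives the single tightest bound $t\le\hat t(c)$, exactly as the analogous step produced $\bar t(c)$ for the dictator mechanism. Combining, $(t,\dots,t)$ is an equilibrium of Threshold-$t$ iff $t\in[t^*(c),\hat t(c)]$; conversely any threshold mechanism supporting $(t,\dots,t)$ must use threshold $t$ (a smaller threshold makes it a plain presentation, ruled out by Proposition~\ref{proposition_no_symmetric}; a larger threshold only adds constraints implied by the above), so $\tilde T(c)=[t^*(c),\hat t(c)]$, non-empty iff $t^*(c)\le\hat t(c)$.

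The main obstacle is the computation of $D^{\mathrm{prom}}_c$, hence of $\tilde\D_c$: one must carefully track the zero-cost prominent deviator's index against the positive-cost incumbents' indices and handle the several boundary regimes — in particular where $\theta_0(c)+x$ or $t+x$ drops below $V$ and the demand saturates at $1$, and where $t$ nears $\theta_0(c)$ so that Proposition~\ref{prop:define_theta_0} must be invoked with care. A secondary point is to verify that the family $\{\,t\le\frac{(-x)\tilde\D_c(x)}{\tilde\D_c(x)-1/m}\,\}_{x<0}$ really reduces to a single closed endpoint, i.e.\ that $\tilde T(c)$ is a closed interval rather than a more complicated set, which relies on continuity of $\tilde\D_c$ and on the extrema defining $t^*(c)$ and $\hat t(c)$ being attained.
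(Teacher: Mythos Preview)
Your proposal is correct and follows essentially the same line as the paper: split deviations at $x=0$; for $x>0$ the deviator loses eligibility and has demand $\D_c(x)$, reproducing the lower endpoint $t^*(c)$ from Theorem~\ref{theorem:m_seller_general_T(c)}; for $x<0$ condition on whether the (still-eligible) deviator wins prominence with probability $\tfrac1m$ to write $\tilde\D_c(x)$ as a mixture of a prominent-demand term and $\D_c(x)$, then rearrange $(t+x)\tilde\D_c(x)\le t/m$ and take the infimum over $x<0$ to obtain $\hat t(c)$. You add one small refinement the paper leaves implicit --- the argument that, among all Threshold-$t'$ mechanisms, taking $t'=t$ is the most permissive for sustaining $(t,\dots,t)$, so it suffices to analyze Threshold-$t$.
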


The following is immediate from Theorem~\ref{theorem:threshold_T(c)} and Theorem~\ref{theorem:m_seller_general_T(c)}.
\begin{corollary}% 
\label{cor:threshold}
  Given $c\in (0, \bar c)$, if $\tilde T(c) \neq \emptyset$, then the lowest equilibrium price implementable by any standard mechanism is implementable by a threshold mechanism.
  % the Threshold mechanism's equilibrium range in not an empty set, Threshold mechanism attains the lowest implementable equilibrium price.
\end{corollary}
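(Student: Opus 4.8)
The plan is to read the corollary off the two interval characterizations, which share the same left endpoint $t^*(c)$, after first pinning down what ``the lowest equilibrium price implementable by any standard mechanism'' refers to. So the first step is to argue that the set of prices sustainable at a symmetric equilibrium by \emph{some} standard mechanism is exactly $T(c)$. One inclusion is immediate from the definition of $T(c)$, since the Dictator-$t$ mechanisms are standard. The reverse inclusion is precisely Theorem~\ref{thm:dict}: if $(p,\dots,p)$ is a symmetric equilibrium of a standard mechanism, it is also an equilibrium of Dictator-$p$, hence $p \in T(c)$. Combined with Theorem~\ref{theorem:m_seller_general_T(c)}, this shows that, whenever standard mechanisms admit a symmetric equilibrium at all, the lowest implementable price equals $\min T(c) = t^*(c)$.

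Next I would check that a Threshold-$t$ mechanism is itself standard, so that it falls under the above discussion: it is anonymous by construction, and in any symmetric equilibrium the common price must be at most $t$ --- otherwise no seller is prominent, the instance degenerates to a plain presentation, and Proposition~\ref{proposition_no_symmetric} rules out a symmetric equilibrium --- so prominence is always allocated in equilibrium. Consequently $\tilde T(c) \subseteq T(c)$. In particular, when $\tilde T(c) \neq \emptyset$ we also have $T(c) \neq \emptyset$, which guarantees that the quantity $t^*(c)$ above is indeed well-defined as the lowest standard-implementable price (it rules out the only degenerate case, namely $T(c)=\emptyset$ while $\tilde T(c)\neq\emptyset$).

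Finally I would invoke Theorem~\ref{theorem:threshold_T(c)}: when $\tilde T(c) \neq \emptyset$ it equals the closed interval $[t^*(c), \hat t(c)]$, whose left endpoint is the very same $t^*(c)$. Hence $t^*(c) \in \tilde T(c)$, i.e.\ $t^*(c)$ is sustained at a symmetric equilibrium of some threshold mechanism, which is exactly the statement of the corollary. The only step with genuine content is the identification of the lowest standard-implementable price with $\min T(c)$ via Theorem~\ref{thm:dict}; once both interval characterizations are available, the conclusion is a one-line consequence of the fact that they share a left endpoint, and I do not expect any real obstacle.
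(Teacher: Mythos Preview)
Your proposal is correct and follows essentially the same route as the paper, which simply states that the corollary is immediate from Theorems~\ref{theorem:m_seller_general_T(c)} and~\ref{theorem:threshold_T(c)} sharing the left endpoint $t^*(c)$. You are more careful than the paper in explicitly verifying that threshold mechanisms are standard (and hence that $\tilde T(c)\subseteq T(c)$, so $T(c)\neq\emptyset$ whenever $\tilde T(c)\neq\emptyset$); the paper leaves this implicit.
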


% The following proposition highlights the intrinsic limitations of Threshold-$t$ mechanism and the underlying dynamic of the market. The simplification to a uniform allocation of the Buy Box within a specified price range, while operationally straightforward, is not as powerful as some other more restrictive Buy-Box design.
% The upper endpoint of $\tilde T(c)$ can be strictly smaller than that of $T(c)$.
 The upper endpoint of the range, $\hat t(c)$, is generally strictly smaller than $\bar t(c)$, the upper endpoint of $T(c)$.
% In fact, since the existence of equilibrium in the threshold mechanisms is decided by comparing $\hat t(c)$ with $t^*(c)$, 
We give a concrete example.

\begin{restatable}{proposition}{corrThresholdSucks}
\label{prop:dict-threshold-gap}
  There is an instance where the Threshold-$t$ mechanisms do not encompass all symmetric equilibria.
	% cannot sustain some equilibrium price profile that is achievable by other standard mechanisms.
\end{restatable}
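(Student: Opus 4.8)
The plan is to exhibit one concrete instance and read off the two implementable-price intervals from the characterizations already in hand. Concretely, I would take the uniform distribution $F=U[2,3]$ with $m=2$ sellers (the instance of the worked example) and a suitable inspection cost $c\in(0,\bar c)$, and argue that $T(c)\setminus\tilde T(c)\neq\emptyset$. Any price $t$ in this difference then certifies the claim: by Theorem~\ref{theorem:m_seller_general_T(c)} (via Theorem~\ref{thm:dict}) the profile $(t,t)$ is a symmetric equilibrium of the standard Dictator-$t$ mechanism, so $t\in T(c)$; but $t\notin\tilde T(c)$ means that no threshold mechanism, for any threshold, has $(t,t)$ as a symmetric equilibrium, so the threshold mechanisms do not encompass this equilibrium.

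By Theorem~\ref{theorem:m_seller_general_T(c)} and Theorem~\ref{theorem:threshold_T(c)}, $T(c)=[t^*(c),\bar t(c)]$ and $\tilde T(c)=[t^*(c),\hat t(c)]$, with the same lower endpoint $t^*(c)$ (Corollary~\ref{cor:threshold}); so the task reduces to the single strict inequality $\hat t(c)<\bar t(c)$ for the chosen instance. First I would substitute $F=U[2,3]$ into the piecewise demand function $\D_c$ of Lemma~\ref{lemma:D_c_independent_of_t}, where the double integrals collapse to elementary polynomials, and evaluate $\bar t(c)=\inf_{x<0}(\cdots)$ — this is exactly the computation behind the worked example. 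Then I would expand $\tilde\D_c(x)$ for $x<0$ from its definition in Theorem~\ref{theorem:threshold_T(c)} into a closed form for the uniform case and evaluate $\hat t(c)=\sup_{x<0}\frac{(-x)\tilde\D_c(x)}{\tilde\D_c(x)-1/m}$, and finally compare the two numbers.

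The comparison is driven by a simple structural fact that also guides the algebra: a downward deviation from the symmetric price $t$ is strictly more profitable under a threshold mechanism than under the dictator mechanism. In the Dictator-$t$ mechanism a seller who prices below $t$ forfeits prominence entirely, whereas under the threshold mechanism such a seller remains eligible and keeps a $1/m$ chance of being the prominent seller at their (now lower) price; hence $\tilde\D_c(x)$ strictly exceeds the demand a deviator faces in the dictator mechanism. This makes the no-undercutting constraint strictly tighter, forcing $\hat t(c)<\bar t(c)$.

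The main obstacle is purely computational: carrying the piecewise formulas for $\D_c$ and $\tilde\D_c$ through for the uniform instance, locating where the relevant supremum and infimum are attained, and verifying the inequality with a genuine gap (strict, not merely $\le$). A clean way to close it is to upper bound $\hat t(c)$ by evaluating the ratio $\frac{(-x)\tilde\D_c(x)}{\tilde\D_c(x)-1/m}$ at a single well-chosen $x<0$ (or along a one-sided limit), obtaining an explicit value strictly below the known $\bar t(c)$; together with $\hat t(c)\ge t^*(c)$ this even exhibits the most illustrative form of the gap, $\emptyset\neq\tilde T(c)\subsetneq T(c)$. Should the computation instead yield $\hat t(c)<t^*(c)$ for the chosen $c$, then $\tilde T(c)=\emptyset$ while $T(c)\neq\emptyset$, which proves the proposition as well — either way, a standard mechanism implements an equilibrium no threshold mechanism can.
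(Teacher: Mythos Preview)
Your approach is correct and follows the same high-level route as the paper: fix an instance, invoke Theorems~\ref{theorem:m_seller_general_T(c)} and~\ref{theorem:threshold_T(c)} to identify $T(c)=[t^*(c),\bar t(c)]$ and $\tilde T(c)=[t^*(c),\hat t(c)]$, and certify $\hat t(c)<\bar t(c)$. The difference is in the instance and the mode of verification: the paper takes two sellers with pdf $\propto e^{v-2}$ on $[2,3]$ and confirms the gap numerically via Figure~\ref{fig:dictator_threshold_separation}, whereas you propose the uniform $U[2,3]$ instance, where $\bar t(c)=2$ is already derived in the worked example, and plan an analytic check. Your structural observation --- that a downward deviator retains prominence with probability $1/m$ under the threshold rule but forfeits it under the dictator rule, so $\tilde\D_c(x)>\D_c(x)$ pointwise for $x<0$ and the function $d\mapsto \frac{(-x)d}{d-1/m}$ is decreasing --- is a clean conceptual explanation for why the upper endpoint can only shrink; the paper does not spell this out. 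As you note, this monotonicity only gives $\hat t(c)\le\bar t(c)$, and the strict gap still needs the explicit evaluation you outline; in the uniform case the dictator bound is attained in a limit ($x\to -1^+$), so you will indeed need to exhibit an interior $x$ where the threshold ratio dips strictly below~$2$.

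One correction worth carrying through your write-up: $\hat t(c)$ is the \emph{infimum} over $x<0$ (as written in the proof of Theorem~\ref{theorem:threshold_T(c)}), not the supremum stated in the theorem body. Your plan to ``upper bound $\hat t(c)$ by evaluating the ratio at a single well-chosen $x<0$'' is only valid for an infimum, so your reasoning is already consistent with the correct definition --- just make sure the $\inf$ appears in the final argument.
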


The instance has two sellers, where the value distribution, supported on $[2, 3]$, has pdf  $f(v) \propto e^{v - 2}$. 
In Figure~\ref{fig:dictator_threshold_separation}, we visualize $\tilde T(c)$, the range of prices implementable by threshold mechanisms,
 and compare it with $T(c)$, the range of prices implementable by dictator mechanisms.
    Note that the lowest price implementable by the two mechanisms is the same. 
    In this particular example, both mechanisms admit pure equilibria for any $c \in (0, \bar c)$.
%from all standard Buy Box mechanism.

\begin{figure}
    \centering
    \includegraphics[width=\linewidth]{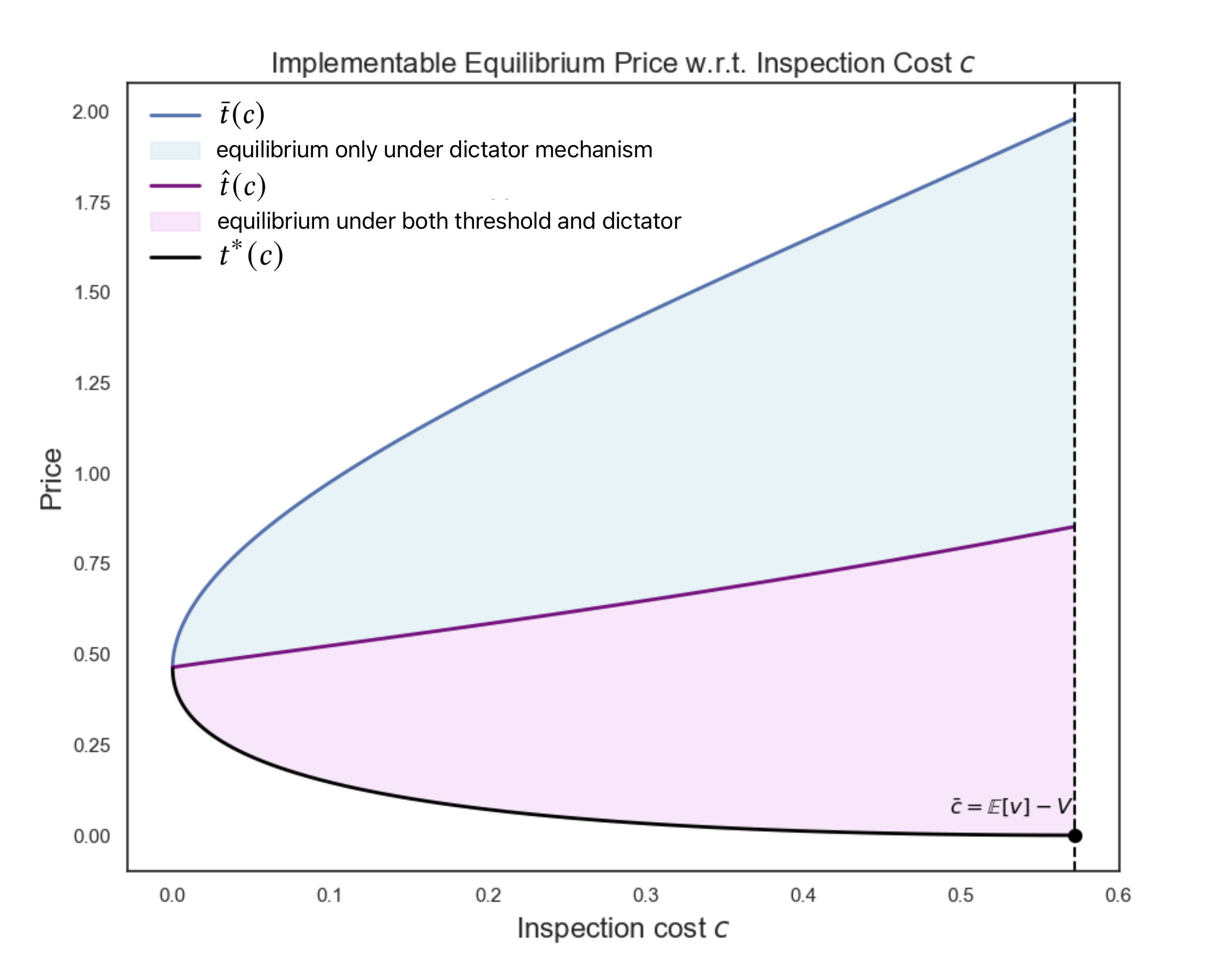}
    \caption[Illustration of implementable prices with threshold mechanisms]{Illustration of implementable prices with threshold mechanisms. \textmd{For two sellers, with $F$ supported on $[2, 3]$ and pdf proportional to $e^{v - 2}$, the magenta area illustrates the range of prices implementable with the threshold mechanism as the inspection cost~$c$ varies.
    The blue area is the range of prices implementable with the dictator mechanism but not with the threshold mechanism.  
    }}
    % we vary inspection cost $c\in (0, \bar c)$ ($x$-axis) and color corresponding implementable regions $T(c), \tilde T(c)$ for Dictator and Threshold mechanism. Both mechanism can sustain equilibrium $\forall c\in (0, \bar c)$, but Dictator mechanism can support higher prices.}}
    \Description{Illustrative example of the difference between Dictator and Threshold mechanisms under various inspection costs. \newline For each value of inspection cost $c\in (0, \bar c)$, we calculate and plot the upperbound $\bar t(c)$[$\tilde t_1(c)$] for Dictator[Threshold] mechanisms, and their mutual lowerbound $t^*(c)$. We visualize the feasible $t$-region under different inspection cost values.}
\label{fig:dictator_threshold_separation}
\end{figure}

% We'll focus our analysis on the optimal mechanism hereafter.

\section{Analysis of Welfare and Surplus}
\label{sec:SW_CS_analysis}
\label{sec:welfare}

This section analyzes social welfare and consumer surplus at equilibrium under prominence mechanisms.
In particular, we study how changes in inspection cost impact welfare and consumer surplus.

By our assumptions of non-degeneracy and Lemma~\ref{lemma:lemma_1_extension_for_Buy_Box}, the buyer always makes a purchase.
Price change therefore does not directly affect the welfare.
This allows us to show that, at a given inspection cost~$c$, the social welfare does not depend on the price at equilibrium.
% \SWnCSDefinition*

\begin{restatable}{theorem}{SWDeterminedByCnMonotone} % [The property of Social Welfare under symmetric equilibria supported by standard Buy-Box mechanism]
\label{thm:welfare}
  For $c < \bar c$, under any standard prominence mechanism and any symmetric equilibrium price $t$ implemented the mechanism, the social welfare is determined by $c$ only, and monotonically decreases as $c$ increases. The consumer surplus $\CS = \SW - t$.
\end{restatable}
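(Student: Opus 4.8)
The plan is to collapse the social welfare at a symmetric equilibrium into a single closed-form expectation that visibly depends only on $c$, and then read off both the monotonicity and the consumer-surplus identity.

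I would begin with two structural facts about a symmetric equilibrium $\mathbf p = (t,\dots,t)$ under a standard mechanism. First, anonymity forces the prominence probabilities $x_i(\mathbf p)$ to all be equal, and ``always allocating in equilibrium'' forces $\sum_i x_i(\mathbf p) = 1$, so each seller wins prominence with probability exactly $\tfrac1m$, independently of the later-realized values; by exchangeability of the $v_i$ it therefore suffices to compute the welfare conditional on seller $1$ being the prominent one. Second, by Lemma~\ref{lemma:lemma_1_extension_for_Buy_Box} the buyer always makes a purchase. As a preliminary I would also record that $t \le \theta_0(c)$: otherwise, by Proposition~\ref{prop:define_theta_0}, every non-prominent seller has a negative Weitzman index and is never inspected, so the buyer could only ever buy from the prominent seller, forcing $v_1 \ge t$ with probability one, i.e.\ $t \le V$ --- contradicting $t > \theta_0(c) > V$; moreover $\theta_0(c) < V+1$ because $c > 0$.

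With $t \le \theta_0(c)$, Proposition~\ref{prop:define_theta_0} gives each non-prominent seller $j \ge 2$ the index $\theta_0(c) - t \ge 0$ and hence the Kleinberg quantity $\kappa_j = \min(v_j - t,\ \theta_0(c) - t) = \min(v_j, \theta_0(c)) - t$; treating the prominent seller $1$ as a zero-cost Pandora box, its index is $(V+1)-t$ and $\kappa_1 = \min(v_1 - t,\ (V+1)-t) = v_1 - t$ since $v_1 \le V+1$. Combining the characterization of \citet{kleinberg2016descending} with the fact that the buyer always buys (so $\max_i \kappa_i \ge 0$), the buyer's expected utility is $\CS = \Ex{\max_i \kappa_i} = \Ex{\max\bigl(v_1,\ \max_{j \ge 2} \min(v_j, \theta_0(c))\bigr)} - t$, a value that is the same regardless of which seller is prominent and hence equals the unconditional $\CS$. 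Because the buyer always buys and every price equals $t$, the sellers' total revenue is $\sum_i t\, D_i(\mathbf p) = t$, so $\SW = \CS + t = \Ex{\max\bigl(v_1,\ \max_{j \ge 2}\min(v_j, \theta_0(c))\bigr)}$. This expression depends on the instance only through $\theta_0(c)$ (with $F$ and $m$ fixed), proving the welfare is determined by $c$ alone; and rearranging $\SW = \CS + t$ gives $\CS = \SW - t$.

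For monotonicity I would use that $\theta_0(c)$ solves $\Ex{[v-\theta]^+} = c$, whose left side is continuous and strictly decreasing in $\theta$ on $[V, V+1]$, so $\theta_0(\cdot)$ is strictly decreasing on $(0,\bar c)$ with values in $(V, V+1)$. For $c < c'$, the integrand $\max\bigl(v_1, \max_{j \ge 2}\min(v_j, \theta_0(\cdot))\bigr)$ is pointwise weakly larger at $c$ than at $c'$, and on the event $\{\max_{j \ge 2} v_j \ge \theta_0(c)\} \cap \{v_1 < \theta_0(c)\}$ --- which has positive probability since $\theta_0(c) \in (V, V+1)$ and the $v_i$ are independent --- it equals $\theta_0(c)$ at cost $c$ but is at most $\max(v_1, \theta_0(c')) < \theta_0(c)$ at cost $c'$; taking expectations, $\SW$ strictly decreases in $c$. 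The step I expect to be the real work is the reduction in the third paragraph: situating the zero-cost prominent seller inside the Weitzman/Kleinberg $\kappa$-formalism and verifying that $t$ cancels uniformly out of every comparison, so that the realized search outcome --- and therefore the welfare --- is genuinely a function of $\theta_0(c)$ alone. Once that is in hand, the $\tfrac1m$-symmetry averaging, the revenue cancellation, and the monotonicity of $\theta_0$ are all routine.
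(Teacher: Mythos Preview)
Your proof is correct and follows essentially the same approach as the paper's: both use the Kleinberg--Weitzman $\kappa_i$ characterization, observe that at a symmetric price $t$ every $\kappa_i$ is of the form $\min(v_i,\theta_i)-t$ so that $t$ shifts the consumer surplus linearly, and combine this with Lemma~\ref{lemma:lemma_1_extension_for_Buy_Box} (the buyer always buys, hence total seller revenue equals $t$) to conclude $\SW=\CS+t$ depends only on $\theta_0(c)$. Your version is slightly more explicit---you write down the closed form $\SW=\Ex{\max\bigl(v_1,\max_{j\ge 2}\min(v_j,\theta_0(c))\bigr)}$ and give a strict-monotonicity argument via an explicit positive-probability event---but these are presentational refinements of the same argument.
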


By the theorem, we may write the social welfare at any symmetric equilibrium under a standard mechanism as a function in $c$ only, which we denote by $\SW(c)$.

\begin{corollary}
% [Pareto Frontier in Consumer Surplus and Seller Revenue]
\label{theorem:pareto_frontier}
    % The standard Buy Box mechanism achieves the Pareto Frontier for buyer's and seller's surplus (utility):
    % $$\left\{\left(\SW(c) - t^*(c), t^*(c)\right): \forall c \in [0, \bar c]\right\}.$$
Given $c \in (0, \bar c)$, the highest consumer surplus achievable under any standard prominence mechanisms at a symmetric equilibrium is $\SW(c) - t^*(c)$.
\end{corollary}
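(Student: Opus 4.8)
The plan is to combine Theorem~\ref{thm:welfare} with the characterization of implementable prices in Theorem~\ref{theorem:m_seller_general_T(c)}. By Theorem~\ref{thm:welfare}, at \emph{any} symmetric equilibrium under \emph{any} standard mechanism with inspection cost $c$, the social welfare is a fixed quantity $\SW(c)$ that does not depend on the equilibrium price, and the consumer surplus at an equilibrium with price $t$ equals $\CS = \SW(c) - t$. Hence maximizing consumer surplus over all standard mechanisms and all their symmetric equilibria is the same as \emph{minimizing the equilibrium price} $t$ over all such mechanisms and equilibria.

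The second ingredient is that the set of prices that can arise at a symmetric equilibrium of some standard mechanism is exactly $T(c)$: Theorem~\ref{thm:dict} tells us every symmetric equilibrium of a standard mechanism is also an equilibrium of the corresponding Dictator-$t$ mechanism, so such a price lies in $T(c)$; conversely $T(c)$ is realized by the dictator mechanisms, which are standard. By Theorem~\ref{theorem:m_seller_general_T(c)}, when $T(c)$ is non-empty it is the closed interval $[t^*(c), \bar t(c)]$, so the minimum achievable equilibrium price is $t^*(c)$, and this minimum is attained (at the Dictator-$t^*(c)$ mechanism). Plugging $t = t^*(c)$ into $\CS = \SW(c) - t$ gives that the highest achievable consumer surplus is $\SW(c) - t^*(c)$.

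So the proof is short: first invoke Theorem~\ref{thm:welfare} to reduce the problem to minimizing $t$; then invoke Theorems~\ref{thm:dict} and~\ref{theorem:m_seller_general_T(c)} to identify $\min T(c) = t^*(c)$ and note it is attained; then substitute. The one subtlety to state carefully is that the corollary is implicitly conditioned on $T(c) \neq \emptyset$ (otherwise no standard mechanism has a symmetric equilibrium at all, and the statement is vacuous) — this is the regime $t^*(c) \le \bar t(c)$ from Theorem~\ref{theorem:m_seller_general_T(c)}. I do not anticipate a real obstacle here; the only thing to be careful about is not to conflate ``consumer surplus at the lowest-price equilibrium'' with ``the consumer-surplus-maximizing outcome'' without first establishing, via Theorem~\ref{thm:welfare}, that $\SW(c)$ is price-independent — it is precisely that price-independence of welfare which makes lowest price equivalent to highest surplus.
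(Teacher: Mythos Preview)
Your proposal is correct and follows essentially the same reasoning the paper intends: the corollary is stated immediately after Theorem~\ref{thm:welfare} without a separate proof, precisely because it follows at once from $\CS = \SW(c) - t$ together with $\min T(c) = t^*(c)$ from Theorem~\ref{theorem:m_seller_general_T(c)}. Your added remarks (that the minimum is attained by the Dictator-$t^*(c)$ mechanism, and that the statement is vacuous when $T(c)=\emptyset$) are appropriate clarifications but do not depart from the paper's approach.
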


For a platform that aims to maximize its customers' surplus, how large the search friction $c$ should it set?  
As $c$ increases, on the one hand, the lowest equilibrium price $t^*(c)$ decreases, which is good for the surplus; on the other hand, search becomes more costly, which hurts the surplus. 
We experimented with many distributions, and saw that, for most of them, the consumer surplus largely increases as $c$ increases for small values of~$c$. 
For some distributions, the surplus increases all the way as $c$ approaches $\bar c$, whereas for others, the surplus takes its maximum at an intermediate value of~$c$, as we illustrate in Figure~\ref{fig:pareto_frontier}.
The next theorem gives a sufficient condition for the latter to be the case.

\begin{figure}
    \centering
    \includegraphics[width=1\linewidth]{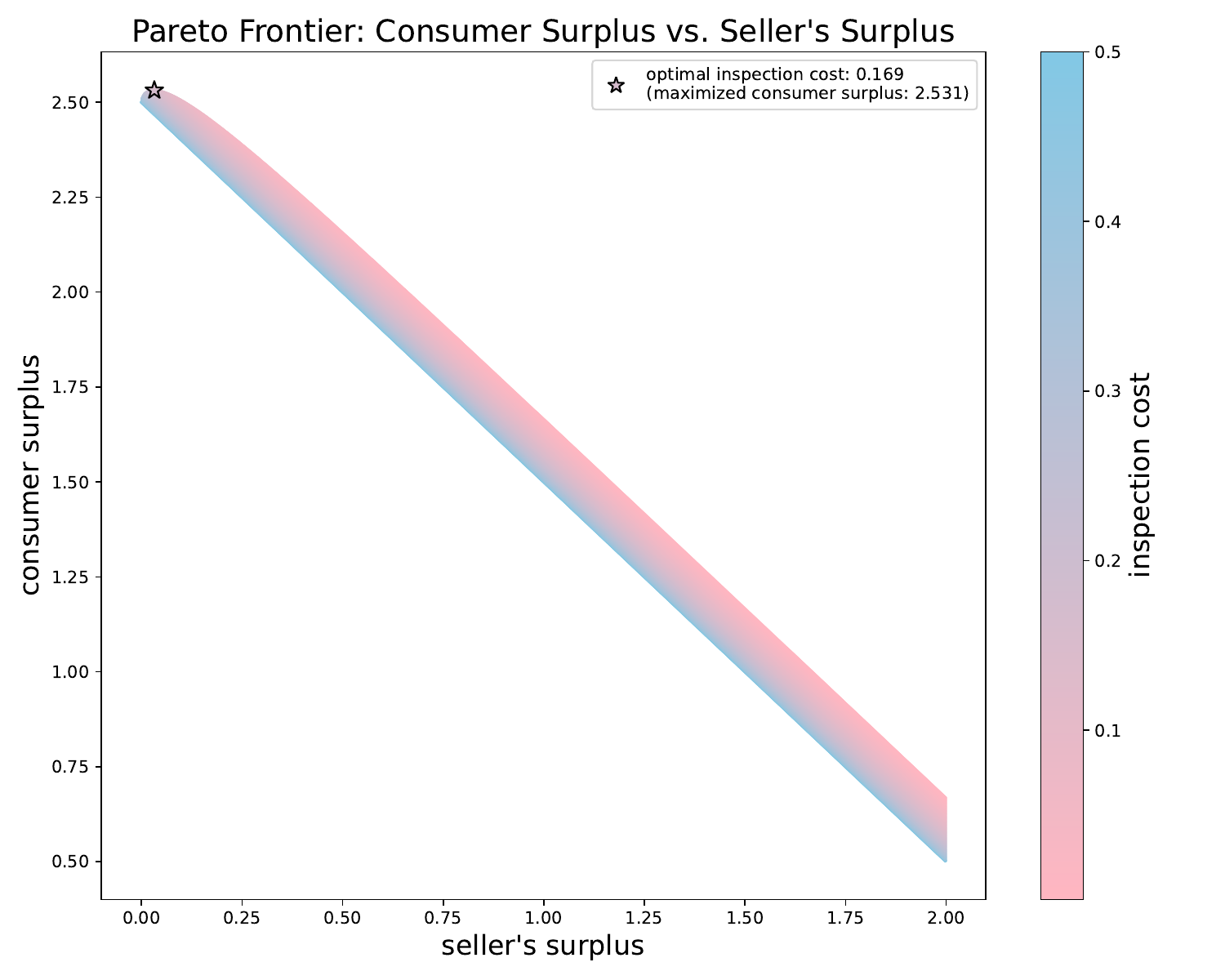}
    \caption[Illustration of achievable consumer surplus and seller revenue at equilibrium.]{Illustration of feasible consumer surplus and seller revenue at equilibrium, colored with corresponding inspection cost.
    \textmd{We plot the region of feasible consumer surplus vs. seller revenues (for two sellers with $F = U[2, 3]$) at equilibrium, and color the region with corresponding inspection cost that admits the equilibrium. The Pareto Frontier is achieved when prices are lowest at $t^*(c)$, where the consumer surplus is maximized.}}
    %As inspection cost $c$ varies, consumer surplus is maximized when price is $t^*(c)$. For the two seller $f(v) \propto e^{v - 2}$ example, we visualize the Pareto curve as inspection cost $c$ varies within $(0, \bar c)$.}}
    \label{fig:pareto_frontier}
    \Description{Placeholder}
\end{figure}

% As illustrated in Figure \ref{fig:pareto_frontier}, the Pareto frontier reflects the trade-off between consumer surplus and seller revenue, under standard Buy-Box mechanism with varying inspection costs. Specifically, consumer surplus can be optimized when the Buy-Box designer sets inspection cost to be either at an intermediate level $c^* \in (0, \mathbb E[v] - V)$ or at the maximum lower-bound (for index) $c^* = \mathbb E[v] - V$, and the corresponding lowest price achievable $t^*(c)$.

% For $m = 2$ sellers, we demonstrate a sufficient condition for the consumer surplus to be optimized at an intermediate level of inspection cost is $f(V^+) < \frac{1}{13}f'(V^+)$.
\begin{restatable}{theorem}{propCSsufficientcondition}
\label{thm:surplus-nonmonotone}
        For two sellers, a sufficient condition for the consumer surplus to be optimized at an intermediate level of inspection cost is $f(V^+) < \frac{1}{13}f'(V^+)$.
\end{restatable}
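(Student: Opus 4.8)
The plan is to study, as a function of the inspection cost $c \in (0,\bar c)$, the largest consumer surplus attainable at a symmetric equilibrium of a standard prominence mechanism, which by Corollary~\ref{theorem:pareto_frontier} is $\CS(c) := \SW(c) - t^*(c)$. I will show that, under the hypothesis on $f$, this function is \emph{strictly decreasing} just to the left of $c=\bar c$; since $\CS$ extends continuously to the compact interval $[0,\bar c]$ (the welfare is continuous by Theorem~\ref{thm:welfare}, and $t^*$ is continuous with $t^*(\bar c^-)=0$ by Proposition~\ref{prop:t_star_properties}), this forces the maximizer of $\CS$ to lie strictly below $\bar c$; a separate, routine check that $\CS$ is increasing near $c=0$ then places the maximizer in the interior.

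First I would pin down $\SW(c)$. With two sellers at a symmetric equilibrium of a prominence mechanism, the prominent seller has inspection cost $0$, so the buyer learns her value $v_1$ for that seller immediately and, by the index policy, inspects the other seller precisely when $v_1 < \theta_0(c)$, after which she buys at the larger of the two values (Lemma~\ref{lemma:lemma_1_extension_for_Buy_Box} guarantees a purchase). Writing everything in terms of $\theta_0=\theta_0(c)$ and using the identity $\Ex{\max(\theta_0,v)} = \theta_0 + c(\theta_0)$ (a restatement of the defining equation $\Ex{[v-\theta_0]^+}=c$), a short computation collapses the welfare derivative to $\tfrac{\dd}{\dd\theta_0}\SW = F(\theta_0)\,(1-F(\theta_0))$; since $\theta_0(\bar c)=V$ and $F(V)=0$ this gives
\[
  \SW(c) \;=\; \Ex{v} + \int_V^{\theta_0(c)} F(\tau)\,(1-F(\tau))\,\dd\tau ,
\]
and in particular $\CS(c)\to\Ex{v}$ as $c\to\bar c^-$, because $t^*(\bar c^-)=0$. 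So it suffices to prove $\int_V^{\theta_0(c)} F(\tau)(1-F(\tau))\,\dd\tau > t^*(c)$ for all $c$ in a left neighborhood of $\bar c$.

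Next I would expand both sides around $\theta_0=V$. Put $\eta := \theta_0(c)-V$, which decreases to $0$ as $c\to\bar c^-$, and write $a:=f(V^+)$, $b:=f'(V^+)$, so $F(V+s)=as+\tfrac b2 s^2+o(s^2)$. The left-hand side equals $\tfrac a2\eta^2 + \tfrac{b-2a^2}{6}\eta^3 + o(\eta^3)$. For $t^*(c)$, Lemma~\ref{lemma:D_c_independent_of_t} gives, for $m=2$ and $0\le x<\eta$, $\D_c(x)=\int_V^{\theta_0-x}\bigl(1-F(v+x)\bigr)\,\dd F(v)$; substituting the expansion of $F$ and writing the deviation as $x=\alpha\eta$, the ratio $\tfrac{x\,\D_c(x)}{\frac12-\D_c(x)}$ is $2a\,\alpha(1-\alpha)\,\eta^2$ to leading order, so its maximizer satisfies $\alpha\to\tfrac12$; carrying the expansion one further order — and checking that the $O(\eta)$ drift of the maximizer affects $t^*$ only at order $\eta^4$, and that $f''(V^+)$, which enters $\D_c$ at order $\eta^3$, washes out of $t^*$ at that order — yields $t^*(c)=\tfrac a2\eta^2 + \tfrac{a^2+b}{8}\eta^3 + o(\eta^3)$. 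Subtracting, the $\eta^2$ terms cancel and
\[
  \CS(c)-\Ex{v} \;=\; \Bigl(\tfrac{b-2a^2}{6}-\tfrac{a^2+b}{8}\Bigr)\eta^3 + o(\eta^3) \;=\; \frac{b-11a^2}{24}\,\eta^3 + o(\eta^3),
\]
which is positive for small $\eta>0$ as soon as $f'(V^+)>11\,f(V^+)^2$; the cleaner stated hypothesis $f(V^+)<\tfrac1{13}f'(V^+)$ is in the same spirit, and I would establish it from explicit, non-asymptotic one-sided bounds on $\D_c$ and on $\int_V^{\theta_0}F(1-F)$ that hold on a genuine neighborhood of $\bar c$ rather than only in the limit.

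The crux — and the main obstacle — is exactly this second-order cancellation: the leading term $\tfrac{f(V^+)}{2}\eta^2$ of $t^*(c)$ coincides with that of $\SW(c)-\Ex{v}$, so the sign of $\CS(c)-\CS(\bar c)$ is decided entirely by the cubic terms, and the whole argument must be pushed to third order, with careful control of the maximizer of the ratio defining $t^*(c)$ (it is $\tfrac\eta2$ to leading order, with an $O(\eta^2)$ correction that, fortunately, does not contaminate the $\eta^3$ coefficient). A secondary and more routine issue is the left endpoint: one must argue that the maximum of $\CS$ on $[0,\bar c]$ is not attained at $c=0$, for which the continuity of $\CS$ there (Proposition~\ref{prop:t_star_properties}) together with the fact that $\tfrac{\dd}{\dd c}\SW(c)=-F(\theta_0(c))\to-1$ while $t^*$ falls away steeply near $c=0$ shows $\CS$ is increasing there.
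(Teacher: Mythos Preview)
Your approach is the same as the paper's: both prove $\CS(c) > \CS(\bar c)$ for $c$ just below~$\bar c$ by Taylor-expanding around $\theta_0 = V$ to third order, observing that the $\eta^2$ terms cancel so the sign is decided at~$\eta^3$. You parametrize the deviation as $x=\alpha\eta$ and expand directly; the paper instead tracks the interior optimizer $x^*(\theta)$ via the envelope theorem and repeatedly differentiates $g(\theta)=\mu(\theta,x^*(\theta))$. Your route is cleaner, and your remark that the $O(\eta)$ drift of the maximizer affects the value only at $O(\eta^4)$ is precisely the envelope argument the paper is using.

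There is, however, a real gap between what you derive and the stated theorem. Your cubic coefficient gives the sufficient condition $f'(V^+) > 11\,f(V^+)^2$, while the hypothesis in the statement is $f(V^+) < \tfrac{1}{13}f'(V^+)$, i.e., $f'(V^+) > 13\,f(V^+)$. These are genuinely different --- one is quadratic in $f(V^+)$, the other linear --- and neither implies the other in general (take $f(V^+)>1$ versus $f(V^+)<1$). Your proposed fix, to pass to ``explicit, non-asymptotic one-sided bounds,'' cannot change this: the sign of $\CS(c)-\CS(\bar c)$ near $\bar c$ is governed exactly by the $\eta^3$ coefficient you computed, and cruder bounds can only produce a \emph{weaker} sufficient condition of the same quadratic shape (a larger multiplier on $f(V^+)^2$), never one of a different functional form. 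So as written you have not established the hypothesis as stated. For what it is worth, the paper's own calculation also lands on a quadratic condition, $f'(V^+) > 13\,f(V^+)^2$, so the linear form in the statement is almost certainly a typo for $f(V^+)^2 < \tfrac{1}{13}f'(V^+)$; and the $11$-versus-$13$ discrepancy traces to a dropped term in the paper's second derivative of the deviating demand --- differentiating the factor $(1-F(\theta))$ in $\partial_\theta \D_c$ contributes an extra $-\tfrac12 f(V^+)^2$ at $\theta=V^+$, which when carried through shifts $x^*{}''(V^+)$ from $-\tfrac{3a^2+b}{8a}$ to $-\tfrac{5a^2+b}{8a}$ and recovers your~$11$.

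A smaller issue: your treatment of the left endpoint $c=0$ is hand-wavy --- ``$t^*$ falls away steeply'' is asserted, not argued --- though the paper's proof likewise only rules out~$c=\bar c$.
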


The proof is rather technical.  
One step that makes use of there being only two sellers is the following relatively succinct expression for the social welfare.

\begin{restatable}{lemma}{thSWclosedform}
    \label{lemma:SW_closed_form}
        Under standard prominence mechanism with inspection cost $c\in (0, \bar c)$, at any symmetric equilibrium, for $m = 2$ sellers, % social welfare can be written the following closed-form presentation.
        \begin{align}
            \label{social_welfare}
	    \SW(c) & = \int_V^{\theta_0(c)} F(s)(1 - F(s))\, \dd s + \mathbb{E}[v].% , \quad \forall c \in (0, \bar{c}]
        \end{align}
\end{restatable}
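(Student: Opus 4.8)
The plan is to first reduce $\SW(c)$ to a single clean expectation and then evaluate it. Fix a symmetric equilibrium at price $t$ for $m=2$ sellers under a standard mechanism $\mathcal M$. By anonymity and the always-allocating property, at the profile $(t,t)$ the mechanism makes exactly one seller prominent, each with probability $\frac12$; by the symmetry of the configuration (equal prices, i.i.d.\ values) the social welfare conditioned on ``seller~$1$ prominent'' equals that conditioned on ``seller~$2$ prominent'', so it suffices to compute welfare in the scenario where seller~$1$ has inspection cost $0$ and seller~$2$ has inspection cost~$c$, both priced at~$t$. I would first argue that $t \le \theta_0(c)$: otherwise the non-prominent seller has a negative Weitzman index at price $t$ and is never inspected, so the buyer purchases only when $v_1 \ge t > \theta_0(c) > V$, an event of probability strictly below~$1$, contradicting Lemma~\ref{lemma:lemma_1_extension_for_Buy_Box}. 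Given $t \le \theta_0(c)$, Proposition~\ref{prop:define_theta_0} (together with its degenerate case at cost $0$, where $\theta_0(0) = V+1$) yields the Kleinberg indices $\kappa_1 = \min(v_1,V+1) - t = v_1 - t$ and $\kappa_2 = \min(v_2,\theta_0(c)) - t$.

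Next, I would assemble the welfare. By Lemma~\ref{lemma:lemma_1_extension_for_Buy_Box} the buyer always buys from one of the two sellers, each priced at~$t$, so the total seller revenue is $t$ and hence $\SW = t + \CS$ (consistent with Theorem~\ref{thm:welfare}). By the characterization of \citet{kleinberg2016descending} and the always-buying property, $\CS = \Ex{\max([\kappa_1]^+,[\kappa_2]^+)} = \Ex{\max(\kappa_1,\kappa_2)}$. Substituting the indices and cancelling~$t$ gives the intermediate identity
\[
  \SW(c) \;=\; \Ex{\max\bigl(v_1,\ \min(v_2,\theta_0(c))\bigr)}, \qquad v_1,v_2 \sim F \text{ i.i.d.},
\]
which already re-establishes that $\SW$ depends on $c$ alone.

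Finally, I would evaluate the right-hand side. Let $Z := \max(v_1,\min(v_2,\theta_0(c)))$; since $c \in (0,\bar c)$ we have $\theta_0(c) \in (V,V+1)$, so $Z$ takes values in $[V,V+1]$. Using $\Ex{Z} = V + \int_V^{V+1}\Prx{Z > s}\,\dd s$ and independence, $\Prx{Z>s} = 1 - F(s)\,\Prx{\min(v_2,\theta_0(c)) \le s}$, which equals $1 - F(s)^2$ for $V \le s < \theta_0(c)$ and $1 - F(s)$ for $\theta_0(c) \le s \le V+1$. Writing $\Ex{v} = V + \int_V^{V+1}(1 - F(s))\,\dd s$ in the same way and subtracting, the two integrands agree on $[\theta_0(c),V+1]$ and differ on $[V,\theta_0(c)]$ by $(1 - F(s)^2) - (1 - F(s)) = F(s)(1 - F(s))$, giving $\SW(c) = \Ex{v} + \int_V^{\theta_0(c)} F(s)(1 - F(s))\,\dd s$, as claimed.

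The main obstacle is the reduction in the first paragraph: correctly modeling the prominent seller as a cost-$0$ box (so that $\kappa_1 = v_1 - t$) and verifying that $t \le \theta_0(c)$ at any equilibrium, so that Proposition~\ref{prop:define_theta_0} applies and the characterization of \citet{kleinberg2016descending} can be invoked verbatim; once the problem collapses to $\Ex{\max(v_1,\min(v_2,\theta_0(c)))}$ the remaining tail-integral computation is routine. It is worth noting that $m=2$ is used essentially: for $m > 2$ the term $\min(v_2,\theta_0(c))$ becomes a maximum of $m-1$ ``capped'' i.i.d.\ values and $\Prx{Z>s}$ no longer collapses to $1 - F(s)^2$, which is why this succinct closed form is specific to two sellers.
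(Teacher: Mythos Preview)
Your proof is correct and takes a genuinely different route from the paper. The paper fixes the prominent seller, separately writes out the expected match value (splitting on whether the prominent seller's value exceeds $\theta_0$) and the expected inspection cost $cF(\theta_0)$, and then pushes through a page of integrals using $c = \int_{\theta_0}^{V+1}(v-\theta_0)\,\dd F(v)$ to arrive at an explicit expression for $\SW(\theta_0)$; it then computes $\partial_{\theta_0}\SW(\theta_0) = F(\theta_0)(1-F(\theta_0))$ and integrates from $\theta_0 = V$ (where $\SW = \Ex{v}$) to get the result. Your approach instead leverages the Kleinberg--Weitzman identity $\CS = \Ex{\max_i[\kappa_i]^+}$, which already absorbs the inspection cost into the amortized values $\kappa_i$, and so collapses the whole thing to $\SW = \Ex{\max(v_1,\min(v_2,\theta_0))}$ in one step; the remaining tail-integral computation is then a two-line calculation. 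This is strictly cleaner: it avoids all the case-by-case integral algebra and the ``differentiate, then re-integrate'' trick, and it makes transparent why the formula takes the form it does (the capped variable $\min(v_2,\theta_0)$ is exactly Weitzman's index-truncated value). Your reduction also makes it visible that for general $m$ one has $\SW(c) = \Ex{\max(v_1, \min(v_2,\theta_0),\ldots,\min(v_m,\theta_0))}$, which clarifies why only $m=2$ yields this particular closed form. The one step worth tightening is the argument that $t \le \theta_0(c)$: your contradiction via Lemma~\ref{lemma:lemma_1_extension_for_Buy_Box} is fine, and in fact that lemma's proof gives the stronger $t \le V < \theta_0(c)$ directly, which you could cite instead.
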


% An upper bound of inspection cost is $\bar{c} = \mathbb{E}[v] - V$: that any further increase in inspection cost renders the market outside the Buy Box irrelevant, and social welfare essentially reflects the market as if it contained only one seller. As inspection costs drops, buyers become more inclined to explore beyond the Buy Box, enhancing product match quality and offsetting the loss incurred by the inspection cost. As $c \to 0$, social welfare rises to $2\mathbb{E}[\max_i v_i]$.

\section{Conclusion}
This work builds on a classical model of monopolistic competition with search friction and studies market outcomes when a prominent position is allotted to a seller by a mechanism.
We find that visibility of prices, in presence of search frictions, may cause price fluctuation, which can be exacerbated by a mechanism that unobstructedly encourages competition by rewarding prominence.
We show that properly designed mechanisms for prominence can stabilize prices.  
We analyze the range of prices implementable at equilibrium, and propose the threshold mechanism which is detail free, easy to implement, and moderately encouraging competition.
With the sellers competing for prominence, the consumer surplus may sometimes increase with higher search friction, the disutility of search being offset by the price decrease due to heightened competition.

Amazon's Buy Box is a salient example of platform-mediated prominence mechanisms and largely motivates this work.  
We recognize that the Buy Box mechanism in practice is far more complicated, and reacts to much more information than prices set by the sellers.
This work takes an analytical approach towards a theoretical understanding: keeping most factors unchanged or symmetric, singling out one factor, that of prices, and studying its behavior in response to the prominence mechanism.

Even though, to derive our analytic results, we specify the buyer's search policy (Weitzman's index-based policy in this case), we believe that certain insights we obtain do not rely on this particular policy.
For instance, any policy that searches in the order of decreasing price and does not necessarily exhaust all sellers, is likely to incentivize sellers to undercut each other in order to be searched early.

Prominence in presentation, exemplified by the Amazon Buy Box, is a design simpler than many other forms of presentation, and arguably lends itself more readily for modeling and analysis.  
We see our work as taking a step towards a fuller understanding of its design principles.

\begin{acks}
    The first author was supported by the \grantsponsor{erc}{European Research Council}{https://erc.europa.eu/homepage} 
    under grant agreement number~\grantnum{erc}{337122}.
    The second author was supported by the \grantsponsor{frf}{Fundamental Research Funds for the Central Universities}{https://www.gov.cn/zhengce/zhengceku/2021-12/14/content_5660674.htm} of China.
    The third author was supported by the \grantsponsor{airforce}{Air Force Office of Scientific Research}{https://www.afrl.af.mil/AFOSR/} under Grant No.~\grantnum{airforce}{FA9550-20-1-0212} and the \grantsponsor{nsf}{NSF}{https://www.nsf.gov/} under Grant No.~\grantnum{nsf}{CCF-1813135}.
    Part of the work was done when the second author was a faculty member at University of British Columbia, and when the first author was visiting there.
    The authors thank Jason Hartline for helpful discussions.
\end{acks}

\newpage

\bibliographystyle{ACM-Reference-Format}
\bibliography{bib,new_papers}

\appendix
\section{Omitted Proofs from Section~\ref{sec:plain}}

\propnotie*

\begin{proof}
Consider any symmetric price profile $\mathbf p = (p, \ldots, p)$.

If $p > 0$, we show that there exists a $p' < p$ to which some seller may deviate and improve their revenue.
Let $\theta$ be the Weitzman index of a seller posting price~$p$.
\label{proof:defA_proposition_no_symmetric}
Let $A$ be the event that there are at least two sellers from whom the buyer has value at least $\theta + p$.
For $c > 0$, $\Pr[A] > 0$.
When $A$ happens, the index algorithm always leads the buyer to buy from such a seller.
No matter how the tie is broken, there exists a seller~$i$ such that, the probability that the buyer buys from~$i$ when $A$ happens is at most $\Prx A / m$.
On the other hand, let $B$ be the event that $v_i \geq \theta + p$. 
Then $\Prx{B \cap A} > \Prx A / m$.
If $i$ deviates to any $p' < p$, with a new index $\theta' > \theta$, then whenever $B$ happens, the buyer buys from~$i$.
In particular, when both $A$ and~$B$ happens, the buyer buys from~$i$.
When $v_i < \theta + p$, the buyer buys from seller~$i$ with a probability no less than before the deviation.
Therefore, for any $p' < p$, the demand from seller~$i$ jump increases by at least $\Delta := \Prx{B \cap A} - \frac{\Prx{A}} m > 0$.
Let $D$ be the demand from seller~$i$ before the deviation.
There must exist $p' \in (0, p)$ such that $p'(D + \Delta) > pD$.
Such a $p'$ is a profitable deviation for seller~$i$.

For $p = 0$ and $\theta_0 > V$, % all sellers' Weitzman indices are $\theta_0 > 0$ by our assumption of non-degeneracy. 
there is a small enough $p' > 0$ such that the index $\theta'$ corresponding to $p'$ is still strictly positive.
A seller~$i$ deviating to~$p'$ sells their product when $\kappa_i > \kappa_j$ for all $j \neq i$.
This is true at least when $v_i - p' \geq \theta'$ (in which case $\kappa_i = \theta'$) and $v_j < \theta' < \theta_0$ (in which case $\kappa_j = v_j$).
So seller $i$'s demand is at least $(1 - F(\theta' + p'))F^{m - 1}(\theta') = (1 - F(\theta_0))F^{m-1}(\theta')$.
If $\theta_0 > V$, $\theta'$ can be taken to be greater than~$V$, so that this demand is strictly positive, in which case the deviation to~$p'$ is profitable for seller~$i$.
\end{proof}

\thmTwoSellerNoEq*

\begin{proof}
    By Proposition~\ref{proposition_no_symmetric}, we only need to further show that asymmetric equilibria do not exist.
    For the sake of contradiction, suppose $(p_1, p_2)$, with $p_1 > p_2$, is an equilibrium.
    Then, $\theta_2 > \theta_1 \geq 0$.
    It is straightforward to see that the buyer buys from seller~1 with probability at most $\frac{1}{2}$.
    We show that $(p_1, p_2)$ being an equilibrium would contradict this fact.
    
    The buyer first inspects seller~$2$, after which, she only inspects seller~1 if $v_2 - p_2 < \theta_1 = \theta_0 - p_1$ (Proposition~\ref{prop:define_theta_0}). Lastly, after inspecting seller~$1$, she buys from seller~1 if $v_2 < v_1 - (p_1 - p_2)$.
    (Recall from Lemma~\ref{lemma:all_buy} that the buyer must buy from one of the sellers.)
    Therefore, the demands of the two sellers are determined by the difference of their prices, $\Delta_p := p_1 - p_2$:
    \begin{align*}
        D_1(\Delta_p) & := \int_{V}^{\theta_0 - \Delta_p} \left[1 - F(v + \Delta_p)\right]\ \dd F(v), \\
        D_2(\Delta_p) & := 1 - D_1(\Delta_p).
    \end{align*}
    
    Consider small unilateral deviations by the two sellers, respectively.
    Let $\Delta_p'$ be close but not equal to $\Delta_p$.
    % (specifically, $\Delta_p' \in N_0(\Delta_p)$, a deleted neighborhood of $\Delta_p$).
    Let $p_1' = p_2 + \Delta_p'$ and $p_2' = p_2$ (for seller~1's deviation), and $p_2' = p_1 - \Delta_p'$ and $p_1' = p_1$ (for seller~2's deviation). 
    When seller~$1$ deviates, $p_2$ remains $p_2 < 2$ by Lemma~\ref{lemma:all_buy}. 
    When seller~2 deviates to $p_2' = p_2 - (\Delta_p' - \Delta_p)$, as long as $\Delta_p'$ is close enough to $\Delta_p$, $p_2' < 2$ still holds. 
    Therefore, the buyer is still guaranteed to always make a purchase after either of these deviations. 
    Therefore, the demands for both sellers' can still be expressed in terms of the price difference $\Delta_p'$:
    \begin{align*}
        D_1(\Delta_p') & := \int_{V}^{\theta_0 - \Delta_p'} \left[1 - F(v + \Delta_p')\right]\ \dd F(v), \\
        D_2(\Delta_p') & := 1 - D_1(\Delta_p').
    \end{align*}
    
    % Assuming that $D_1(\cdot)$ is continuous in some small neighborhood of $\Delta_p$, and 
    Since $(p_1, p_2)$ is assumed to be an equilibrium, we have:
    \begin{align*}
        \text{Seller 1:}\quad & p_1' D_1(\Delta_p') = (p_2 + \Delta_p')D_1(\Delta_p') \leq p_1 D_1(\Delta_p);\\ % \quad \forall\ \Delta_p' \in N_0(\Delta_p); \\
        \text{Seller 2:}\quad & p_2' D_2(\Delta_p') = (p_1 - \Delta_p')D_2(\Delta_p') \leq p_2 D_2(\Delta_p). % \quad \forall\ \Delta_p' \in N_0(\Delta_p).
    \end{align*}
    Since $D_1(\Delta_p') + D_2(\Delta_p') \equiv 1$, we can combine these inequalities:
    \begin{align}
        \label{ineq:theorem1_key}
        1 - \frac{p_2 D_2(\Delta_p)}{p_2'} \leq D_1(\Delta_p') \leq \frac{p_1 D_1(\Delta_p)}{p_1'}.
        % \forall \Delta_p \in N_0(\Delta_p).
    \end{align}
 % Taking limits as $\Delta_p' \to \Delta_p$, 
 % we obtain:
    %\[ \lim_{\Delta_p' \to \Delta_p} \left(1 - \frac{p_2 D_2(\Delta_p)}{p_2'}\right) = 1 - \frac{p_2 D_2(\Delta_p)}{p_2} = D_1(\Delta_p), \]
    % and
    % \[ \lim_{\Delta_p' \to \Delta_p} \left(\frac{p_1 D_1(\Delta_p)}{p_1'}\right) = \frac{p_1 D_1(\Delta_p)}{p_1} = D_1(\Delta_p). \]
    As $\Delta_p'$ approaches~$\Delta_p$, the left hand side and the right hand side both approach $\D_1(\Delta_p)$ (Figure~\ref{fig:theorem_1_proof_intuition}).
    For these inequalities to hold in a neighborhood of $\Delta_p$, the derivatives of the left-hand side and right-hand side with respect to $\Delta_p'$ at $\Delta_p$ must be equal, implying:
    \[
        - \frac{p_2 D_2(\Delta_p)}{(p_2)^2} = - \frac{p_1 D_1(\Delta_p)}{(p_1)^2}.
    \]
    Solving this along with $D_1(\Delta_p) + D_2(\Delta_p) = 1$ (Lemma~\ref{lemma:all_buy}), we find that $D_1(\Delta_p) = \dfrac{p_1}{p_1 + p_2} > \dfrac{1}{2}$ (since $p_1 > p_2$ by assumption), which contradicts the earlier observation that $D_1(\Delta_p) \leq \dfrac{1}{2}$.
    
    % Alternatively, if $D_1(\cdot)$ is not continuous around $\Delta_p$, then there exists a small deviation $\Delta_p'$ such that $|D_1(\Delta_p') - D_1(\Delta_p)| > \delta$ for some $\delta > 0$. In this case, one of the sellers can profitably deviate by adjusting their price slightly, contradicting the assumption that $(p_1, p_2)$ is an equilibrium.
    
    Therefore, $(p_1, p_2)$ with $p_1 > p_2$ cannot be an equilibrium. Overall, under plain presentation, no pure equilibrium exists in the presence of inspection costs for two i.i.d.\ sellers.
\end{proof}

\lemmaAllBuy*
\begin{proof}
  Suppose $\mathbf p := \avector{p}{m}$ is an equilibrium. 
  Without loss of generality, suppose $p_1\ge p_2\ge \ldots \ge p_m$, and $0 \leq \theta_1 \le \ldots \le \theta_m$. 
  % Discuss over whether $\mathbf p$ contain zeros:
  \begin{itemize}
      \item If there is some $p_i = 0$, then $p_m = 0$. 
      and $\kappa_m \geq 0$ by our assumption on the inspection cost.
      % $= \min(\theta_m, v_m - p_m)\ge \min(\theta_m, v_m)\ge 0$. 
      In this case, the buyer would at least inspect seller $m$ and obtain  positive utility $v_m - p_m = v_m \ge V$ if purchasing from seller $m$. 
      % So she is guaranteed to purchase.
      \item If $\forall i, p_i > 0$, notice that any seller $i$ should have strictly positive revenue~$\Rev_i(\mathbf p) > 0$ in an equilibrium. 
      This is because, fixing all others' prices, seller~$i$ can always post a positive price smaller than all others' to be inspected first and to guarantee some revenue.
      % $\mathbf p_{-i}$, seller $i$ can always price slightly below others' lowest price~$p_i' = \min_{j\ne i} p_j - \epsilon > 0$ to obtain positive demand $D_i^{\textbf{null}}(p_i', \mathbf p_{-i})$ hence positive revenue. Therefore, for an equilibrium pricing profile $\mathbf p$, all seller's revenue would be strictly positive.
      
      Now,  any seller $i$ must have $p_i \leq p_m + 1$, otherwise a buyer always prefers seller $m$ to~$i$ (since $v_i \leq v_m + 1$ with probability~1, and seller $m$ is inspected before $i$), and $i$'s demand must be zero. 
      On the other hand, if any seller deviates to a price lower than $p_m - 1$, he would get all the demand. 
      In an equilibrium, since any seller cannot be better off with this deviation, we have $p_i D_i(\mathbf p) \ge p_m - 1, \forall i\in [m]$. Summing  these inequalities over~$i$, we have:
        \begin{align*}
            m & \ge m p_m - \sum_{i\in [m]} p_i D_i(\mathbf p)\\
            & \ge m p_m - \sum_{i\in [m]} p_1 D_i(\mathbf p) \\
            & \ge m p_m - p_1  
            \ge m p_m - 1 - p_m,
        \end{align*}
     which gives $p_m \le \frac{m + 1}{m - 1} < 2$. 
     Therefore, the buyer at least finds seller $m$'s product desirable (i.e. $v_m - p_m > 0, \forall v_m \in [V, V + 1]$ for $V \ge 2$), and must make a purchase from some seller.
    % anyway (although not directly buying from seller $m$). 
  \end{itemize}
  %   This concludes our proof.
\end{proof}

\thmeps*
    \begin{proof}
    Consider symmetric price profile $(p, p, \ldots, p)$, we will show that this cannot be any $\epsilon$-equilibrium for any $\epsilon \le \Delta$. $\Delta$ is given by
    $$
    \Delta = \min\left(\frac1m r(x^*), (m-1)r(x^*)\left(\Pr[B_0\cap A_0] - \frac{\Pr[A_0]}m\right)\right)
    $$
    where
    \begin{itemize}
        \item $r(x)$ as will be shown later, corresponds to seller $i$'s post-deviate demand:
        $$r(x) := x \int_{V}^{\theta_0 - x} (1 - F(v + x)) \, \dd F^{m-1}(v).$$
        And, $x^* = \arg\max_{x \ge 0} r(x)$.
        \item $A_0$ is the event that there are at least two sellers from whom the buyer has value at least $\theta_0$. $B_0$ is the event that $v_i \ge \theta_0$, for a given seller $i$.
    \end{itemize}
    We will now show that any price $p$ cannot support symmetric $\epsilon$-equilibrium $(p, \ldots, p)$ for $\epsilon < \Delta$. Discuss over the value of $p$:
    \begin{itemize}
        \item Case: $p\le (m-1)r(x^*)$:
        
        When seller $i$ deviates to $p_i > p$, she will be inspected last in plain presentation. Her demand is
            \begin{align*}
                & D_i(p_i; p_{-i} = p) \\
                = &  \int_{V}^{\theta_0 - (p_i - p)} (1 - F(v + p_i - p)) \, \dd F^{m-1}(v).
            \end{align*}
            Take $x = p_i - p, x > 0$, notice that 
            \begin{align*}
                r(x) = (p_i - p)D_i(p_i; p_{-i} = p)
            \end{align*}
            Observe that $x^* = \arg\max_{x \ge 0} r(x) > 0$ because first, $r(x) < 1$. And at least if we take, say $\hat x = V + \frac{\theta_0}2$, we have
            \begin{align*}
                r(\hat x) & = \hat x\int_{V}^{V + \frac{\theta_0}2} (1 - F(v + \hat x)) \, \dd F^{m-1}(v)\\
                & \ge \hat x \int_{V}^{V + \frac{\theta_0}2} (1 - F(\theta_0)) \, \dd F^{m-1}(v)\\
                & = \left(V + \frac{\theta_0} 2\right) \frac{\theta_0}2(1 - F(\theta_0)) F^{m-1}\left(V + \frac{\theta_0} 2\right)\\
                & > 0 = r(0).
            \end{align*}
        So, it's safe to conclude that $x^* > 0$ and $r(x^*) > 0$.
        
        At symmetric price $p$, seller $i$ can deviates to $p_i = p + x^*$ to obtain at least $\Delta$-more revenue: (notice that, at symmetric price $p$, every seller's revenue is less than $\frac 1m p$)
        \begin{align*}
            \Rev_i\left(p, \ldots, p\right) + \Delta & \le \frac1m p + \Delta\\
            & \le \frac{m-1}m r(x^*) + \frac1m r(x^*)\\
            & = r(x^*) \quad \text{(let $p_i = x^* + p$)}\\
            & = (p_i - p) D_i(p_i; p_{-i} = p)\\
            & \le p_i D_i(p_i; p_{-i} = p).
        \end{align*}
        In other words, this breaks the $\epsilon$-equilibrium for $\epsilon < \Delta$.
        \item Case: $p > (m-1) r(x^*)$ and $\theta \le 0$:
        for price profile $(p, \ldots, p)$, all seller's revenue would be zero because the buyer does not even care to inspect any seller. For seller $i$, by posting price $p_i = V$, by assumption $\theta_0 > V$, her index becomes positive:
            $$
            \mathbb E[(v_i - V) - 0]^+ \ge \mathbb E[v_i - \theta_0]^+ = c.
            $$
            Then, every buyer would inspect and buy from seller $i$, so that seller $i$'s revenue rises to $V$. So this deviate breaks the $\epsilon$-equilibrium.
            
        \item Case: $p > (m-1) r(x^*)$ and $\theta > 0$:
        
        Similar as in proof~\ref{proof:defA_proposition_no_symmetric} of Proposition~\ref{proposition_no_symmetric}, let $A$ be the event that there are at least two sellers from whom the buyer has value at least $\theta + p$. Let $B$ be the event that $v_i \ge \theta + p$.
        By Proposition~\ref{prop:define_theta_0}, $\theta = \theta_0 - p$. So $A_0 = A$ and $B_0 = B$.
        
        As demonstrated earlier in in proof~\ref{proof:defA_proposition_no_symmetric} of Proposition~\ref{proposition_no_symmetric}, the demand for seller $i$ when he deviates to some $p_i < p$, fix all other $p_{-i} = (p, \ldots, p)$, increases at least by $\Pr[B\cap A] - \frac{\Pr[A]}m$. Assume seller $i$ deviates to $p - \delta$:
        \begin{align}
            \label{ineq:by_assumption_Delta}
            & \Rev_i\left(p, \ldots, p\right) + \Delta \\
            \label{ineq:delta_to_0_holds}
            \le & \Rev_i\left(p, \ldots, p\right) + (p - \delta)\left(\Pr[B\cap A] - \frac{\Pr[A]}m\right)\\
            \label{ineq:from_prop}
            \le & \Rev_i\left(p_i = p - \delta, p_{-i} = (p, \ldots, p)\right)
        \end{align}
        Inequality~(\ref{ineq:by_assumption_Delta}) holds by the assumption we made for $\Delta$.
        Inequality~(\ref{ineq:delta_to_0_holds}) holds as $\delta \to 0^+$. (\ref{ineq:from_prop}) holds from proof~\ref{proof:defA_proposition_no_symmetric} of Proposition~\ref{proposition_no_symmetric}.
    \end{itemize}
    Therefore, in any cases for symmetric price profile $p$, it cannot be an $\epsilon$-equilibrium for $\epsilon < \Delta$.
    \end{proof}

\section{Omitted Proofs from Section~\ref{sec:buy-box}}
\proplpf*
    \begin{proof}

      \begin{enumerate}[(a)]
	\item
Consider any symmetric price profile $\mathbf p = (p, \ldots, p)$.
Compared with plain presentation, deviating to a lower price is even (weakly) more profitable, and so, 
for $p > 0$, % we show that there exists a $p' < p$ to which some seller may deviate and improve their revenue.
an argument identical to that of Proposition~\ref{proposition_no_symmetric} shows that there exists a profitable deviation $p' < p$.

For $p = 0$ and $\theta_0 > V$, we show there is a small enough $p' > 0$ such that the index $\theta'$ corresponding to $p'$ is still strictly positive.
A seller~$i$ deviating to~$p'$ sells their product when $\kappa_i > \kappa_j$ for all $j \neq i$.
The only difference from the proof of Proposition~\ref{proposition_no_symmetric} is that here seller~$j$ may be winning prominence, in which case their index is $V + 1$.
We see that essentially the same argument goes through: 
we have $\kappa_i > \kappa_j$ at least when $v_i - p' \geq \theta'$ (in which case $\kappa_i = \theta'$) and $v_j < \theta' < \max(\theta_0, V + 1)$ (in which case $\kappa_j = v_j$).
So seller $i$'s demand is at least $(1 - F(\theta' + p'))F^{n - 1}(\theta') = (1 - F(\theta_0))F^{n-1}(\theta')$.
If $\theta_0 > V$, $\theta'$ can be taken to be greater than~$V$, so that this demand is strictly positive, in which case the deviation to~$p'$ is profitable for seller~$i$.

% Let $\theta$ be the Weitzman index of a seller posting price~$p$.
% Let $A$ be the event that there are at least two sellers from whom the buyer has value at least $\theta + p$.
% For $c > 0$, $\Pr[A] > 0$.
% When $A$ happens, the index algorithm always leads the buyer to buy from such a seller.
% No matter how the tie is broken, there exists a seller~$i$ such that, the probability that the buyer from~$i$ when $A$ happens is at most $\Prx A / n$.
% On the other hand, let $B$ be the event that $v_i \geq \theta + p$. 
% Then $\Prx{B \cap A} > \Prx A / n$.
% If $i$ deviates to any $p' < p$, with a new index $\theta' > \theta$, then whenever $B$ happens, the buyer buys from~$i$.
% In particular, when both $A$ and~$B$ happens, the buyer buys from~$i$.
% When $v_i < \theta + p$, the buyer buys from seller~$i$ with a probability no less than before the deviation.
% Therefore, for any $p' < p$, the demand from seller~$i$ jump increases by at least $\Delta := \Prx{B \cap A} - \frac{\Prx{A}} n > 0$.
% Let $D$ be the demand from seller~$i$ before the deviation.
% There must exist $p' \in (0, p)$ such that $p'(D + \Delta) > pD$.
% Such a $p'$ is a profitable deviation for seller~$i$.

	\item
	This directly follows from Theorem \ref{theorem:no_BB_no_eq}. For i.i.d. sellers, the LPF mechanism prioritizes the lower-priced seller, who is also inspected first in plain presentation. The behavior under plain presentation is analogous to that under the LPF mechanism (Lemma \ref{lemma:all_buy}). Consequently, if no equilibrium exists in the plain presentation scenario, it follows that no equilibrium can be sustained under the LPF mechanism either.
      \end{enumerate}

    \end{proof}

\thmDictatorOpt*
    \begin{proof}
    %To model the buyer's decision process in a Buy Box environment effectively, 
    % We introduce two demand functions $\DBB(p, q)$ and $\DNBB(p, q)$: % which respectively denote 
    
    Consider the probability that the buyer purchases from a seller priced at $q$, when all other sellers are priced at $p$: 
    if the seller with price~$q$ is prominent, let this probability be denoted as $\DBB(p, q)$; otherwise it is denoted by $\DNBB(p, q)$.
    (Note that in the latter case, one of the sellers who post price~$p$ \emph{is} prominent.)
    It is evident that $\DBB(p, q) \geq \DNBB(p, q)$ for all $p$ and~$q$.

    Since $\mathcal{M}$ is anonymous and always allocates, if all sellers post price $p$, they win prominence each with probability $1/m$; 
    the same is true for the Dictator-$p$ mechanism.
    % the same probability at this price profile in the Dictator mechanism with dictated price $t = p$. 
    So at price profile $\mathbf p$, each seller’s revenue in $\mathcal{M}$ is the same as in the Dictator-$p$ mechanism. 
    If any seller unilaterally deviates to price $q$ in $\mathcal{M}$, with certain probability the seller is prominent, in which case her demand is $D_{\text{BB}}(q, p)$, and with the rest of the probability the seller is not in the BuyBox, while another seller with price $p$ is shown in the BuyBox, in which case the seller’s demand is $D_{\text{NBB}}(q, p)$. The same deviation to $q$ in the Dictator mechanism would result in demand $D_{\text{NBB}}(q, p)$ with probability 1. Since $D_{\text{NBB}}(q, p) \leq D_{\text{BB}}(q, p)$, if $q$ is not a profitable deviation in $\mathcal{M}$, it is not profitable in the Dictator mechanism either.
    
    % under two conditions: when the seller is displayed in the Buy Box (denoted with subscript ${}_\text{BB}$) and when they are not (${}_\text{NBB}$). 
    % These functions are critical for determining optimal pricing strategies and understanding market equilibrium.

    % README: we agreed on deleting the following part. i kept it in case we need the closed form of D_BB and D_NBB later, so that we don't need to write it again...
    % \hufu{I intend to delete all the following.  See if you agree.}
    % \note{ok. this paragraph or the whole following proof?}
    % Let $\theta_0$ be the index of a seller who posts price~0, i.e., $\theta_0$ is the solution to the equation $\mathbb E[v - \theta]^+ = c$.

    % For a seller priced at $q$ and featured in the Buy Box, the buyer buys from them with probability
    % % the demand function is given by: 
    % $$
    % \DBB(p, q) := 1 - F(\theta_0 - p + q) + \int_{v < \theta_0 - p + q} F^{m-1}(v - p + q) \: \dd F(v).
    % $$
    % % where $F$ represents the cumulative distribution function of the buyer's valuation.
    % \hufu{Give some explanation.}
    % For a seller priced at $q$ not featured in the Buy Box, % the demand function is, depending on whether $p < (>) q$:
    % $$
    % \DNBB(p, q) :=\begin{cases}
    %     \int_{V}^{\theta_0 - p + q} F^{n-1}(v) (1 - F(v - p + q)) \dd F(v) & p > q\\
    %     F(\theta_0 + p - q) \left[ 1 - F(\theta_0 - p + q) + \int_{v < \theta_0 - p + q} F^{m-2}(v - p + q) \dd F(v) \right ] & p\le q
    % \end{cases} 
    % $$
    % Observe that, for any price profile $(p, q)$, $\DNBB(p, q) \le \DBB(p, q)$, and the inequality is strict if $p \ge q - (\theta_0 - V)$.
    \end{proof}

\lemmaAllBuyBB*
    \begin{proof}
    Suppose \( \mathbf{p} = (p_1, p_2, \ldots, p_m) \) is an equilibrium price profile, ordered such that \( p_1 \geq p_2 \geq \dotsb \geq p_m \). We consider two cases based on the value of the lowest price \( p_m \):

    \begin{itemize}
        \item \textbf{Case 1: \( p_m \leq V \).}

        In the mechanism, the buyer is first presented with the prominent seller's value \( \hat{v} \) and price \( \hat{p} \) without incurring the inspection cost. The buyer's utility from the prominent seller is \( u_{\text{BB}} = \hat{v} - \hat{p} \).

        \begin{itemize}
            \item If \( u_{\text{BB}} \geq 0 \), the buyer will at least purchase from the prominent seller.

            \item If \( u_{\text{BB}} < 0 \), the buyer considers inspecting seller \( m \). The expected utility gain from inspecting seller \( m \) is:
            \begin{align*}
                & \quad \,\mathbb{E}\left[ v_m - p_m \right]^+ - c \\ 
                & = \mathbb{E}[v_m] - p_m - c\\
                & \geq \mathbb{E}[v] - V - c \geq 0.
            \end{align*}
            The above inequalities come from the assumptions \( p_m \leq V \) and \( c \leq \bar{c} = \mathbb{E}[v] - V \).

            In this case, when the lowest-price seller is not the prominent seller, if the buyer finds the prominent seller's offer to be of negative utility, the buyer finds herself better off to further inspect seller \( m \). After inspection, the buyer's utility from purchasing from seller \( m \) is at least \( v_m - p_m \geq 0 \), so the buyer will make a purchase from seller \( m \).
        \end{itemize}

        \item \textbf{Case 2: \( p_m > V \).} We will show that this case cannot occur in equilibrium.

        Consider a seller deviating to \( p_m - 1 \). If this seller is prominent, the buyer would simply purchase from this seller. But, even if the seller is not prominent, the buyer would \textbf{always} opt to inspect and purchase from this deviating seller. This is because as shown below, the expect utility gain from inspecting this deviating seller is positive, regardless of what the prominent seller offers:
        \begin{align*}
            &\quad  \mathbb{E}\left[  v - (p_m - 1)  - u_\text{BB}\right]^+ - c\\
            & \geq \mathbb{E}\left[  v - (p_m - 1)  - (V + 1 - p_m)\right]^+ - c \\
            & = \mathbb{E}\left[  v - V\right]^+ - c \geq 0.
        \end{align*}
        After inspecting this deviating seller, the buyer will prefer the deviating seller's product since the price is lower:
        \begin{align*}
            & \quad \,v - (p_m - 1) \\
            & \geq V - (p_m - 1) \\
            & = (V + 1) - p_m \\
            & \geq \max_i (v_i - p_i).
        \end{align*}
        Therefore, similar to Lemma~\ref{lemma:all_buy}, assuming \( p_m > V \), when any seller deviates and sets their price to \( p_m - 1 \), their demand increases to \( 1 \). For the equilibrium condition to hold, it is required that each seller does not find this deviation profitable; that is, for any seller \( i \in [m] \),
        \[
        \label{eq:lemma_extension_inequality_1}
            \Rev_i(\mathbf p) = p_i D_i(\mathbf{p}) \ge p_m - 1
        \]
        Notice the seller's equilibrium revenue is supposed to be strictly positive, given we assumed that $p_m - 1 \ge V - 1 \ge 0$. Summing these inequalities over \( i \), we have
        \[
        \sum_{i=1}^m p_i D_i(\mathbf{p}) \ge m (p_m - 1).
        \]
        On the other hand, any seller must have $p_i \le p_m + 1$. Because otherwise, $p_i > p_m + 1 > V + 1$ and this seller's demand and revenue would be zero. So
        $$
        \label{eq:lemma_extension_inequality_2}
            \sum_{i=1}^m p_i D_i(\mathbf{p}) \le \sum_{i=1}^m (p_m + 1) D_i(\mathbf{p}) \le p_m + 1,
        $$
        Combining the inequalities~\ref{eq:lemma_extension_inequality_1} and~\ref{eq:lemma_extension_inequality_2}, we get
        \[
        p_m \leq \frac{m - 1}{m + 1}\le 2.
        \]
        This contradicts the assumption that \( p_m > V \geq 2 \).
    \end{itemize}
    To conclude, the second case \( p_m > V \) cannot occur in equilibrium. Thus, in equilibrium, \( p_m \leq V \), and from the above analysis of Case 1, the buyer always makes a purchase at those prices.
\end{proof}

% \propLowPriceWhenCTooBig*
%     \begin{proof}
%         Consider the Dictator-$t$ mechanism. We will show that by setting $t \leq \theta_0(c) - V$ (note that $\theta_0(c) - V > 0$, because $\mathbb{E}[v - \theta_0(c)]^+ = c > \mathbb{E}[v - V]^+$), the price profile $(t, t, \ldots, t)$ will be an equilibrium.
        
%         For any seller who deviates to $p' \ne t$, $p' \geq 0$, he will lose his $\frac{1}{m}$ share of the Buy-Box. In this case, even when there are only two sellers, no buyer will inspect this deviated seller, because the Buy-Box seller's offer $u_\text{BB} = \hat{v} - t$ exceeds the inspection threshold $\theta = \theta_0(c) - p'$ for the deviated seller:
%         $$
%         \hat{v} - t > V - (V - \theta_0(c)) = \theta_0(c) > \theta_0(c) - p'.
%         $$
%         In other words, the seller receives zero revenue once he deviates from the Buy-Box threshold. Therefore, $(t, t, \ldots, t)$ is an equilibrium.
%     \end{proof}

\propHighPriceWhenCTooBig*
    \begin{proof}
    Consider the Dictator's Mechanism that sets the target price at the monopoly price:
    \[
    p^\star := \arg\max_p \, p \, (1 - F(p)).
    \]
    As long as the inspection cost is high enough ($c > \mathbb{E}[v] - \frac{p^\star}{m}$), it is optimal for every seller to maintain the monopoly price $p^\star$. This is because any seller deviating to a price $p' \neq p^\star$ would lose the prominent status and could only be inspected if all other sellers' values satisfy $(v - p^\star)_+ \leq \theta_0 - p'$. This requires that $p'$ is low enough: $p' \leq \theta_0$. Therefore, the deviating seller's revenue is at most $\theta_0$.
    
    Furthermore, as long as the monopoly profit satisfies:
    $$
    \frac{p^\star}{m} \geq \theta_0,
    $$
    and under our assumption of a high enough inspection cost $c$:
    $$
    \mathbb{E}\left[(v - \theta_0)^+\right] = c > \mathbb{E}\left[\left(v - \tfrac{p^\star}{m}\right)^+\right] \implies \frac{p^\star}{m} \geq \theta_0,
    $$
    the monopoly price vector $\vec{p}^\star = (p^\star, \dots, p^\star)$ will be an equilibrium.
    \end{proof}

\lemmaDcxExpression*
    \begin{proof}
    
    Since the equilibrium is symmetric, without loss of generality, assume that the deviating seller is seller~\(1\). We may occasionally omit the seller subscript when there is no ambiguity. Let \( x = p' - t \), where \( p' \) is the deviated price. We will analyze the demand case by cas with respect to $x$'s value.
    
    \begin{itemize}
        \item \textbf{Case \( x \leq -1 \) (i.e., \( p' < t + 1 \)).} This scenario is covered in the proof of Lemma~\ref{lemma:lemma_1_extension_for_Buy_Box}. In this case, the seller's demand is equal to \(1\).
    
        \item \textbf{Case \( -1 < x < 0 \) (i.e., \( t - 1 < p' < t \)).} Note that, in general, \( p' \) could be less than \(0\). However, this does not affect our analysis of the demand, although setting such a price would result in negative revenue for the seller.
    
        Under the Dictator Mechanism, the deviating seller is not prominent. Since all other sellers are identical, to analyze the deviating seller's demand, we assume that seller~\(2\) is prominent. There are two mutually exclusive scenarios in the buyer's search behavior that result in a purchase from seller~\(1\):
    
        \begin{itemize}
            \item \textbf{Scenario 1:} The buyer sees seller~\(2\) (prominent), inspects seller~\(1\) (not prominent), buys from seller~\(1\), and then leaves. This requires:
            \begin{enumerate}[(i)]
                \item The buyer chooses to inspect seller~\(1\) after observing seller~\(2\)'s value:
                \[
                v_2 - t < \theta_0(c) - p' \: \Leftrightarrow \ : v_2 < \theta_0(c) - x.
                \]
                \item After inspecting seller~\(1\), the buyer prefers seller~\(1\) over seller~\(2\):
                \[
                \label{condition_2.1}
                v_1 - p' \geq v_2 - t \: \Leftrightarrow \: v_1 \geq v_2 + x.
                \]
                \item After inspecting seller~\(1\), the buyer does not inspect any additional sellers:
                \begin{align*}
                    & \max(v_1 - p', v_2 - t) \geq \theta_0(c) - t  \\
                    \Leftrightarrow\: & \underbrace{\max(v_1 - x, v_2) = v_1 - x}_{\text{by condition~\ref{condition_2.1}}} \geq \theta_0(c).
                \end{align*}
            \end{enumerate}
            The demand contribution from this scenario is
            \[
            \int_V^{\max(\theta_0(c) - x, V)} \int_{\max(\theta_0(c) + x, v_2 + x)}^{V + 1} \dd F(v_1)\, \dd F(v_2).
            \]
    
            \item \textbf{Scenario 2:} The buyer first inspects seller~\(2\), then inspects all sellers, compares all products, and finally buys from seller~\(1\). This requires:
            \begin{enumerate}[(i)]
                \item The buyer chooses to inspect seller~\(1\) after observing seller~\(2\)'s value:
                \[
                v_2 - t < \theta_0(c) - p' \: \Leftrightarrow \: v_2 < \theta_0(c) - x.
                \]
                \item After inspecting all sellers, the buyer prefers seller~\(1\)'s product:
                \label{condition_2.2}
                \begin{align*}
                & v_1 - p' \geq \max_{i = 2, \ldots, m} \{ v_i - t \} \\
                \Leftrightarrow \: &  v_1 - x \geq \max_{i = 2, \ldots, m} \{ v_i \}.
                \end{align*}
                \item The buyer chooses to inspect all remaining sellers (\( i = 3, \ldots, m \)):
                \begin{align*}
                    \max\left(v_1 - p', \max_{i = 2, \ldots, m} \{ v_i - t \}\right) \geq \theta_0(c) - t \\
                    \Leftrightarrow \underbrace{\max\left(v_1 - x, \max_{i = 2, \ldots, m} \{ v_i \}\right) = v_1 - x}_{\text{by condition~\ref{condition_2.2}}} \geq \theta_0(c).
                \end{align*}
            \end{enumerate}
            The demand contribution from this scenario is
            % \[
            % \int_V^{\max(\theta_0(c) - x, V)} \int_{v_2 + x}^{\max(\theta_0(c) + x, v_2 + x)} F^{m-2}(v_1 - x) \, \dd F(v_1)\, \dd F(v_2).
            % \]
            \begin{align*}
                \int_V^{\max(\theta_0(c) - x, V)} & \int_{v_2 + x}^{\max(\theta_0(c) + x, v_2 + x)} \\
                & F^{m-2}(v_1 - x) \, \dd F(v_1)\, \dd F(v_2).
            \end{align*}
        \end{itemize}
    
        These two scenarios are mutually exclusive. Therefore, the total demand of the deviating seller when setting the price \( p' = t + x \) for \( x \in (-1, 0) \) is
        % \begin{align*}
        %     \D_c(x) &:= \int_V^{\max(\theta_0(c) - x, V)} \int_{\max(\theta_0(c) + x, v_2 + x)}^{V + 1} \dd F(v_1)\, \dd F(v_2)\\
        %     & \quad + \int_V^{\max(\theta_0(c) - x, V)} \int_{v_2 + x}^{\max(\theta_0(c) + x, v_2 + x)} F^{m-2}(v_1 - x) \, \dd F(v_1)\, \dd F(v_2).
        % \end{align*}
        \begin{align*}
            & \quad \D_c(x) :=  \\
            &\int_V^{\max(\theta_0(c) - x, V)} \int_{\max(\theta_0(c) + x, v_2 + x)}^{V + 1} \dd F(v_1)\, \dd F(v_2) \\
            &\quad + \int_V^{\max(\theta_0(c) - x, V)} \int_{v_2 + x}^{\max(\theta_0(c) + x, v_2 + x)} \\
            &\quad\quad F^{m-2}(v_1 - x) \, \dd F(v_1)\, \dd F(v_2).
        \end{align*}
        \item \textbf{Case \( 0 \leq x < \theta_0(c) - V \) (i.e., \( t \leq p' < \theta_0(c) \)).} In this case, particularly for \( p' = t \) and define \( \D_c(0) := \lim_{x \to 0^+} \D_c(x) \). For \( x \in (0, \theta_0(c) - V) \), since the deviating seller's price \( p' = t + x > t \), they are inspected last only if the values of all other sellers \( i = 2, \ldots, m \) satisfy \( v_i \leq \theta_0(c) - x \). Additionally, the deviating seller~\(1\) is purchased from only if
        \[
        v_1 - p' \geq \max_{i = 2, \ldots, m} \{ v_i - t \}.
        \]
        These conditions translate to
        \[
        \D_c(x) = \int_V^{\theta_0(c) - x} \left(1 - F(v + x)\right) \, \dd F^{m - 1}(v).
        \]
    
        \item \textbf{Case \( x \geq \theta_0(c) - V \) (i.e., \( p' > \theta_0(c) \)).} In this scenario, the deviating seller~\(1\) will never be inspected. Suppose that seller~\(2\) is prominent and yields a value \( v_2 \):
        \begin{align*}
            v_2 - t & \geq V - t \\
            & = V - (p' - x) = V - x - p' \geq \theta_0(c) - p'.
        \end{align*}
        Since buyers will never choose to inspect seller~\(1\), the seller's demand is zero.
    \end{itemize}
    
    This concludes the proof. The deviating seller's demand is thus expressed as a function of the deviation \( x = p' - t \), \emph{independent from $t$}, where \( p' \) is the deviated price and \( t \) is the symmetric equilibrium price of the remaining sellers.
    \end{proof}

\thmTc*
    \begin{proof}
    Theorem \ref{theorem:dictator_opt} states that it is sufficient to consider dictator's mechanism for any implementable equilibrium price. Consider Dictator mechanism setting target price at $t$, inducing symmetric equilibrium $(t, t, \ldots, t)$. 
        
    If any seller deviates to $p' = t + x$, Lemma~\ref{lemma:D_c_independent_of_t} gives the seller's post-deviate demand expressed in $x$: $\D_c(\cdot)$. The equilibrium condition can then be expressed as:
    \begin{align}
    \label{key_EQ_condition}
        \underbrace{(x + t)}_{p' = x + t} \cdot \underbrace{\D_c(x)}_{\text{seller $1$'s demand}}  \le \underbrace{\frac1m t}_{\text{Eq. revenue}}, & \quad \forall x \in \mathbb R
    \end{align}
    By transforming this condition algebraically, we obtain
    \begin{align}
    \label{key_EQ_condition_variant}
    \left(\frac 1m - \D_c(x)\right)t \ge x\D_c(x).
    \end{align}
    
    Discuss over potential value of $x$ (i.e. $p' = t + x$).
    \begin{itemize}
        \item \textbf{Case $x > 0$ (i.e. $p = x + t > t$}). It's evident that post-deviate demand $\D_c(x) < \frac 1m$. Hence the above condition translates to
        $$
        t \ge \frac{x\D_c(x)}{\frac 1m - \D_c(x)}, \forall x > 0.
        $$
        This gives a lower bound for implementable price $t$.
        \item \textbf{Case $x < 0$ (i.e. $p = x + t < t$}). Notice that the left-hand-side of the inequality~\ref{key_EQ_condition_variant}, $x\D_c(x) < 0$ --- so the condition automatically holds for $\frac1m - \D_c(x) \ge 0$. Now considers $x$ such that $\D_c(x) > \frac1m$, inequality~\ref{key_EQ_condition_variant} translates to
        $$
        t \le \frac{(-x)\D_c(x)}{\D_c(x) - \frac1m},\quad  \forall x < 0 \text{ and }\frac1m - \D_c(x) < 0
        $$.
    \end{itemize}
    And this sums up to be the condition for the theorem.
\end{proof}

 \exampleTc*
 \begin{proof}[Find $t^*(c), \bar t(c)$ for two uniform seller.] 

 For convenience, use $\theta$ to denote $\theta_0(c)$.
 First calculate $\D_c(x)$:
 \begin{align*}
     \D_c(x) = \begin{cases}
        1 & x < - 1\\
        \frac12 x^2  + \frac12 & x \in [-1, \theta - 3)\\
        -x + \frac12(\theta - 2)^2 & x\in [\theta - 3, 0)\\
        -\frac12 x^2 + \frac12 (\theta - 2)^2 & x \in [0, \theta - 2)\\
        0 & x\ge \theta - 2
     \end{cases}
 \end{align*}
Find $t^*(c)$ and $\bar t(c)$: let
$$
\chi^c(x) := \frac{x \D_c(x)}{\frac12 - \D_c(x)}
$$
then
$$
\chi^c{}'(x) = \frac1{(\frac12 - \D_c(x))^2}\left(\D_c(x)(\frac12 - \D_c(x)) + \frac12 x D_c'(x)\right).
$$
Let $\tilde \chi(x) := \D_c(x)(\frac12 - \D_c(x)) + \frac12 x D_c'(x)$,
$$
\tilde \chi^c{}'(x) = \D'_c(x)(1 - 2\D_c(x)) + \frac12 x \D_c''(x).
$$
To get $t^*(c) = \sup_{x > 0}\chi^c(x)$, notice that only needs to consider $x\in (0, \theta - 2)$ where $\D_c(x) > 0$. Within this region, $\D_c(x) < \frac12$, $\D_c(x) = -x < 0$ and $\D_c''(x) = -1$, therefore
$$
\tilde \chi^c{}'(c) \le 0.
$$
And
\begin{small}
    $$
    \tilde \chi^c(x) =  - \frac14 x^4 + (\frac12(\theta - 2)^2 - \frac12)x^2 + \frac12(\theta - 2)^2 \left(\frac12 - \frac12(\theta - 2)^2\right).
    $$
\end{small}
So we can solve for the unique interior supremum point $x^*$ that maximizes $\chi^c(x)$:
$$
(x^*_c)^2= -(\hat \theta + \frac12) + \sqrt{2\hat \theta + \frac14},
$$
where $\hat\theta = 1- (\theta - 2)^2$.
Plug in $x^*_c$ into $\chi^c(x)$ (because $t^*(c) = \sup_{x > 0}\chi^c(x)$) we obtain 
\begin{align*}
    t^*(c) & = \frac{x^*_c \D_c(x^*_c)}{\frac12 - \D_c(x^*_c)}\\
    & =  \frac{\sqrt{-(\hat \theta + \frac12) + \sqrt{2\hat \theta + \frac14}}\times \frac12\left(\frac32 - \sqrt{2\hat \theta + \frac14}\right)}{\frac12\sqrt{2\hat \theta + \frac14} - 1}\\
    & = \frac{\sqrt{-((2\sqrt{2c} - 2c) + \frac12) + \sqrt{2(2\sqrt{2c} - 2c) + \frac14}}}{\frac12\sqrt{2(2\sqrt{2c} - 2c) + \frac14} - 1}\\
    & \quad \times \frac12\left(\frac32 - \sqrt{2(2\sqrt{2c} - 2c) + \frac14}\right)
\end{align*}
Vice versa, we can directly calculate $\bar t(c)$.

For $x\in [\theta - 3, 0)$, let $\tilde \theta = (\theta - 2)^2$, so $\D_c(x) = -x + \frac12 \tilde \theta$, and
\begin{small}
\begin{align*}
\chi^c(x) & = \frac{x(-x + \frac12 \tilde \theta)}{\frac12 - ( -x + \frac12 \tilde \theta)}\\
& = (\frac12 \tilde \theta -x - \frac12) + \frac{\frac14(1 - \tilde \theta)}{(\frac12 \tilde \theta -x - \frac12)} + 1 - \frac12 \tilde \theta\\
& \ge 2.
\end{align*}
\end{small}
As for $x\in [-1, \theta - 3)$,
\begin{align*}
    \chi^c(x) & = \frac{x(-x^2 + \frac12)}{-\frac12 x^2}\\
    & = -(x + \frac1x) \ge 2.
\end{align*}
Therefore,
$$
\bar t(c) \equiv 2.
$$
\end{proof}

\propTStarProperties*
    \begin{proof}[Proof of Proposition~\ref{prop:t_star_properties}]
    We will first prove that $t^*(\cdot)$ is monotonically decreasing and continuous. Then prove its endpoint property.
    \paragraph{Monotonicity of $t^*(c)$.}
    Theorem~\ref{theorem:m_seller_general_T(c)} gives
    $$t^*(c) := \sup_{x > 0} \left\{\frac{x \D_c(x)}{\frac1m - \D_c(x)} \right\}.$$
    Fix $ x $, consider inspection costs $ c_1 \leq c_2 $ with $\theta_0(c_1) \geq \theta_0(c_2)$, we have, for $x \ge 0$:    
    \begin{align*}
        \D_{c_1}(x) - \D_{c_2}(x) 
        & = \int_V^{\theta_0(c_1) - x} (1 - F(v + x))\, \dd F^{m-1}(v) \\
        & \quad - \int_V^{\theta_0(c_2) - x} (1 - F(v + x))\, \dd F^{m-1}(v) \\
        & = \int_{\theta_0(c_2) - x}^{\theta_0(c_1) - x} (1 - F(v + x))\, \dd F^{m-1}(v) \geq 0.
    \end{align*}
    So, fix $x$, $\D_c(x)$ decreases as $c$ increases. Therefore, for any $c_1 \ge c_2$, $\D_{c_1}(x) \le \D_{c_2}(x)$. For any $x > 0$:
    \begin{align*}
        \frac{x\D_{c_1}(x)}{\frac1m - \D_{c_1}(x)} 
        & = (-x) + \frac{\frac1m}{\frac1m - \mathcal D_{c_1}(x)} \\
        & \le (-x) + \frac{\frac1m}{\frac1m - \mathcal D_{c_2}(x)} = \frac{x\D_{c_2}(x)}{\frac1m - \D_{c_2}(x)}.
    \end{align*}

    This implies $t^*(c_2) \ge t^*(c_1)$, hence proving $t^*(\cdot)$ monotonically decreases. 

    \paragraph{Continuity of $t^*(c)$.} 
    The expression $t^*(c) := \sup_{x > 0} \left\{\frac{x \D_c(x)}{\frac1m - \D_c(x)} \right\}$ takes supremum over $x > 0$. Because $\D_c(x) = 0$ for $x\ge \theta_0(c) - V$, it suffices to consider the open interval $x\in (0, \theta_0(c) - V)$. 
    
    Define $\phi(c, x) := \frac{x\D_c(x)}{\frac{1}{m} - \D_c(x)}$. The function $\phi$ is jointly continuous in $(c, x)$ for $x > 0$. Given that the supremum point $x^*$ that maximizes $\phi(c, x)$ within the interval $(0, \theta_0(c) - V)$ is obtained in an interior point, by pointwise convergence, $t^*(c) = \inf_{x \in (0, \theta_0(c) - V)} \phi(c, x)$ will also exhibit continuity. 
    
    \paragraph{The limiting convergence as $c \to \bar c^-$.}
    Discuss over the region of $x$ that we're taking supremum over for $t^*(c) := \sup_{x > 0} \left\{\frac{x \D_c(x)}{\frac1m - \D_c(x)} \right\}$:
    \begin{itemize}
        \item $x\in [\theta_0(c) - V, \infty)$, $\D_c(x) = 0$. So this part doesn't contribute to $t^*(c)$.
        \item $x \in (0, \theta_0(c) - V)$:
        Look at $\D_c(x)$ as $x\to 0^+$:
        \begin{align*}
            \lim_{x\to 0^+} \D_c(x) & = \lim_{x\to 0^+} \int_V^{\theta_0(c) - x} \left(1 - F(v + x)\right) \, \dd F^{m-1}(v)\\
            & = \int_V^{\theta_0(c)} \left(1 - F(v)\right) \, \dd F^{m-1}(v)\\
            & = \int_V^{\theta_0(c)} \dd F^{m-1}(v) - \frac{m-1}m \int_V^{\theta_0(c)} \dd F^m(v)\\
            & = F^{m-1}(\theta_0(c)) - \frac{m-1}m F^m(\theta_0(c)),
        \end{align*}
        we have, as $x\to 0^+$
        \begin{align*}
            & \quad  \lim_{x\to 0^+}\left(\frac1m - \D_c(x)\right) \\
            & = \frac1m - \left(F^{m-1}(\theta_0(c)) - \frac{m-1}m F^m(\theta_0(c))\right).
        \end{align*}
        Let $g(\alpha) = \frac1m - \left(\alpha^{m-1} - \frac{m-1}m \alpha^m \right)$, for $\alpha\in [0, 1]$. $g(1) = 0$, $g'(\alpha) = -(m - 1)\alpha^{m-2}(1 - \alpha) < 0, \forall \alpha\in (0, 1)$. So $g(\alpha) > 0, \forall \alpha \in (0, 1)$. As $c \to \bar c^+$, $\theta_0(c) \to V$ so for sure $F(\theta_0(c)) \in(0, 1)$ (actually $F(\theta_0(c))$ is bounded far away from $1$). Plug in $\alpha = F(\theta_0(c))$, we have
        $$
        \lim_{x\to 0^+}\left(\frac1m - \D_c(x)\right) = g(F(\theta_0(c))) > 0.
        $$
        Notice that as $c\to c^-$, $\theta_0(c) \to V^+$:
        \begin{align*}
            & \quad \lim_{c\to \bar c^-} \sup_{x\in (0, \theta_0(c) - V)} \left\{\frac{x \D_c(x)}{\frac1m - \D_c(x)} \right\} 
            \\
            & \le \lim_{c\to \bar c^-} \left(\theta_0(c) - V\right) \sup_{x\in (0, \theta_0(c) - V)} \left\{\frac{\D_c(x)}{\frac1m - \D_c(x)} \right\} \\
            & \le \lim_{c\to \bar c^-} \left(\theta_0(c) - V\right) \sup_{x\in (0, \theta_0(c) - V)} \left\{\frac{1}{\frac1m - \D_c(x)} \right\}
            \\
            & = 0.
        \end{align*}
    \end{itemize}
    Taking together the two regions of $x$ discussed above, $\lim_{c\to \bar c^-}  t^*(c) = 0$.
    \paragraph{The limiting convergence as $c \to 0^+$.}
    First, consider the equilibrium \textbf{without} inspection cost. In this case, the demand function for the deviated seller $\D_c(x)$ naturally extends to $c = 0$, when $\theta_0(0) = V + 1$:
        $$
        \D_0(x) = \int_V^{V + 1 - x} (1 - F(v + x))\, \dd F^{m-1}(v).
        $$
        Denote the equilibrium price as $(t_0, t_0, \ldots, t_0)$. The equilibrium condition for any deviate $x\in [\max(-1, -t_0), 1]$:
        \begin{align}
                & \quad\: (t_0 + x) \D_0(x) \le \frac 1m t\\
                \label{eq_condition:c=0}
                & \Leftrightarrow \D_0(x) \le \frac1m\frac{t_0}{t_0 + x}, \forall x\in [\max(-1, -t_0), 1)
        \end{align}
        Notice that, the inequality is strict at $x = 0$---when $\D_0(x) = \frac1m  = \frac1m\frac{t_0}{t_0 + x}\big |_{x = 0}$. And since $\D_0(x)$ is continuous, this situation implies that the slope of $\D_0$ and the right-hand side's equal-revenue curve coincide, yielding:
        $$
        t_0 = \frac{1}{m(m-1)\int_V^{V + 1}F^{m-2}(v)f^2(v)\, \dd v}.
        $$
        If (\ref{eq_condition:c=0}) holds for $t_0$, then $(t_0, \ldots, t_0)$ constitutes the unique symmetric equilibrium when the market is free of inspection costs. 
        
        As for the limiting equilibrium price when inspection cost approaches zero, $t^*(0)$, because $t^*(c) = \inf_x \phi(c, x)$ is continuous in $c$, all that remains is to verify $\sup_x \phi(0, x) = t_0$. This is confirmed at the limit as $x \to 0^+$:
        \begin{align*}
            \lim_{x \to 0^+}\phi(0, x) & = \lim_{x \to 0^+}\frac{x\D_0(x)}{\frac{1}{m} - \D_0(x)} \\
            & = \frac{\frac{\partial}{\partial x}x\D_0(x)}{\frac{\partial}{\partial x}[\frac{1}{m} - \D_0(x)]}\bigg|_{x = 0}\\
            & = \frac{1}{m(m-1)\int_V^{V + 1}F^{m-2}(v) f(v)^2\, \dd v}.
        \end{align*}
        By the monotonicity and continuity of $t^*(c)$, it would be contradictory for some $x > 0$ if $\phi(0, x) > t_0$, as this would imply, for some negligible cost of inspection $\epsilon \to 0$, that $\phi(\epsilon, x) > t_0$, leading to a contradiction. Therefore, presuming $t_0$ exists as a symmetric equilibrium absent inspection costs, $\lim_{c \to 0^+}t^*(c) = t_0$.
     \end{proof}

\propTBarProperties*
    \begin{proof}
        Theorem~\ref{theorem:m_seller_general_T(c)} states that
        $$
        \bar t(c) := \inf_{x < 0} \left\{\left(\frac{(-x) \D_c(x)}{\D_c(x) - \frac1m}\right)^+\right\}\setminus \{0\},
        $$
        so we focus on $x < 0$ and study the behavior of $\D_c(x)$ with respect to $c$. For convenience, denote
        \begin{small}
            \begin{align*}
                \tilde h_x(c)  & = \D_c(x) \\
                & = \int_V^{\theta_0(c) -x} \int_{\max(\theta_0(c) + x, v_2 + x)}^{V + 1} \, \dd F(v_1)\, \dd F(v_2) \\
                & \quad + \int_V^{\theta_0(c) -x} \int_{v_2 + x}^{\max(\theta_0(c) + x, v_2 + x)} F^{m-2}(v_1 - x)\, \dd F(v_1)\, \dd F(v_2).
            \end{align*}
        \end{small}
    
    Note that this is still inconvenient. Let
    \begin{small}
    \begin{align*}
        h_x(\theta) & := \tilde h_x(c) \circ \theta_0(c) \\
        & = \int_V^{\theta -x} \int_{\max(\theta + x, v_2 + x)}^{V + 1} \, \dd F(v_1)\, \dd F(v_2) \\
        & \quad + \int_V^{\theta -x} \int_{v_2 + x}^{\max(\theta + x, v_2 + x)} F^{m-2}(v_1 - x)\, \dd F(v_1)\, \dd F(v_2)\\
        & = \int_V^\theta \left(1 - F(\theta + x)\right)\, \dd F(v_2) \\
        & \quad + \int_V^\theta \int_{v_2 + x}^{\theta + x} F^{m-2}(v_1 - x)\, \dd F(v_1) \, \dd F(v_2)\\
        & \quad + \int_{\theta}^{\theta - x} \int_{v_2 + x}^{V + 1}\, \dd F(v_1)\, \dd F(v_2)\\
        & = \left(1 - F(\theta + x)\right) F(\theta) \\
        & \quad + \int_V^\theta \int_{v_2 + x}^{\theta + x} F^{m-2}(v_1 - x)\, \dd F(v_1) \, \dd F(v_2)\\
        & \quad + \int_\theta^{\theta - x} \left(1 - F(v_2 + x)\right)\, dF(v_2)
    \end{align*}
    \end{small}
    So
    \begin{align*}
        h_x'(\theta) & = -f(\theta + x) F(\theta) + \int_V^\theta F^{m-2}(\theta)f(\theta + x)\, \dd F(v_2)\\
        & \quad + \left(1 - F(\theta)\right) f(\theta - x) - \left(1 - F(\theta + x)\right) f(\theta)\\
        & = \left(F^{m-2}(\theta) - 1\right) f(\theta + x)F(\theta) + \left(1 - F(\theta)\right) f(\theta - x).
    \end{align*}
    Since $\theta_0(c)$ is the solution for $\mathbb E[v - \theta]^+ = c$, $\theta_0'(c) \le 0$, and this implies
    $$
    \tilde h_x'(c) = \theta_0'(c) h_x'(\theta).
    $$
    When $m = 2$
    $$
    h_x'(\theta) = \left(1 - F(\theta)\right) f(\theta - x) \ge 0.
    $$
    This implies that $\tilde h_x'(c)\le 0$, and that the demand for deviated seller with $p' > t$ would decrease as inspection cost $c$ increases.
    Thus, $ c_1 \leq c_2 \Rightarrow \D_{c_1}(x)\ge  \D_{c_2}(x), \forall x < 0$ when $m = 2$. So $ \forall x < 0$,
    \begin{align}
        \frac{(-x) \D_{c_1}(x)}{\D_{c_1}(x) - \frac1m} & = (-x) + \frac{(-x)\frac1m}{\D_{c_1}(x) - \frac1m} \\
        & \le (-x) + \frac{(-x)\frac1m}{\D_{c_2}(x) - \frac1m} = \frac{(-x) \D_{c_2}(x)}{\D_{c_2}(x) - \frac1m}.
    \end{align}
    And this implies $\bar t(c_1) \le \bar t(c_2), \forall 0 < c_1 \le c_2 \le \bar c$, thereby confirming the monotone property of $ \bar t(c) $.
    \end{proof}

% \corrNested*

% \note{proof pending... it's too straightforward that i don't quite know where to start.}

\thmThresholdTc*
    \begin{proof}
        We investigate the robustness of the Threshold-$t$ mechanism in sustaining an equilibrium.
        Assume that symmetric equilibrium $(t, t, \ldots, t)$ holds, where $t$ is the threshold below which a seller would win prominence uniformly at random. Consider seller $1$ who deviates from equilibrium price $t$ to $p_1' = t + x$. When $x > 0$, the buyer will find seller $1$'s offer least attractive and only inspected him at last if all other sellers are not satisfying enough (exactly similar to the case in Dictator's mechanism)
        \begin{align*}
            D_1^{\text{Threshold}}(p_1', t) = \D_c(p_1' - t).
        \end{align*}
        When seller $1$ deviates to a lower price, i.e. $x < 0$, he wins prominence with probability $1/m$. For the rest $1 - 1/m$, he is not prominent, conditional on which his demand corresponds to the case as if he is in Dictator mechanism. We can derive seller $1$'s demand as follows:
        \begin{align*}
            D_1^{\text{Threshold}}(p_1', t) 
            & = \frac1m D_{BB}(p_1', t) + (1 - \frac1m)\D_c(x)
        \end{align*}
        where, $D_{BB}(p_1', t)$ is seller $1$'s demand when he is prominent:
        \begin{small}
            \begin{align*}
                D_{BB}(p_1', t) & = P[\text{buyer buys seller $1$ without inspecting others}] \\
                & \quad + P[\text{buyer inspect all sellers and buys $1$}]\\
                & = P[v_1 - p_1' + t \ge \theta] \\
                & \quad + P[v_1 - p_1' + t \ge \theta \text{ and } v_1 - p_1' \ge \max_{i = 2, \ldots, m} v_i - t]\\
                & = P[v_1 - x \ge \theta] + P[v_1 - x \ge \theta \text{ and } v_1 - x \ge \max_{i = 2, \ldots, m} v_i]\\
                & = 1 - F(\theta + x) + \int_V^{\theta + x} (1 - F(v + x))\,\dd F^{m-1}(v).
            \end{align*}
        \end{small}
        Notice that, the demand of the deviated seller can again be delineated w.r.t. the price differential $x$. Denote $\tilde \D_c(x) := D_1^{\text{Threshold}}(p, t)$ as the demand of seller $1$. Specifically, for $x = 0$, let $\tilde \D_c(x) = \frac1m t$ so that it's left-continuous. And $\tilde \D_c(x)$ can be summarized as:
        % \begin{align*}
        %     \tilde \D_c(x) & = \begin{cases}
        %         \frac1m \D_c(x) + (1 - \frac1m)\left(1 - F(\theta_0(c) + x) + \int_V^{\theta_0(c) - x} (1 - F(v + x))\,\dd F^{m-1}(v)\right) & x < 0\\
        %         \D_c(x) = \int_{V}^{\theta - V} (1 - F(v + x))\, \dd F^{m-1}(v) & x > 0.
        %     \end{cases}
        % \end{align*}
    \begin{small}
        \begin{align*}
            \tilde \D_c(x) & = \begin{cases}
                \frac{1}{m} \D_c(x) + (1 - \frac{1}{m})\times \\
                \quad \left(1 - F(\theta_0(c) + x) + \int_V^{\theta_0(c) - x} (1 - F(v + x))\,\dd F^{m-1}(v)\right) \\
                \quad \quad \quad \quad \quad \quad \quad \quad \quad \quad \quad \quad  \text{for } x < 0 \\
                \D_c(x) = \int_{V}^{\theta - V} (1 - F(v + x))\, \dd F^{m-1}(v) \\
                \quad \quad \quad \quad \quad \quad 
                \quad \quad \quad \quad \quad \quad \text{for } x > 0.
            \end{cases}
        \end{align*}
    \end{small}

        Threshold-$t$ mechanism can sustain a symmetric equilibrium $(t,t, \ldots, t)$ if and only if $t \in \tilde T(c)$, for $\tilde T(c)$ being articulated as:
        \begin{align*}
            \tilde T(c) := \left\{t : (x + t)\tilde \D_c(x) \le \frac{1}{m} t, \forall x \in [\max(-1, -t), 1]\right\}.
        \end{align*}

        Following the methodological approach in the proof of Theorem \ref{theorem:m_seller_general_T(c)}, $\tilde T(c)$, for $t\in \tilde (c)$ should satisfies the following criteria simultaneously:
        \begin{align*}
        t & \le \inf_{x < 0}\frac{(-x)\tilde \D_c(x)}{\tilde \D_c(x) - \frac{1}{m}},\\
        t & \ge \sup_{x > 0}\frac{x\tilde \D_c(x)}{\frac{1}{m} - \tilde \D_c(x)}.
        \end{align*}
        The set $\tilde T(c)$ is nonempty, indicating the existence of a symmetric equilibrium $(t, \ldots, t)$, if and only if:
        $$
        \sup_{x > 0}\frac{x\tilde \D_c(x)}{\frac{1}{m} - \tilde \D_c(x)} \le \inf_{x < 0}\frac{(-x)\tilde \D_c(x)}{\tilde \D_c(x) - \frac{1}{m}}.
        $$
        Let $\hat t(c) := \inf_{x < 0}\frac{(-x)\tilde \D_c(x)}{\tilde \D_c(x) - \frac{1}{m}}, t^*(c) := \sup_{x > 0}\frac{x\tilde \D_c(x)}{\frac{1}{m} - \tilde \D_c(x)}$. For the Threshold-$t$ mechanism with inspection cost $c$, $t$ can sustain an equilibrium if and only if $t^*(c) \le \hat t(c)$. If so, the implementable region of $t$ for Threshold mechanism is given by the following interval
        \begin{align*}
            [t^*(c), \hat t(c)].
        \end{align*}
    \end{proof}

\corrThresholdSucks*
    \begin{proof}
        We examin a specific instance of two i.i.d. sellers, where their valuations are distributed according to the cumulative distribution function $F(v) = \frac{1}{e-1}(e^{v - 2} - 1)$, supported on $[2, 3]$. 
        By Theorem~\ref{theorem:dictator_opt}, it suffices to look at Dictator's mechanism for all symmetric equilibrium supported by the class of \emph{standard} mechanism.
        Therefore, for Dictator's mechanism and Threshold mechanism, the implementable price regions under symmetric equilibrium is given by Theorem~\ref{theorem:threshold_T(c)} and Theorem~\ref{theorem:m_seller_general_T(c)}:
        \begin{align*}
            \text{Dictator }T(c) & = [t^*(c), \bar t(c)]\\
            \text{Threshold }\tilde T(c) & = [t^*(c), \hat t(c)]\\
        \end{align*}
        We vary inspection cost $c$ to investigate the change of implementable regions, as shown in Figure~\ref{fig:dictator_threshold_separation}. It can be observed that, the upper bound $\hat t(c)$ of Threshold mechanism is lower than that of Dictator's mechanism. This implies, Threshold mechanism cannot sustain any symmetric equilibrium price above that line, yet Dictator's mechanism goes far beyond that bound and can sustain even higher prices in equilibrium.
        % \begin{figure}
        %     \centering
        %     \includegraphics[width=\linewidth]{figures/DvT_simulation.png}
        %     \caption{This figure shows the difference in sustaining symmetric equilibrium, between Dictator and Threshold mechanism. We vary inspection cost $c$ ($x$-axis) and color corresponding implementable regions $T(c), \tilde T(c)$ for the two mechanism.}
        %     \Description{Illustrative example of the difference between Dictator and Threshold mechanisms under various inspection costs. \newline For each value of inspection cost $c\in (0, \bar c)$, we calculate and plot the upperbound $\bar t(c)$[$\tilde t_1(c)$] for Dictator[Threshold] mechanisms, and their mutual lowerbound $t^*(c)$. We visualize the feasible $t$-region under different inspection cost values.}
        % \label{fig:dictator_threshold_separation}
        % \end{figure}
    \end{proof}

\section{Omitted Proofs from Section~\ref{sec:SW_CS_analysis}}

\SWDeterminedByCnMonotone*

\begin{proof}
Consider any symmetric pure equilibrium of some standard Buy Box mechanism, where the price is~$t$.
We show that the welfare does not depend on~$t$.
By Lemma~\ref{lemma:lemma_1_extension_for_Buy_Box}, we have $t \leq V$, i.e., the buyer always makes a purchase.

Without loss of generality, assume seller~$1$ is in the Buy Box.
By Proposition~\ref{prop:define_theta_0}, the index of each seller $i$ not in the Buy Box is $\theta_i(t) = \theta_0(c) - t$, and the index of seller~1 is $\theta_1(t) = V + 1 - t$.
Recall from Section~\ref{sec:prelim} that the buyer's surplus is $\Ex{\max_i \kappa_i(t)}^+$, where $\kappa_i(t) = \min(\theta_i(t), v_i - t)$.
Since $\theta_0 \geq V$ by assumption (as $c < \bar c$), and since $t \leq V$, we see that $\kappa_i(t) \geq 0$ for each $i$. 
For any other equilibrium price $t'$, we have $\theta_i(t') - \theta(t) = t - t'$ for each~$i$, and hence $\kappa_i(t') - \kappa_i(t') = t - t'$.
Therefore, $\Ex{\max_i \kappa_i(t')}^+ - \Ex{\max_i \kappa_i(t)}^+ = t - t'$.
That is to say, the consumer surplus changes linearly in the opposite direction with the equilibrium price.

On the other hand, the total revenue of the sellers is exactly the equilibrium price since the buyer always makes a purchase.  
Therefore the welfare does not depend on the equilibrium price.

The fact that the welfare decreases monotonically as $c$ increases is immediate from the argument above, by noticing that $\theta_0(c)$ decreases with~$c$.
\end{proof}

\Bigcomment{
\SWDeterminedByCnMonotone*
    % older verbatim proof of theorem 6
    \begin{proof}
    We first prove that social welfare is only determined by the inspection cost $c$. 
    %under the assumptions of the theorem (hence, independent Buy-Box mechanism and equilibrium price).  Then we will demonstrate social welfare monotonically decreasing w.r.t. $c$.
    % \paragraph{Proving (i)}
    
    First we show that, in any symmetric equilibrium under a prominence mechanism with $c\in (0, \bar c)$, the buyer's inspection behavior is the same regardless of the price. 

    Let $t$ be the price of a symmetric pure equilibrium.
    Lemma~\ref{lemma:lemma_1_extension_for_Buy_Box} guarantees $t \le V$, and hence the buyer always makes a purchase. 
    %Since all sellers are the same, to study social welfare, we might 
    Without loss of generality, assume seller~$1$ is prominent, following which, the buyer inspect the remaining $i = 2, \ldots, m$ sellers sequentially (if the inspection take place).
    
    In this case, at any step when the buyer has inspected the first $k < m$ sellers, she opt to inspect the next seller if $\max_{k < m} v_k < \theta_0(c)$. With a slight abuse of notation, we add $c$ as a superscript for the indicator variable of seller $i$ being inspected ($Z^c_i$),where $c$ refers to the inspection cost. Under symmetric equilibrium,
    $$
    Z_i^c = \begin{cases}
        \mathbb I\lbrace \max_{k < i} v_k < \theta_0(c)\rbrace & \text{for } i = 2, \ldots, m\\
        0 & \text{for } i = 1,
    \end{cases}
    $$
    ($\mathbb I\{\cdot\}$ is the indicator function).
    Vice versa $Y_i$ (the indicator variable of buyer buyes from seller $i$. This occurs under two cases: right after she inspected $k$, or after she inspected every seller. So the indicator variable $Y_i^c$ can be expressed as
    \begin{align*}
        Y_i^c & = 
        \mathbb I\{\max_{k < i} v_k < \theta_0(c) \text{ and } v_i \ge \theta_0(c)\}\\
        & \quad + \mathbb I \{ \max_k v_k < \theta_0(c) \text{ and } v_i \ge \max_k v_k\}.
    \end{align*}
    Notice that both $Z_i^c$ and $Y_i^c$ are independent with equilibrium price and the choice of specific prominence mechanism. Therefore, social welfare and consumer welfare is independent from the symmetric equilibrium price $t$ and the standard mechanism $\mathcal M$ that actually implements it:
    \begin{align*}
        \SW^\text{symmetric}(c) & = \mathbb E_{\mathbf v \sim F(\cdot)}\left[\sum_{i\in [m]} v_i Y_i^{c}\right] - c \mathbb E\left[\sum_{i\in [m]} Z_i^{c}\right];\\
        \CS^\text{symmetric}(c)
        & = \mathbb E_{\mathbf v \sim F(\cdot)}\left[\sum_{i\in [m]} (v_i - t) Y_i^{c}\right] - c \mathbb E\left[\sum_{i\in [m]} Z_i^{c}\right].\\
    \end{align*}
    Lemma~\ref{lemma:lemma_1_extension_for_Buy_Box} guarantees the buyer will always make a purchase at equilibria, $\sum_i Y_i \equiv 1$, so
    \begin{small}
        \begin{align*}
            \CS^\text{symmetric}(c) & = \mathbb E_{\mathbf v \sim F(\cdot)}\left[\sum_{i\in [m]} (v_i - t) Y_i^{c}\right] - c \mathbb E_{\mathbf v \sim F(\cdot)}\left[\sum_{i\in [m]} Z_i^{c}\right]\\
            & = \mathbb E_{\mathbf v \sim F(\cdot)}\left[\sum_{i\in [m]} v_iY_i^{c}\right] - t - c \mathbb E_{\mathbf v \sim F(\cdot)}\left[\sum_{i\in [m]} Z_i^{c}\right]\\
            & = \SW^\text{symmetric}(c) - t
        \end{align*}
    \end{small}
    \paragraph{Proving (ii)}
    Consider inspection costs $c_1 < c_2$ within the region $(0, \bar c)$, correspondingly $\theta_0(c_1) > \theta_0(c_2)$. Without loss of generality, assume their symmetric equilibrium prices are the same: $t_1 = t_2 = t$. Denote optimal search strategy induced product match value as
    $$\nu_{c_i} := \mathbb{E}\left[v_i\sum_{i \in [m]} Y_i^{c_i}\right], \text{ for }i = 1, 2.$$
    When the inspection cost is $c_1$, a buyer could behave as if the inspection cost were $c_2$ and obtain at least product match value $\nu_{c_2}$. In thie way, the buyer's behavior yields a \emph{suboptimal} auxiliary social welfare $\SW'(c_1)$, which is less than the actual social welfare a buyer would obtain if she plays the optimal search strategy at $c_1$:
    \begin{align*}
        \text{(suboptimal welfare) } & = \underbrace{{\SW'}(c_1) - t}_\text{consumer's utility} + t \\
        & = \nu_{c_2} - 
         \underbrace{c_1 \mathbb{E}\left[\sum_{i\in [m]}Z^{c_2}_i\right]}_{\text{paid inspection cost}} - t + t\\
        & \underbrace{\le \CS(c_1) - t}_\text{optimal search} + t\\
        & = \SW(c_1) \text{ (optimal welfare)}.
    \end{align*}
    Notice that, $\nu_{c_2}$ and $Z_i^{c_2}$ does not associate with the actual inspection cost, but is determined solely by the buyer's search strategy. Therefore, this implies 
    \begin{small}
        $${\SW'}(c_1) = \nu_{c_2} -c_1 \mathbb{E}\left[\sum_{i\in [m]} Z_i^{c_2}\right] \ge \nu_{c_2} -c_2 \mathbb{E}\left[\sum_i Z^{c_2}_{i\in [m]}\right] = \SW(c_2),$$
    \end{small}
    as we have assumed $c_1 < c_2$. Therefore, put together, $\SW(c_2) \le \SW(c_1)$, that social welfare decreases as cost of inspection increases.
    \end{proof}
    }
\thSWclosedform*
    
    \begin{proof}
            We first solve in closed form for social welfare, and demonstrate how it might be interpreted as the simplified representation stated in the theorem. 
            
            Before we start we simplify some notations. Because every $c \in [0, \bar c]$ corresponds to a unique index value $\theta_0\in [V, V + 1]$, we replace $c$ and $\theta_0(c)$ with shortened expression $\theta$ when there is no confusion. Also, in some expressions, replace inspection cost $c$, $\theta_0(c)$ with corresponding index $\theta$. For example, social welfare at a specific inspection cost level ($\SW(c)$) as $\SW(\theta)$. For two i.i.d. sellers, the match value $\nu_c$ (or, denoted as $\nu_\theta$, equals
            $$\nu_\theta= P[v_2 \ge \theta]E[v_2|v_2 \ge \theta] + P[v_2 < \theta]E[\max(v_1, v_2) | v_2 < \theta].$$
            Social welfare can be expressed as (WLOG assumes seller $2$ is inspected first),
            \begin{small}
                \begin{align*}
                    SW(\theta) & = -P[v_2 \le \theta] c + u_\theta\\
                    & = -P[v_2\le \theta]c \\
                    & \quad + P[v_2 \ge \theta]E[v_2|v_2 \ge \theta]\\
                    & \quad + P[v_2 < \theta]E[\max(v_1, v_2) | v_2 < \theta]\\
                    & = -cF(\theta) \\
                    & \quad + \int_\theta^{V+1} v_2\dd Fv_2 \\
                    & \quad + \int_V^{\theta}(\int_{V}^{\theta} \max(v_1, v_2) \ \dd Fv_1 + \int_\theta^{V+1}v_1 \dd Fv_1) \dd Fv_2\\
                    & = -cF(\theta) \\
                    & \quad + (\int_\theta^{V+1} v\ \dd Fv)(1 + F(\theta)) \\
                    & \quad + \int_V^\theta (\int_V^{v_2} v_2\ \dd F(v_1)  + \int_{v_2}^\theta v_1\ \dd F(v_1))\ \dd F(v_2)\\
                    & = -cF(\theta)\\
                    & \quad + (\int_\theta^{V+1} v\ \dd Fv)(1 + F(\theta)) \\
                    & \quad + \int_V^\theta v_2 F(v_2)\ \dd F(v_2) + \int_V^\theta \int_{v_2}^\theta v_1 \ \dd F(v_1)\dd F(v_2)
                \end{align*}
            \end{small}        
        Whereas the last component above
        $
            \int_V^\theta \int_{v_2}^\theta v_1 \ \dd F(v_1)\ \dd F(v_2) = \int_V^\theta \int_V^{v_1} v_1\ \dd F(v_2) \dd F(v_1) =\int_V^\theta v_1 F(v_1) \ \dd F(v_1)
        $. And also notice that $c = \mathbb E[v - \theta]^+ = \int_\theta^{V + 1} (v - \theta) \ \dd F(v) =\int_\theta^{V + 1} v \ \dd F(v) - \theta (1 - F(\theta))$. Putting together
        \begin{align*}
            \SW(\theta) & = -cF(\theta) \\
            & \quad + (\int_\theta^{V+1} v\ \dd Fv)(1 + F(\theta)) \\
            & \quad + \int_V^\theta v_2 F(v_2)\ \dd F(v_2) + \int_V^\theta v_1 F(v_1) \ \dd F(v_1)\\
            & = -cF(\theta) + (\int_\theta^{V+1} v\ \dd Fv)(1 + F(\theta)) + 2\int_V^\theta v F(v) \ \dd F(v)\\
            & = \int_\theta^{V + 1} v\  \dd F(v) + \theta F(\theta)(1 - F(\theta)) + 2\int_V^\theta v F(v)\ \dd F(v)
        \end{align*}
        And
        \begin{align*}
            \partial_\theta \SW(\theta) & = -\theta f(\theta) + F(\theta)(1 - F(\theta)) \\
            & \quad + \theta f(\theta) (1 - F(\theta)) - \theta F(\theta)f(\theta) + 2\theta F(\theta) f(\theta)\\
            & = F(\theta)(1 - F(\theta)) \ge 0
        \end{align*}
        When $\theta = V$, $\SW(\theta) = \mathbb E[v]$. So that social welfare can be written in the following simplified form
        $$
        \SW(\theta) = \mathbb E[v] + \int_V^\theta F(s)(1 - F(s)).\ ds
        $$
    \end{proof}
    
\propCSsufficientcondition*

\begin{proof}

\newcommand{\DD}{{\mathcal {D}}} % D(\theta, x)
\newcommand{\g}{{\gamma}} % g(\theta) = D(\theta, x^*(theta))
\newcommand{\DDD}{{\mathbb D}} % D(\theta) = \partial_\theta D(\theta, \rho(\theta))
\newcommand{\h}{{h}} % h(\theta) = \partial_x D(\theta, x) composite x^*(\theta)

    When the inspection cost takes its upper bound $\bar c = \mathbb E[v] - V$, at which buyers simply ignore all non-prominent sellers at any symmetric equilibrium, the equilibrium price can be arbitrarily close to 0 (Proposition \ref{prop:t_star_properties}). 
    In this case, social welfare coincide with the consumer surplus (Proposition~\ref{prop:t_star_properties}, Lemma~\ref{lemma:SW_closed_form}):
    $$
    \SW(\bar c) = \CS(\bar c) = \mathbb E[v].
    $$
    
    By Lemma~\ref{lemma:SW_closed_form}, when the inspection cost $c$ drops in the range $(0, \bar c)$, social welfare rises to
    $$
    \SW(c) = \int_V^{\theta_0(c)} F(v)(1 - F(v))\, \dd v + \SW(\bar c)
    $$
    But the lowest equilibrium price $t^*(c)$ rises as well (Proposition~\ref{prop:t_star_properties}). Starting at $\bar c$, we wish to decrease inspection cost to increase consumer surplus using the prominence mechanism. Lowering inspection cost increases social welfare, meanwhile we would wish the rise in (implementable) equilibrium price lowerbound does not offset the rise in social welfare. Put together the two forces, consumer surplus would be better off if, for inspection cost $c$ that is slightly around the left-neighbourhood of $\bar c$:
    \begin{align}
        \SW(c) - \SW(\bar c) & > t^*(c) - 0\\
        & \Leftrightarrow \\
        \label{expression:consumer_surplus_rising}
        \int_V^{\theta} F(v)(1 - F(v))\, \dd v 
        & > \sup_{x\in (0, \theta - V)} \frac{x\D_c(x)}{\frac12 - \D_c(x)}
    \end{align}
    With a slight abuse of notation, define $\DD:[V, V +1]\times \mathbb R \to \mathbb R$ as:
    $$
    \DD(\theta, x) := \int_V^{\theta_0(c) - x} (1 - F(v + x))\, \dd F(v) \equiv \D_c(x) .
    $$
    Then, the condition for consumer surplus being better off (\ref{expression:consumer_surplus_rising}) can be expressed in the following equivalent form:
    \begin{align}
        & \exists \theta \in (V, V + 1), \exists x \in (0, \theta - V) \text{ such that:}\\
        \label{expression:consumer_surplus_rising_rearranged}
        & x\DD({\theta}, x)+ \varphi(\theta)\DD({\theta}, x) - \frac12 \varphi(\theta) < 0,
    \end{align}
    where, $\varphi(\theta) := \int_V^\theta F(v)(1 - F(v))\, \dd v$.
    To find $\theta, x$ such that (\ref{expression:consumer_surplus_rising_rearranged}) holds, define $\mu :\mathbb R^2 \to \mathbb R$ as
    \label{expression:mu_theta_x}
    \begin{align}
        \mu(\theta, x) := \begin{cases}
            x\DD({\theta}, x) + \varphi(\theta)\DD({\theta}, x) - \frac12 \varphi(\theta), \\
            \quad \quad \quad  \text{for }\theta \in (V, V + 1), x \in (0, \theta - V);\\
            0, \\
            \quad \quad \quad  \text{otherwise}.
        \end{cases}
    \end{align}
    Notice that $\lim_{x \to 0}\mu(\theta, x) = \lim_{x\to \theta - V}\mu(\theta, x) = 0$. Let
    \begin{align*}
        x^*(\theta) & := \begin{cases}
            \arg\min_x \mu(\theta, x) & \theta \in (V, V + 1)\\
            0 & \text{otherwise}.
        \end{cases}
    \end{align*}
    $x^*(\theta)$ is thus a well defined continuous function with $\lim_{\theta \to V} x^*(\theta) = 0$. Let $g(\theta) := \mu(\theta, x^*(\theta))$. Then, $g(V) = 0$. Study the derivative of $g(\cdot)$ when $\theta \to 0^+$.
    By envelope's theorem
    \begin{align*}
        g'(\theta) & = \pder[\mu]{\theta}{(\theta, x^*(\theta))}\\
        & = (x^*(\theta) + \varphi(\theta)) \pder[\DD]{\theta}{({\theta}, {x^*(\theta)})} + \varphi'(\theta)(\DD({\theta}, x^*(\theta)) - \frac12),
    \end{align*}
    and
    \begin{align*}
        \pder[\mathcal D]{\theta}{(\theta, x)}
        & = \pder{\theta}{\int_V^{\theta - x} (1 - F(v + x))\, \dd F(v)},\\
        & = (1 - F(\theta))f(\theta - x)\\
        \varphi'(\theta) & = F(\theta)(1 - F(\theta)),
    \end{align*}
    which jointly implies
    \begin{align*}
        g'(\theta) & = (1 - F(\theta))f(\theta - x^*(\theta))(x^*(\theta) + \varphi(\theta)) \\
        & \quad + F(\theta)(1 - F(\theta))(\DD({\theta}, {x^*(\theta)}) - \frac12)\\
        & = (1 - F(\theta))\left(f(\theta - x^*(\theta))(x^*(\theta) + \varphi(\theta))\right)\\
        & \quad + (1 - F(\theta))\left(F(\theta)(\DD({\theta}, {x^*(\theta)}) - \frac12)\right).
    \end{align*}
    Since $\varphi(V) = 0, x^*(V) = 0$:
    \begin{align*}
        \lim_{\theta \to V^+} g'(\theta) = 0.
    \end{align*}
    Unfortunately, this implies that we need to study higher order derivatives of $g$ and how they behave around $\theta \to V^+$. For convenience, in the subsequent analysis, for any general function $f(\cdot)$, denote $\lim_{x \to V^+} f(x)$ as $f(V^+)$. We first solve $x^*{}'(V^+)$ and $x^*{}''(V^+)$:

    Taking a step back: for general $\rho:[V, V + 1] \to \mathbb R$, define $\DDD_\rho(\theta) := \DD(\theta, \rho(\theta))$, so that:
    \begin{align*}
        \DDD_\rho(\theta) & = \int_V^{\theta - \rho(\theta)}(1 - F(v + \rho(\theta)))\, \dd F(v)
    \end{align*}
    and its first-order derivative:
    \begin{align*}
        \DDD'_\rho(\theta) & = (1 - \rho'(\theta))(1 - F(\theta))f(\theta - \rho(\theta)) \\
        & \quad - \rho'(\theta)\left(\int_V^{\theta - \rho(\theta)} f(v)f(v + \rho(\theta))\, \dd v\right),
    \end{align*}
    and, second-order derivative:
    \begin{align*}
        \DDD''_\rho(\theta) & = (1 - F(\theta))f'(\theta - \rho(\theta))(1 - \rho'(\theta))^2 \\
        & \quad + (1 - F(\theta))f(\theta - \rho(\theta))( - \rho''(\theta))\\
        & \quad - \rho''(\theta)\int_V^{\theta - \rho(\theta)} f(v)f(v + \rho(\theta))\, \dd v\\
        & \quad - \rho'(\theta)\Biggr[(1 - \rho'(\theta))f(\theta - \rho(\theta))f(\theta) \\
        & \quad \quad \quad + \int_V^{\theta - \rho(\theta)} f(v)f'(v + \rho(\theta))\rho'(\theta)\, \dd v\Biggr].
    \end{align*}
    Plug in $\rho = x^*, \theta \to V^+$:
    \begin{align}
    \label{expression:DDD_V_plus_primes_1}
        \DDD'_{x^*}(V^+) & = (1 - x^*{}'(V^+))f(V^+)\\
    \label{expression:DDD_V_plus_primes_2}
        \DDD''_{x^*}(V^+) & = f'(V^+)(1 - x^*{}'(V^+))^2 - f(V^+)(x^*{}''(V^+)) \\
        & \quad - x^*{}'(V^+) (1 - x^*{}'(V^+))f^2(V^+) 
    \end{align}
    
     Recall, $x^*(\cdot)$ is defined as the $x^*$ that achives $\inf_x \mu(\theta, x)$. Since $\mu(\theta, x) = 0 $ for $x\notin (0, \theta - V)$, fix $\theta$, $\mu(\theta, x)$'s minimum is attained at some interior $x \in (0, \theta - V)$, thus for $x^*(\theta) = \arg\min_x \mu(\theta, x)$:
    \begin{align}
        \label{expression:partial_x_D_equals_0}
        \pder[\mu]{x}(\theta, x^*(\theta)) = \pder[\DD]{x}(\theta, x^*(\theta))(x^*(\theta) + \varphi(\theta)) + \DD(\theta, x^*(\theta)) \equiv 0
    \end{align}
    Define $\h(\theta) := \pder[\DD]{x}{(\theta, x^*(\theta))}$, there is
    \begin{align*}
        \h(\theta) & = -(1 - F(\theta))f(\theta - x^*(\theta)) \\
        & \quad - \int_V^{\theta - x^*(\theta)}f(v)f(v + x^*(\theta))\, \dd v\\
        \h'(\theta) & = f(\theta)f(\theta - x^*(\theta)) \\
        & \quad - (1 - F(\theta))f'(\theta - x^*(\theta))(1 - x^*{}'(\theta)) \\
        &\quad  - \Biggr [(1 - x^*{}'(\theta))f(\theta - x^*(\theta))f(\theta) \\
        & \quad \quad \quad + \int_V^{\theta - x^*(\theta)}f(v)f'(\theta - x^*(\theta)) x^*{}'(\theta)\,\dd v \Biggr].
    \end{align*}
    
    Notice that $\h(V^+) = -f(V^+)$, $\h'(V^+) = \frac 12 f^2(V^+) - \frac12 f'(V^+)$. On the basis of notation $\h(\cdot)$ the preceding condition (\ref{expression:partial_x_D_equals_0}) of 
    $\pder[\mu]{x}(\theta, x^*  (\theta))\equiv 0$ is equivalent to
    \begin{align}
        \label{expression:partial_x_D_equals_0_transformed}
        \h(\theta)(x^*(\theta) + \varphi(\theta)) + \DD(\theta, x^*(\theta)) \equiv 0, \forall \theta \in (V, V + 1).
    \end{align}
    Essentially, taking double derivative of (\ref{expression:partial_x_D_equals_0_transformed}) w.r.t. $\theta$ on both side yields
    \begin{align}
    \label{expression:once_der_of_partial_x_D_equals_0_transformed}
        \text{(once) } & \h'(\theta)(x^*{}(\theta) + \varphi(\theta)) \\
        & \quad + \h(\theta)(x^*{}'(\theta) + \varphi'(\theta))  + \DDD'_{x^*}(\theta)= 0\\
    \label{expression:twice_der_of_partial_x_D_equals_0_transformed}
        \text{(twice) } & 2\h'(\theta)(x^*{}'(\theta) + \varphi'(\theta)) \\
        & \quad + \h''(\theta)(x^*(\theta) + \varphi(\theta)) \\
        & \quad + \h(\theta)(x^*{}''(\theta) + \varphi''(\theta)) + \DDD''_{x^*}(\theta) = 0.
    \end{align}
    And $\varphi'(\theta) = F(\theta)(1 - F(\theta))$, $\varphi''(\theta) = f(\theta) - 2F(\theta)f(\theta)$. So, (\ref{expression:once_der_of_partial_x_D_equals_0_transformed}), adding (\ref{expression:DDD_V_plus_primes_1}) simplifies to 
    $$-f(V^+)x^*{}'(V^+) + (1 - x^*{}'(V^+))f(V^+) = 0\\
    \Rightarrow x^*{}'(V^+) = \frac12$$
    Notice that $x^*(V^+) + \varphi(V^+) = 0$, so that we need not further calculate $h''(V^+)$. Integrating $x^*{}'(V^+) = \frac12$ into \ref{expression:DDD_V_plus_primes_2} we obtain
    $$
    \DDD''_{x^*}(V^+) = \frac14 f'(V^+) - f(V^+)x^*{}''(V^+) - \frac14 f^2(V^+)
    $$
    plug in \ref{expression:once_der_of_partial_x_D_equals_0_transformed} solves
    \begin{align*}
        x^*{}''(V^+)
        % & = \frac{-\frac14 f'(V^+) - \frac34 f^2(V^+)}{2f(V^+)}\\
        & = - \frac{f'(V^+) + 3f^2(V^+)}{8f(V^+)}
    \end{align*}

    Back to the function $g$. $g(V^+) = 0 = g'(V^+) = 0$. To study its higher-order derivatives, take $g'(\theta) = (1 - F(\theta))\g(\theta)$, we'd have
    \begin{align*}
        \g(\theta) & := f(\theta - x^*(\theta))(x^*(\theta) + \varphi(\theta)) + F(\theta)(\DD({\theta}, {x^*(\theta)}) - \frac12)\\
        \g'(\theta) & = (1 - x^*{}'(\theta))f'(\theta - x^*(\theta))(x^*(\theta) + \varphi(\theta)) \\
        & \quad + f(\theta - x^*(\theta))(x^*{}'(\theta) + \varphi'(\theta))\\
        & \quad + f(\theta)(\DDD_{x^*}(\theta) - \frac12) + F(\theta)\DDD'_{x^*}(\theta)\\
        \g''(\theta) & = (x^*(\theta) + \varphi(\theta))f''(\theta - x^*(\theta)) \\
        & \quad + 2f'(\theta - x^*(\theta))(1 - x^*{}'(\theta))(x^*{}'(\theta) + \varphi'(\theta))\\
        & \quad + f(\theta - x^*(\theta))(x^*{}''(\theta) + \varphi''(\theta)) + f'(\theta)(\DDD_{x^*}(\theta) - \frac12)\\
        & \quad + 2f(\theta)\DDD'_{x^*}(\theta) + F(\theta)\DDD''_{x^*}(\theta)
    \end{align*}
    Interestingly, $\g'(V^+) = 0$. But
    \begin{align*}
        \g''(V^+) & = \frac12 f'(V^+) + f(V^+)(x^*{}''(V^+) + f(V^+)) - \frac12 f'(V^+)\\
        & = f(V^+)\left(x^*{}''(V^+) + 2f(V^+)\right)\\
        & = \frac18\left(13 f^2(V^+) -  f'(V^+)\right)
    \end{align*}
    Therefore, when $f^2(V^+) < f'(V^+)$, $\exists \theta^\star \in N^+(V)$ such that $g( \theta^\star) < 0$, corresponding to $\mu(\theta^\star, x^*(\theta^\star) < 0 $ --- that starting from its upper bound $\bar c$, decrease in inspection cost (at least, with a small amount) will benefit consumer surplus.
\end{proof}

\end{document}